\newcommand{\cn}{\mathrm{Cn}}
\newcommand{\conp}[1]{\mathrm{Cn}(\{ {#1}\})}
\newcommand{\cons}[1]{\mathrm{Cn}(#1)}
\newcommand{\mb}[1]{\mathbb{#1}}
\newtheorem{DEF} {Definition}
\newtheorem{OBS} {Observation}
\newtheorem{THE} {Theorem}
\newtheorem{LEM} {Lemma}
\newcommand{\cro}{\ast_c}
\newcommand{\cco}{\div_c}
\newcommand{\pco}{\div_p}
\newcommand{\meo}{\dotplus}
\newcommand{\Ap}{A^{\prime}}
\newcommand{\Bp}{B^{\prime}}
\newcommand{\Kone}{K_1}
\newcommand{\Ktwo}{K_2}
\newcommand{\MU}{\textrm{$*$}^{\!\!\!\!\textrm{\small{$\sim$}}} }
\newcommand{\gammap}{\gamma^{\prime}}
\newcommand{\phinegphi}{\{\varphi, \neg \varphi\}}
\newcommand{\psinegpsi}{\{\psi, \neg \psi\}}
\newcommand{\taut}{\textrm{\large{$\top$}}}  
\newcommand{\falsum}{\textrm{\large{$\bot$}}}
\begin{document}

\title{\textbf{Choice revision on belief bases}}
\author{
\small{Li Zhang}
}
\date{\small{KTH Royal Institute of Technology}}

\maketitle

\begin{abstract}
\footnotesize
\noindent
In this contribution we explore choice revision, a sort of belief change in which the new information is represented by a set of sentences and the agent could accept some of the sentences while rejecting the others. We propose a generalized version of expansion operation called partial expansion for developing models of choice revision.  By using the partial expansion and two multiple contraction operations previously introduced in the literature, we construct two kinds of choice revision on belief bases. For each of them we propose a set of postulates and prove a partial or full representation theorem. Furthermore, we investigate the operations of making up one's mind derived from these two kinds of choice revision and also give the associated representation theorems. \\
\textbf{Keywords}: Choice revision, Partial expansion, Belief base, Multiple contraction, Making up one's mind 
\end{abstract}

\section{Introduction}

Choice revision\footnote{\,This term is firstly introduced by \cite{Fuhrmann_phd}. To refer to the same concept, \cite{rott_change_2001} uses the terms ``bunch revision'' and ``pick revision'', while \cite{falappa_prioritized_2012} use the term ``selective change''.} is a sort of  non-prioritized multiple belief revision. It has two particular characteristics which make it distinctive from the standard AGM revision \cite{alchourron_logic_1985}: \textit{multiple} which means that in this belief change the new information is represented by a set of sentences instead of a single sentence, and \textit{non-prioritized} which means that the new information has no priority to the original beliefs, in other words, the agent could accept part of the new information while rejecting the rest. As suggested by \cite{falappa_prioritized_2012}, a practical scenario of choice revision could be a multi-agent system in which a computational agent receives independent information from other agents and adds the most reliable part to her knowledge base. 

Choice revsion is differrent from the ``selective revision'' introduced in Fermé  Hansson (1999). Selection revision is a also kind of non-prioritized revision, but its inputs are sentences rather than sets of sentences. It is easy to see that generally choice revision by a  finite set $A$ cannot be reduced to selective revision by the conjunction $\& A$ of all elements in $A$: Selection revision $\ast_s$ is postulated to satisfy extensionality, i.e. if $\varphi$ is logically equivalent to $\psi$, then that $K \ast_s \varphi = K \ast_s \psi$ holds. However, though $\varphi \wedge \neg \varphi$ is equivalent to $\psi \wedge \neg \psi$ for every sentences $\varphi$ and $\psi$, it is possible that $K \cro \{\varphi, \neg \varphi\} \neq K \cro \{\psi, \neg \psi\}$, for example, in the case of that $\varphi$ and $\psi$ are stating totally irrelevant things.  

Choice revision is also different from the operation of \textit{merge}  \cite{Fuhrmann1996_book,falappa_prioritized_2012}, which is another mode of non-prioritized multiple revision. In merge, the original beliefs and the new information are symmetrically treated. So, it holds for any merge operator $\#$ that $K \# A = A \# K$ for every sets $K$ and $A$. It follows that $\#$ does not in general satisfy the following success condition: $A \cap (K \# A) \neq \emptyset$ when $A$ is non-empty. Otherwise, it would lead to an unacceptable consequence that $\{\varphi\} \# \{\neg \varphi\}  = \{\varphi , \neg \varphi\} $ for every $\varphi$. However, as we will show, choice revision could satisfy the above success condition without resulting such unreasonable consequence.

Little work has been done on the formal properties of choice revision. An initial study on this issue could be found in \cite{ 2018arXiv180407602Z}, where a newly developed framework of belief change called \textit{descriptor revision} \cite{hansson_descriptor_2013} played an important role. As shown in \cite{ 2018arXiv180407602Z}, it is difficult to model choice revision in the traditional AGM framework since the select-and-intersect method employed in this framework is not generally applicable. But instead the descriptor revision which employs a select-direct methodology is workable. This is also the reason why \cite{zhang_how_2015} used this new framework to develop the modelling for the operation of making up one's mind, which is a belief change operation that takes the agent from a state in which she neither believes in a sentence nor in its negation to a state in which she believes in one of them. In the formal sense, choice revision is a generalization of the operation of making up one's mind, as the operator $\MU$ of making up one's mind can be reconstructed from choice revision operator $\cro$ in the way of $K \MU \varphi = K \cro \{\varphi, \neg \varphi\}$.  

Unfortunately, though descriptor revision is a so powerful tool (see \cite{hansson_descriptor_2017} for a survey), it seems that the framework is only suitable for modelling belief changes on \textit{belief sets}, i.e. sets of sentences that are logically closed. To see this, a quick review on the basics of descriptor revision is necessary.

In the framework of descriptor revision, a special kind of metalinguistic sentences called \textit{descriptors} are introduced for describing the success conditions of belief changes. Descriptor revision operator $\circ$ is a unified operator which applies to all descriptors. More specifically, an \textit{atomic} descriptor is a sentence $\mathfrak{B}\varphi$ such that $\varphi$ is sentence from the object language and $\mathfrak{B}$ is a metalinguistic operator. A \textit{composite} descriptor (descriptor for short) is a set of truth-functional combinations of atomic descriptors. The intended interpretation of $\mathfrak{B}\varphi$ is ``holding belief $\varphi$''. So, a belief set $X$, which represents the whole set of beliefs held by the agent of some belief state, \emph{satisfies} $\mathfrak{B}\varphi$ if and only if $\varphi \in X$. Conditions of satisfaction for truth-functional combination of atomic descriptors are defined inductively in the usual way, and a belief set satisfies a descriptor if and only if it satisfies all its elements. 

Furthermore, for the construction of $\circ$, it assumes that there is an \textit{outcome set} $\mathbb{X}$ of  potential outcomes of belief changes, and each belief change is performed by a direct choice among these potential outcomes.  So, for example, choice revision $\cro$ can be constructed from descriptor revision $\circ$ in the way of $K \cro \{\varphi_0, \varphi_1, \cdots, \varphi_n\} = K \circ \{\mathfrak{B} \varphi_0 \vee \mathfrak{B} \varphi_1 \vee \cdots \vee \mathfrak{B} \varphi_n \}$, and the outcome of $K \circ \{\mathfrak{B} \varphi_0 \vee \mathfrak{B} \varphi_1 \vee \cdots \vee \mathfrak{B} \varphi_n \}$ is the set selected directly from the elements of $\mb{X}$ satisfying the descriptor $\{\mathfrak{B} \varphi_0 \vee \mathfrak{B} \varphi_1 \vee \cdots \vee \mathfrak{B} \varphi_n \}$. 

Given the intended interpretation of $\mathfrak{B}$, only belief sets could be taken as elements of the outcome set $\mathbb{X}$. If elements in $\mathbb{X}$ are sets of sentences which are not necessarily logically closed, i.e. \textit{belief bases},\footnote{ According to \cite{levi_subjunctives_1977,levi_fixation_1991}, a belief base $K$ could be interpreted as the set of statements the agent actually believes, and the corresponding belief set, i.e. the logical closure of $K$, should be interpreted as the set of statements that the agent is committed to believing. Also, the distinction between belief bases and belief sets has been related to the difference between ‘foundationalist’ and ‘coherentist’ viewpoints in epistemology \cite{gardenfors_dynamics_1990,sosa_raft_1980}. This article is not concerned with the issue of comparison between belief base and belief set. For more discussion on this topic, see \cite{hansson_textbook_1999}, p. 17-24.} then descriptors will become not appropriate for describing the success conditions of belief changes unless the interpretation of the metalinguistic symbol $\mathfrak{B}$ is correspondingly modified. However, this modification seems not straightforward.

In this contribution, in order to model choice revision $\cro$ (and the derived making up one's mind operation $\MU$) on belief bases, we choose another strategy rather than adapting the approach of descriptor revision to the context of belief bases. The methodology employed here is inspired by the work in \cite{hansson1993reversing}, where a \textit{package multiple revision}\footnote{ In package multiple revision, the new information is represented by a set of sentences and the agent always incorporates the entire new information into her beliefs.} is built up from two more elementary units: a \textit{package multiple contraction} and an \textit{expansion} operation. In this article, for the construction of choice revision on belief bases, in addition to the package multiple contraction, we will also make use of the \textit{choice multiple contraction} proposed in \cite{fuhrmann1994survey}. Moreover, we will propose a generalized expansion operation. With these building blocks in hand, we can start to construct two sorts of choice revision on belief bases.

The rest of this contribution will be structured as follows. In Section \ref{section preliminary}, we will present some formal preliminaries. In Section \ref{section tools for construction choice revision}, we will present the basics of package and choice multiple revision needed for the work in later sections. Moreover, in this section, we will introduce the generalized expansion operation mentioned above. In Section \ref{section constructing choice revision}, we will show how to construct two kinds of choice revision by using the tools previously introduced, and how to axiomatically characterize these constructions. Following this, Section \ref{section constructing making up one's mind} will be devoted to the study of the operations of making up one's mind derived from the two choice revision operations. Section \ref{section conclusions} concludes.

\section{Preliminaries}\label{section preliminary}

Let object language $\mathcal{L}$ be defined inductively from a set of propositional variables $\{p_0 ,\,  p_1, \, \cdots, \, p_n, \, \cdots \}$ and the truth-functional operations $\neg, \wedge, \vee$, $\rightarrow$ and $\leftrightarrow$. $\taut$ is a tautology and $\falsum$ a contradiction. Sentences in $\mathcal{L}$ will be denoted by lower-case Greek letters and sets of such sentences by upper-case Roman letters. 

$\cn$ is a consequence operation for $\mathcal{L}$ satisfying supraclassicality (if $\varphi$ can be derived from $A$ by classical truth-functional logic, then $\varphi \in \cons{A}$), compactness (if $\varphi \in \cons{A}$, then there exists some finite $B \subseteq A$ such that $\varphi \in \cons{B}$) and the deduction property ($\varphi \in \cn(A \cup \{\psi\})$ if and only if (henceforth ``iff'' for short) $\psi \rightarrow \varphi \in \cn(A)$). As we have mentioned, belief base is an arbitrary set $A$ of sentences, and a belief base $A$ is a belief set iff $A = \cons{A}$. 

For sets of sentences $A$ and $B$, $A \vdash B$ holds iff  $B \cap \cons{A} \neq \emptyset$,  and $A \Vdash B$ holds iff $B \subseteq \cons{A}$. We will omit the bracket of the set if it is singleton. For example, we will write $\varphi \vdash \psi$ instead of $\{\varphi\} \vdash \{\psi\}$. 

$A \equiv B$ holds iff for every $\varphi \in A$, there exists $\psi \in B$ such that $\emptyset \vdash \varphi \leftrightarrow \psi$ and vice versa. $A$ is a quotient-finite set iff there is a finite set $B$ such that $A \equiv B$.

$K \upVdash A$ and $K \angle A$ respectively denote the \textit{package remainder set } and \textit{choice remainder set} of $K$ with respect to $A$, which are formally defined as follows:

\begin{DEF}[\cite{hansson1993reversing}]
$X \in K \upVdash A$ iff (i) $X \subseteq K$, (ii) $X \nvdash A$, and (iii) For all $Y \subseteq K$, if $X \subset Y$, then $Y \vdash A$.
\end{DEF}

\begin{DEF}[\cite{fuhrmann1994survey}]
$X \in K \angle A$ iff (i) $X \subseteq K$, (ii) $X \nVdash A$, and (iii) For all $Y \subseteq K$, if $X \subset Y$, then $Y \Vdash A$.
\end{DEF}

A selection function $\gamma$ is a binary function defined as follows:

\begin{DEF}[\cite{alchourron_logic_1985,hansson1993reversing,fuhrmann1994survey}]$\,$
\begin{enumerate}
\item A selection function $\gamma$ is any function 
$\gamma : \mathcal{P}(\mathcal{L}) \times \mathcal{P}(\mathcal{P}(\mathcal{L}))  \to \mathcal{P}(\mathcal{P}(\mathcal{L}))$
such that $\emptyset \neq \gamma(X, \mathbb{Y}) \subseteq \mathbb{Y}$ for all $\mathbb{Y} \neq \emptyset$, and $\gamma(X, \mathbb{Y}) = \{ X \}$ otherwise.
\item A selection function $\gamma$ is $\bigtriangleup$-unified if and only if for all subsets $K_1$, $K_2$, $A_1$ and $A_2$  of $\mathcal{L}$: If $K_1 \bigtriangleup A_1 = K_2 \bigtriangleup A_2  \neq \emptyset$, then $\bigcap \gamma(K_1, K_1 \bigtriangleup A_1 ) = \bigcap \gamma(K_2, K_2 \bigtriangleup A_2)$, where $\bigtriangleup \in \{\angle ,\upVdash\}$.
\end{enumerate}
\end{DEF}

\section{Tools for constructing choice revision}\label{section tools for construction choice revision}

In this section, we exhibit the main components of which the choice revision on belief bases constructed here consists. As we have mentioned, they include two kinds of multiple contraction operations: \textit{package multiple contraction} \cite{hansson1993reversing} and \textit{choice multiple contraction} \cite{fuhrmann1994survey},\footnote{ Henceforth call them ``package contraction'' and ``choice contraction'' for short.} and a generalized version of expansion operation. 

\subsection{Package contraction and choice contraction}

Like choice revision, both package contraction and choice contraction are multiple belief changes. It means that in the process of these two belief changes, the questionable information  is represented by a set of sentences. The difference is that in package contraction the agent should remove all the questionable information, but in choice contraction the agent could remove its highly unbelievable part and at same time keep the rest relatively plausible part. In what follows, we perform a quick review of the formal properties of these two multiple contraction operations.

Package contraction operator is formally defined in the following way:
\begin{DEF}[\cite{hansson1993reversing}]
An operator $\pco : \mathcal{P}(\mathcal{L}) \times \mathcal{P}(\mathcal{L}) \to \mathcal{P}(\mathcal{L}) $ is a package contraction iff there exists a selection function $\gamma$ such that for every sets $K$ and $A$,
\[K \pco A = \bigcap\gamma(K, K \upVdash A).  \]
Moreover, $\pco$ is unified iff $\gamma$ is $\upVdash$-unified. 
\end{DEF}

It has been shown that package contraction (and unified package contraction) can be axiomatically characterized with a set of plausible postulates.

\begin{THE}[\cite{hansson1993reversing}]\label{representation theorem for package multiple contraction}$\,$
\begin{enumerate}
\item $\pco$ is a package contraction iff  it satisfies the following conditions: for all sets $K$ and $A$ ,

\begin{enumerate}
\item[] $K \pco A \subseteq K$ ($\pco$-inclusion).
\item[] If $\emptyset \nvdash A$, then $K \pco A \nvdash A$ ($\pco$-success).
\item[] If it holds for all $K^{\prime} \subseteq K$ that $K^{\prime} \vdash A$ iff $K^{\prime} \vdash B$, then $K \pco A = K \pco B$ ($\pco$-uniformity).  
\item[] If $\varphi \in K \setminus K \pco A$, then there is some $K^{\prime}$ with $K \pco A \subseteq K^{\prime} \subseteq K$, such that $K^{\prime} \nvdash A$ and $K^{\prime} \cup \{\varphi\} \vdash A$ ($\pco$-relevance).
\end{enumerate}

\item $\pco$ is unified iff it satisfies in addition the following:

\begin{enumerate}
\item[] If $\emptyset \nvdash A$, and each element of $Z$ implies an element of $A$, then $K \pco A = (K \cup Z) \pco A$ ($\pco$-redundancy).
\end{enumerate}  

\end{enumerate}

\end{THE}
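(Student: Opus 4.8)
The plan is to argue the two biconditionals in the usual Hansson representation-theorem style: for part 1, establish \emph{soundness} (every operator of the displayed form satisfies the four postulates) and \emph{completeness} (every operator satisfying them has such a form), and then obtain part 2 as a refinement of the part-1 construction. A preliminary observation used throughout is that, by compactness, $K \upVdash A \neq \emptyset$ iff $\emptyset \nvdash A$, and that whenever $\emptyset \nvdash A$ every $X \subseteq K$ with $X \nvdash A$ extends to a maximal such $X$, i.e.\ to a member of $K \upVdash A$ (the union of a $\vdash A$-avoiding chain again avoids $A$, by compactness, so Zorn applies). For soundness, let $K \pco A = \bigcap \gamma(K, K \upVdash A)$. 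If $\emptyset \vdash A$ then $K \upVdash A = \emptyset$, so $K \pco A = K$ and all four postulates hold trivially or vacuously ($\pco$-relevance because $K \setminus K\pco A = \emptyset$). If $\emptyset \nvdash A$, then $\gamma(K, K\upVdash A)$ picks some $X \subseteq K$ with $X \nvdash A$; $\pco$-inclusion and $\pco$-success follow from $\bigcap \gamma \subseteq X$ and monotonicity of $\cn$, $\pco$-uniformity holds because clauses (i)--(iii) of $K \upVdash A$ depend only on the restriction of $\vdash$ to subsets of $K$, and $\pco$-relevance follows by taking the witness $K'$ to be any $X \in \gamma(K, K\upVdash A)$ with $\varphi \notin X$ and invoking maximality clause (iii).

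For completeness of part 1, define the canonical selection function by $\gamma(K, K \upVdash A) = \{ X \in K \upVdash A : K \pco A \subseteq X \}$ when $K \upVdash A \neq \emptyset$ and $\gamma(K, \emptyset) = \{K\}$ (extended arbitrarily to pairs not arising this way). Well-definedness — independence of the value from the choice of $A$ realizing a given $K \upVdash A$ — reduces, via the equivalence ``$K \upVdash A = K \upVdash B$ iff $K' \vdash A \Leftrightarrow K' \vdash B$ for all $K' \subseteq K$'' (nontrivial direction again by the Zorn extension), exactly to $\pco$-uniformity. Non-emptiness of the canonical $\gamma(K, K \upVdash A)$ uses $\pco$-inclusion and $\pco$-success to present $K \pco A$ as a subset of $K$ not implying $A$, then extends it to a maximal one. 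The inclusion $K \pco A \subseteq \bigcap \gamma$ is immediate from the defining condition; the reverse $\bigcap \gamma \subseteq K \pco A$ is the crux: if $\varphi \in K \setminus K \pco A$, then $\pco$-relevance supplies $K'$ with $K \pco A \subseteq K' \subseteq K$, $K' \nvdash A$, $K' \cup \{\varphi\} \vdash A$; extending $K'$ to a maximal $X \in K \upVdash A$ gives $X \in \gamma(K, K \upVdash A)$ with $\varphi \notin X$ (else $X \vdash A$), so $\varphi \notin \bigcap \gamma$; and $\varphi \notin K$ is handled by $\bigcap \gamma \subseteq K$. The case $K \upVdash A = \emptyset$ is separate: $\pco$-relevance forces $K \pco A = K = \bigcap \{K\}$.

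For part 2, ``unified $\Rightarrow$ $\pco$-redundancy'' is short: if $\emptyset \nvdash A$ and each element of $Z$ implies an element of $A$, then no subset of $K \cup Z$ meeting $Z$ can avoid $A$, hence $(K \cup Z) \upVdash A = K \upVdash A \neq \emptyset$, and $\upVdash$-unification gives $(K \cup Z) \pco A = K \pco A$. For ``$\pco$-redundancy $\Rightarrow$ unified'' I will show the canonical $\gamma$ from part 1 is $\upVdash$-unified. Suppose $K_1 \upVdash A_1 = K_2 \upVdash A_2 = \mathbb{Y} \neq \emptyset$. Then $\bigcup \mathbb{Y} \subseteq K_1 \cap K_2$, and each $\varphi \in K_i \setminus \bigcup \mathbb{Y}$ implies an element of $A_i$ (otherwise $\{\varphi\}$ would extend to a member of $\mathbb{Y}$ containing $\varphi$). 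Since $\emptyset \nvdash A_i$, $\pco$-redundancy applied with $Z = K_i \setminus \bigcup \mathbb{Y}$ yields $K_i \pco A_i = (\bigcup \mathbb{Y}) \pco A_i$. Finally, for every $K' \subseteq \bigcup \mathbb{Y}$ one has $K' \vdash A_1 \Leftrightarrow K' \vdash A_2$ (a non-implying $K'$ extends to some $X \in \mathbb{Y}$, which avoids both $A_1$ and $A_2$), so $\pco$-uniformity gives $(\bigcup \mathbb{Y}) \pco A_1 = (\bigcup \mathbb{Y}) \pco A_2$, whence $K_1 \pco A_1 = K_2 \pco A_2$.

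I expect the main obstacle to be the completeness direction of part 1: simultaneously checking that the canonical $\gamma$ is well-defined (the $K \upVdash A = K \upVdash B$ equivalence plus $\pco$-uniformity) and that it represents $\pco$, the delicate point being the $\bigcap \gamma \subseteq K \pco A$ inclusion, where $\pco$-relevance must be combined with a Zorn's-lemma extension to a maximal $A$-avoiding subset. The many appeals to compactness (to extend $A$-avoiding subsets of $K$ to maximal ones) pervade every step and should be isolated once at the outset so the later arguments stay clean.
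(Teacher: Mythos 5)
The paper does not actually prove Theorem~\ref{representation theorem for package multiple contraction}: it is imported from \cite{hansson1993reversing}, so there is no in-paper proof to compare against. Your argument is correct and follows the standard route of that source — soundness by direct verification, completeness via the canonical selection function $\gamma(K, K \upVdash A) = \{X \in K \upVdash A : K \pco A \subseteq X\}$ with the upper-bound property (Observation~\ref{observation on upper bound property}) carrying the relevance and uniformity steps, and the unified case reduced to $\pco$-redundancy plus $\pco$-uniformity applied to $\bigcup (K \upVdash A)$.
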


Analogously, choice contraction is defined in terms of choice remainder sets and selection functions as follows:

\begin{DEF}[{\cite[modified]{fuhrmann1994survey}}]
An operator $\cco : \mathcal{P}(\mathcal{L}) \times \mathcal{P}(\mathcal{L}) \to \mathcal{P}(\mathcal{L}) $ is a choice contraction iff there exists a selection function $\gamma$ such that for every sets $K$ and $A$,
\[K \cco A = \bigcap\gamma(K, K \angle A).  \]
$\cco$ is unified iff $\gamma$ is $\angle$-unified. 
\end{DEF}


There is a partial representation theorem for the operation of choice contraction on belief bases.

\begin{THE}[{\cite[modified]{fuhrmann1994survey}}]\label{representation theorem for choice multiple contraction on belief sets}$\,$
\begin{enumerate}
\item $\cco$ is a choice contraction on belief bases with finite sets as inputs\footnote{\, In what follows, we will denote contraction/revision operator with a restricted domain, say $\mb{K} \times \mb{X}$, where both $\mb{K}$ and $\mb{X}$ are subsets of $\mathcal{P}(\mathcal{L})$, by \textit{contraction/revision on such and such sets (i.e. $\mb{K}$) with such and such inputs (i.e. $\mb{A}$)}.} iff it satisfies the following conditions: for all set $K$ and finite set $A$,
\begin{enumerate}
\item[] $K \cco A \subseteq A$ ($\cco$-inclusion).
\item[] If $ \emptyset \nVdash A$, then $K \cco A \nVdash A$ ($\cco$-success).
\item[] If it holds for all $K^{\prime} \subseteq K$ that $K^{\prime} \Vdash A$ iff $K^{\prime} \Vdash B$, then $K \cco A = K \cco B$ ($\cro$-uniformity).
\item[] If $\varphi \in K \setminus K \cco A$, then there is some $K^{\prime}$ with $K \cco A \subseteq K^{\prime} \subseteq K$, such that $K^{\prime} \nVdash A$ and $K^{\prime} \cup \{\varphi\} \vDash A$ ($\cco$-relevance).
\end{enumerate}

\item $\cco$ is unified iff it satisfies in addition the following:

\begin{enumerate}
\item[] If $\emptyset \nVdash A$, and each element of $Z$ implies all elements of $A$, then $K \cco A = (K \cup Z) \cco A$ ($\cco$-redundancy).
\end{enumerate}  

\end{enumerate}

\end{THE}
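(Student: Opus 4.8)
First I would note that the proof follows the same pattern as that of Theorem~\ref{representation theorem for package multiple contraction}, with the package remainder $\upVdash$ replaced by the choice remainder $\angle$ and $\vdash$ replaced by $\Vdash$ throughout. The one genuinely new ingredient is an \emph{extension lemma}: if $A$ is finite, $X \subseteq K$ and $X \nVdash A$, then $X \subseteq X'$ for some $X' \in K \angle A$. I would prove this by Zorn's lemma, the point being that the family $\{Y : X \subseteq Y \subseteq K,\ Y \nVdash A\}$ is closed under unions of chains: if the union of a chain $\Vdash$-implied the \emph{finite} set $A$, compactness would place a finite — hence, since the chain is a chain, a single — member already $\Vdash$-implying all of $A$. (For $\upVdash$ only one element of $A$ needs to be captured, so no finiteness is required; this asymmetry is exactly why only a partial representation theorem is available here, and why the input is restricted to finite sets.)

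\emph{Soundness.} Let $\gamma$ be a selection function and $K \cco A = \bigcap\gamma(K, K\angle A)$. Then $\cco$-inclusion is immediate (also in the degenerate case $K \angle A = \emptyset$, where $\bigcap\gamma = K$), and $\cco$-uniformity holds because its hypothesis forces the downward-closed families $\{Y \subseteq K : Y \nVdash A\}$ and $\{Y \subseteq K : Y \nVdash B\}$, hence their sets of maximal elements $K \angle A$ and $K \angle B$, to coincide. For $\cco$-success, when $\emptyset \nVdash A$ the extension lemma gives $K \angle A \neq \emptyset$, so $\gamma$ returns some $X \supseteq K \cco A$ with $X \nVdash A$, whence $K \cco A \nVdash A$. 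For $\cco$-relevance, if $\varphi \in K \setminus K \cco A$ then $K \angle A \neq \emptyset$ and some $X \in \gamma(K, K\angle A)$ omits $\varphi$; I would take $K' = X$ and apply the maximality clause of $\angle$ to $X \cup \{\varphi\}$.

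\emph{Completeness (first four postulates).} Given $\cco$ satisfying the postulates, I would set $\gamma(K, K\angle A) = \{X \in K\angle A : K \cco A \subseteq X\}$ when $K \angle A \neq \emptyset$ and $\gamma(K, \mathbb{Y}) = \{K\}$ otherwise; when $K \angle A = \emptyset$, i.e.\ $\emptyset \Vdash A$, $\cco$-relevance forces $K \setminus K \cco A = \emptyset$, so with $\cco$-inclusion $K \cco A = K = \bigcap\gamma$. Well-definedness uses $\cco$-uniformity (equal choice-remainder sets for finite $A,B$ mean equal downward-closed families), and that $\gamma$ is a genuine selection function uses $\cco$-success and $\cco$-inclusion together with the extension lemma to exhibit an $X \in K\angle A$ above $K \cco A$. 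To see $\bigcap\gamma(K, K\angle A) = K \cco A$: $\supseteq$ is by construction; for $\subseteq$, a hypothetical $\varphi \in \bigcap\gamma \setminus K\cco A$ lies in $K$, so $\cco$-relevance yields $K'$ with $K \cco A \subseteq K' \subseteq K$, $K' \nVdash A$, $K' \cup \{\varphi\} \Vdash A$; extending $K'$ to some $X \in K\angle A$ puts $X \in \gamma(K, K\angle A)$, hence $\varphi \in X$, hence $K' \cup \{\varphi\} \subseteq X$ and $X \Vdash A$ — contradicting $X \in K\angle A$.

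\emph{The unified case, and the main obstacle.} For soundness: if $\gamma$ is $\angle$-unified, $\emptyset \nVdash A$, and every element of $Z$ implies every element of $A$, then $K \angle A = (K\cup Z)\angle A$ — no member of either side meets $Z$, since a $\zeta \in Z$ already $\Vdash$-implies $A$, and the maximality clauses then transfer — so $\angle$-unifiedness gives $K \cco A = (K\cup Z)\cco A$. Completeness is the delicate step: given $K_1 \angle A_1 = K_2 \angle A_2 = \mathbb{Y} \neq \emptyset$, the plan is to introduce the auxiliary base $C = \bigcup\mathbb{Y}$ and prove $K_1 \cco A_1 = C \cco A_1 = C \cco A_2 = K_2 \cco A_2$, where the outer equalities come from $\cco$-redundancy with $Z_i = K_i \setminus C$ (each $\zeta \in Z_i$ satisfies $\{\zeta\} \Vdash A_i$, since otherwise the extension lemma would force $\zeta$ into a member of $K_i \angle A_i = \mathbb{Y}$, i.e.\ $\zeta \in C$; and $\emptyset \nVdash A_i$ because $\mathbb{Y} \neq \emptyset$), and the middle one from $\cco$-uniformity (one checks $C \angle A_1 = \mathbb{Y} = C \angle A_2$ via the extension lemma over subsets of $C$, so $K' \Vdash A_1$ iff $K' \Vdash A_2$ for every $K' \subseteq C$). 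I expect this reconstruction of $\angle$-unifiedness — spotting that $C = \bigcup\mathbb{Y}$ is the right base and then verifying that $K_i \setminus C$ consists entirely of sentences implying the whole input (so $\cco$-redundancy applies) while $C$ carries the same choice-remainder structure for $A_1$ and $A_2$ (so $\cco$-uniformity applies) — to be the main obstacle; the remainder is a careful but routine transcription of the package-contraction argument together with the finiteness-dependent extension lemma.
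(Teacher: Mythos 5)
Your proposal is correct, and it supplies a proof that the paper itself does not actually spell out: Theorem~\ref{representation theorem for choice multiple contraction on belief sets} is imported (with modifications) from Fuhrmann's work, and the only ingredient the paper provides is the key lemma, Observation~\ref{observation on partial upper bound property of choice remainder set}. That lemma is precisely your ``extension lemma,'' and it is the one place where you take a genuinely different route: you prove the upper bound property of $K \angle A$ for finite $A$ directly by Zorn's lemma, using compactness together with the finiteness of $A$ to show that a chain in $\{Y : X \subseteq Y \subseteq K,\ Y \nVdash A\}$ has its union in that family, whereas the paper reduces $K \angle A$ to the package remainder set $K \upVdash \{\& B\}$ for the conjunction $\& B$ of a finite $B \equiv A$ and then invokes the known upper bound property for package remainder sets (Observation~\ref{observation on upper bound property}). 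Both routes exploit finiteness in the same essential way, and your diagnosis of why finiteness is dispensable for $\upVdash$ but not for $\angle$ matches the paper's counterexample with $K = \{p_1, p_1 \wedge p_2, \dots\}$. The rest of your argument --- the completeness construction $\gamma(K, K\angle A) = \{X \in K \angle A : K \cco A \subseteq X\}$, well-definedness via $\cco$-uniformity, non-emptiness via $\cco$-success and $\cco$-inclusion plus the extension lemma, the relevance-based contradiction for $\bigcap\gamma \subseteq K \cco A$, and the $C = \bigcup \mb{Y}$ device for recovering $\angle$-unifiedness from $\cco$-redundancy and $\cco$-uniformity --- is exactly the pattern the paper deploys in its own proofs of Theorems~\ref{representation theorem for internal choice revision} and~\ref{representation theorem for internal making up one's mind}, so your verification of all the individual steps goes through. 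One cosmetic repair: in the ``otherwise'' clause of your $\gamma$ you may only return $\{K\}$ when the second argument is empty; for a non-empty $\mb{Y}$ that is not of the form $K \angle A$ you must return some non-empty subset of $\mb{Y}$ (say $\mb{Y}$ itself), since the paper's definition of a selection function requires $\emptyset \neq \gamma(K, \mb{Y}) \subseteq \mb{Y}$ for every non-empty $\mb{Y}$.
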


Theorem \ref{representation theorem for choice multiple contraction on belief sets} does not generally hold if we remove the restriction ``with finite sets as inputs''. For a counterexample, assume there is a set $A$ of infinitely atomic sentences $p_1, p_2, p_3, \cdots$ in the language $\mathcal{L}$ and let $K =  \{ p_1, p_1 \wedge p_2, p_1 \wedge p_2 \wedge p_3, \cdots \}$. Let $\cco$ be any  choice contraction on $K$.  We can see that $\cco$-success is violated: It holds that $\emptyset \nVdash A$, however, $K \angle A = \emptyset$ and hence $K = K \cco A   \Vdash A$. Nevertheless, \cite{fuhrmann1994survey} have shown that the set of postulates in Theorem \ref{representation theorem for choice multiple contraction on belief sets} axiomatically characterizes choice contractions on belief sets with \textit{arbitrary} inputs. To see why, we should first note that the following \textit{upper bound property} on package remainder set is indispensable for the proof of Theorem \ref{representation theorem for package multiple contraction}.

\begin{OBS}[{\cite[modified]{alchourron_hierarchies_1981}}]\label{observation on upper bound property}
Let $K$ be a belief base. For every $K^{\prime} \subseteq K$, if $K^{\prime} \nvdash A $, then there exists some set $X$ such that $K^{\prime} \subseteq X$ and $X \in K \upVdash A$.
\end{OBS}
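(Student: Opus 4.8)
The plan is to obtain $X$ as a maximal element, via Zorn's Lemma, of the family of ``$A$-non-implying'' subsets of $K$ that extend $K'$, with compactness of $\cn$ doing the work of keeping chains inside the family. Recall that here $Z \nvdash A$ unpacks as $A \cap \cons{Z} = \emptyset$, i.e. no element of $A$ is a consequence of $Z$; this is the only subtlety in reading the statement, and keeping that reading fixed is what makes the argument go through smoothly.

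Concretely, let $\mathcal{Z} = \{\, Z : K' \subseteq Z \subseteq K \text{ and } Z \nvdash A \,\}$, partially ordered by inclusion. First, $\mathcal{Z} \neq \emptyset$, since $K' \in \mathcal{Z}$ by hypothesis. Next, I would show every chain $\mathcal{C} \subseteq \mathcal{Z}$ has an upper bound in $\mathcal{Z}$, namely $\bigcup \mathcal{C}$ (taking $K'$ itself if $\mathcal{C} = \emptyset$). Plainly $K' \subseteq \bigcup \mathcal{C} \subseteq K$, so it remains to verify $\bigcup \mathcal{C} \nvdash A$. Suppose otherwise; then some $\alpha \in A$ satisfies $\alpha \in \cons{\bigcup \mathcal{C}}$, and by compactness $\alpha \in \cons{F}$ for some finite $F \subseteq \bigcup \mathcal{C}$. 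Since $\mathcal{C}$ is linearly ordered and $F$ is finite, $F \subseteq Z$ for some single $Z \in \mathcal{C}$, whence $\alpha \in \cons{F} \subseteq \cons{Z}$, contradicting $Z \nvdash A$. By Zorn's Lemma, $\mathcal{Z}$ then has a maximal element $X$. From $X \in \mathcal{Z}$ we get conditions (i) $X \subseteq K$ and (ii) $X \nvdash A$, together with $K' \subseteq X$; and for any $Y$ with $X \subset Y \subseteq K$, maximality of $X$ forces $Y \notin \mathcal{Z}$, i.e. $Y \vdash A$, which is condition (iii). Hence $X \in K \upVdash A$ and $K' \subseteq X$, as required.

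I do not anticipate a genuine obstacle: this is the package-remainder analogue of the classical upper-bound lemma for AGM remainder sets, and the two points that merit care are exactly (a) the direction of the relation $\vdash$ between sets, so that ``$Z \nvdash A$'' means ``no element of $A$ lies in $\cons{Z}$'', and (b) the use of compactness to pull an offending derivation of some $\alpha \in A$ down to a single member of the chain. If one wished to avoid Zorn's Lemma, compactness would also support a direct Lindenbaum-style enumeration when the underlying language is countable, but the Zorn argument is cleaner and imposes no cardinality restriction on $\mathcal{L}$.
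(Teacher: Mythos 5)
Your proof is correct: the Zorn's-Lemma argument with compactness used to push a witnessing derivation of some $\alpha \in A$ down to a single member of the chain is exactly the standard upper-bound argument, and you correctly observe that maximality in the family $\{Z : K^{\prime} \subseteq Z \subseteq K,\ Z \nvdash A\}$ yields clause (iii) of the definition of $K \upVdash A$ because every $Y$ with $X \subset Y \subseteq K$ automatically extends $K^{\prime}$. The paper itself does not reprove this observation but imports it from the cited literature, where the proof is essentially the one you give.
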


Similarly, the following result on choice remainder set is needed for the proof of Theorem \ref{representation theorem for choice multiple contraction on belief sets}. 
 
\begin{OBS}\label{observation on partial upper bound property of choice remainder set}
Let $K$ be a belief base and $A$ a quotient-finite set. For every $K^{\prime} \subseteq K$, if $K^{\prime} \nVdash A $, then there exists some set $X$ such that $K^{\prime} \subseteq X$ and $X \in K \angle A$.
 
\end{OBS}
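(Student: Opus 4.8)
The plan is to reduce this to the upper bound property for \emph{package} remainder sets (Observation~\ref{observation on upper bound property}), exploiting the quotient-finiteness of $A$ to collapse a choice remainder set into a package remainder set of a single sentence.

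First I would remove the infinitude of $A$. Since $A$ is quotient-finite, fix a finite $B$ with $A \equiv B$. From $A \equiv B$ one gets $\cons A = \cons B$, hence $X \Vdash A$ iff $X \Vdash B$ for every set $X$; therefore $K \angle A = K \angle B$ and $K' \nVdash A$ iff $K' \nVdash B$. So it suffices to prove the statement for a finite input, and I may assume $A$ itself is finite. (If $A = \emptyset$ then $K' \nVdash A$ never holds and the claim is vacuous, so take $A = \{\alpha_1, \dots, \alpha_n\}$ with $n \geq 1$.)

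The key step is the observation that for finite $A$ a choice remainder set is literally a package remainder set. Put $\&A = \alpha_1 \wedge \dots \wedge \alpha_n$. By supraclassicality, for any set $X$ we have $X \Vdash A$ iff $\{\alpha_1,\dots,\alpha_n\} \subseteq \cons X$ iff $\&A \in \cons X$ iff $X \vdash \{\&A\}$, and correspondingly $X \nVdash A$ iff $X \nvdash \{\&A\}$. Feeding this into the three defining clauses shows $K \angle A = K \upVdash \{\&A\}$, while the hypothesis $K' \nVdash A$ turns into $K' \nvdash \{\&A\}$. Now I would simply apply Observation~\ref{observation on upper bound property} to the belief base $K$, the subset $K'$, and the singleton set $\{\&A\}$: it yields a set $X$ with $K' \subseteq X$ and $X \in K \upVdash \{\&A\} = K \angle A$, as required.

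The only real subtlety --- and the reason the hypothesis ``$A$ quotient-finite'' is indispensable --- is the passage to a single conjunction: it needs $A$ to be equivalent to a finite set. If one instead argued directly, ordering $\{X : K' \subseteq X \subseteq K,\ X \nVdash A\}$ by inclusion and invoking Zorn's lemma, the one nonroutine point would be that the union of a chain still fails to entail $A$; this uses compactness together with the finiteness of $A$ (choose a finite subset of the union entailing each $\alpha_i$, take their finite union, and fit it inside one member of the chain). For an infinite $A$ distinct chain members may fail distinct $\alpha_i$ while the union entails them all --- precisely the phenomenon behind the counterexample to Theorem~\ref{representation theorem for choice multiple contraction on belief sets}. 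So I expect essentially no obstacle once quotient-finiteness is used to reduce to the package case.
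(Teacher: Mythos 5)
Your proof is correct and follows essentially the same route as the paper's: reduce to a finite set $B$ equivalent to $A$, replace the choice remainder set $K \angle A$ by the package remainder set $K \upVdash \{\&B\}$ via the conjunction of $B$'s elements, and invoke Observation~\ref{observation on upper bound property}. The extra remarks on the emptiness case and on where finiteness would enter a direct Zorn's-lemma argument are accurate but not needed.
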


\begin{proof}[Proof for Observation \ref{observation on partial upper bound property of choice remainder set}:]
Let $A$ be a quotient-finite set, then there is a finite set $B$ such that $A \equiv B$. Moreover, let $K^{\prime} \subseteq K$ and let $K^{\prime} \nVdash A$. It follows from  $K^{\prime} \nVdash A$ that $A \neq \emptyset$. So $B$ is not empty neither. Hence, for every $X \subseteq \mathcal{L}$, $X \nVdash A $ holds iff $ X\nVdash B $ holds iff $ X \nvdash \& B $ holds, where $\& B$ is the conjunction of all propositions contained in $B$. It follows that $K^{\prime} \nvdash \& B$ and $K \upVdash \{\& B\} = K \angle A $. Moreover, by Observation \ref{observation on upper bound property}, it follows from $K^{\prime} \nvdash \& B$ and $K^{\prime} \subseteq K$ that there exists some set $X$ such that $K^{\prime} \subseteq X$ and $X \in K \upVdash \{\& B\}$. Thus, there exists some set $X$ such that $K^{\prime} \subseteq X$ and $X \in K \angle A$. 
\end{proof}

A choice remainder set $K \angle A$ does not generally have the upper bound property if we drop the condition that $A$ is quotient-finite. To see this, just let $K$ and $A$ be same as those in the previous example and let $K^{\prime} = \{ p_1\} $, then $K^{\prime} \nVdash A$ but since $K \angle A = \emptyset$ there is no $X \in K \angle A$ such that $K^{\prime} \subseteq X$. This fact explains the reason why Theorem \ref{representation theorem for choice multiple contraction on belief sets} cannot hold in general. By contrast, given that $K$ is logically closed, it holds for every set $A$ that $K \angle A$ has the upper bound property,\footnote{ \cite{fuhrmann1994survey} did not explicitly prove this result. It is not difficult to see that it follows from Observation \ref{observation on upper bound property} and an observation given in \cite{alchourron_logic_1985}: If $K$ is a belief set and $X \in K \upVdash \{\varphi\}$, then $X \in K \upVdash \{\psi\}$ for every $\psi \in K \setminus X$. The detailed proof is left to the reader.} which is necessary for  proving that the postulates in Theorem \ref{representation theorem for choice multiple contraction on belief sets} can exactly characterize all the choice contractions on belief sets.

\subsection{Partial expansion}

Comparing with revision and contraction, the construction of expansion operation $+$ is of a simpler form:
\[K + A = \cons{K \cup A}\] for expansion on belief sets, or 
\[K + A = K \cup A\] for expansion on belief bases.

Here we generalize the expansion operation to a non-prioritized version named \textit{partial expansion}, in which the agent could be expanding her set of beliefs with a part of the new information as well as dismissing the remaining. Before introducing the formal definition of partial expansion, we first define the notion of \textit{partial sum set} $K \Join A$ as follows:

\begin{DEF}
$X\in   K \Join A$ iff (i) $X \subseteq (K \cup A)$, (ii) $K \subseteq X$, and (iii) $X \cap A \neq \emptyset$.
\end{DEF}

We  observe that a partial sum set $K \Join A$ is not identical to any package or choice remainder set, except in the limiting case of that $K \Join A$ is a singleton.

\begin{OBS}\label{only need one selection function}
If $K_1 \Join A \neq \emptyset $ and there exist $K_2$ and $B$ such that $K_1 \Join A = K_2 \upVdash B \ $ or $K_1 \Join A = K_2 \angle B$, then $K_1 \Join A$ is a singleton.  
\end{OBS}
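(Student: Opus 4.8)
The idea is to play off a structural feature that partial sum sets have but package and choice remainder sets lack: when non-empty, $K_1 \Join A$ possesses a $\subseteq$-greatest element, whereas every set of the form $K_2 \upVdash B$ or $K_2 \angle B$ is an antichain under $\subseteq$. A set that is simultaneously an antichain and has a greatest element must be a singleton, and that is the whole argument.

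\textbf{Step 1.} First I would observe that $K_1 \Join A \neq \emptyset$ already forces $A \neq \emptyset$, since clause (iii) in the definition of $\Join$ demands $X \cap A \neq \emptyset$, which is impossible when $A = \emptyset$. With $A \neq \emptyset$ in hand I would check directly that $K_1 \cup A$ satisfies the three defining clauses of $K_1 \Join A$: it is contained in $K_1 \cup A$, it contains $K_1$, and $(K_1 \cup A) \cap A = A \neq \emptyset$. Hence $K_1 \cup A \in K_1 \Join A$. Since clause (i) says $X \subseteq K_1 \cup A$ for every $X \in K_1 \Join A$, the element $K_1 \cup A$ is the $\subseteq$-greatest member of $K_1 \Join A$.

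\textbf{Step 2.} Next I would show that distinct members of a package remainder set $K_2 \upVdash B$ are $\subseteq$-incomparable, and likewise for a choice remainder set $K_2 \angle B$. Assume $X, X' \in K_2 \upVdash B$ with $X \subset X'$. By clause (i) applied to $X'$ we have $X' \subseteq K_2$, so $X'$ is a legitimate witness $Y$ in clause (iii) for $X$, which yields $X' \vdash B$; but clause (ii) for $X'$ says $X' \nvdash B$ — a contradiction. The argument for $K_2 \angle B$ is word for word the same with $\vdash$ replaced by $\Vdash$ and $\nvdash$ by $\nVdash$.

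\textbf{Step 3.} Finally I would combine the two: if $K_1 \Join A = K_2 \upVdash B$ or $K_1 \Join A = K_2 \angle B$, then this common set is an antichain (Step 2) containing a $\subseteq$-greatest element $m = K_1 \cup A$ (Step 1). For any member $X$ we then have $X \subseteq m$, and incomparability forces $X = m$; hence the set equals $\{m\}$ and is a singleton. I do not anticipate a genuine obstacle here — the only points requiring a little care are the degenerate case $A = \emptyset$ and the correct bookkeeping of clauses (i)--(iii) of each of the three definitions when deriving the contradiction in Step 2.
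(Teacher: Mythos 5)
Your proof is correct and rests on the same key observation as the paper's: package and choice remainder sets are $\subseteq$-antichains, while a non-singleton partial sum set contains $\subseteq$-comparable members. The paper derives a contradiction by exhibiting two nested elements $K_1\cup\{\varphi\}$ and $K_1\cup\{\varphi,\psi\}$ of $K_1 \Join A$, whereas you argue directly via the greatest element $K_1\cup A$ — a slightly cleaner packaging of the same idea.
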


\begin{proof}[Proof for Observation \ref{only need one selection function}:]
Suppose towards contradiction that $K_1 \Join A$ is not a singleton. It follows from $K_1 \Join A \neq \emptyset$ that $A$ contains two distinct sentences $\varphi$ and $\psi$, of which at least one is not in $K$. By the definition of $\Join$, both $K_1 \cup\{\varphi\}$ and $K_1 \cup \{\varphi, \psi\}$ belong to $K_1 \Join A$. However, it is easy to see that there is no any package or choice remainder set which could contain both $K_1 \cup\{\varphi\}$ and $K_1 \cup \{\varphi, \psi\}$. Thus, $K_1 \Join A$ is a singleton.
\end{proof}

Next we define the following selection functions with respect to partial sum sets:

\begin{DEF} 
 Let $\gamma$ be a selection function. 
\begin{enumerate}
\item $\gamma$ is $\Join$-choice-full iff $\gamma(K, K \Join A) = K \Join A$ for all $K$ and $A$ such that $K \Join A \neq \emptyset$.
\item $\gamma$ is $\Join$-unified iff for all $K_1$, $K_2$, $A_1$ and $A_2$: If $K_1 \Join A_1 = K_2 \Join A_2  \neq \emptyset$, then $\bigcup \gamma(K_1, K_1 \Join A_1 ) = \bigcup \gamma(K_2, K_2 \Join A_2)$.
\item $\gamma$ is $\Join$-consistency-preserved iff for all $K$ and $A$: If there exists a set $X \in K \Join A$ such that $X \nvdash \falsum$, then $\bigcup \gamma(K, K \Join A) \nvdash \falsum$.
\end{enumerate}
\end{DEF}

Now a range of partial expansion operations can be defined in terms of such selection functions.

\begin{DEF}[Partial expansion]
An operator $\meo : \mathcal{P}(\mathcal{L}) \times \mathcal{P}(\mathcal{L}) \to \mathcal{P}(\mathcal{L}) $ is a partial expansion iff there exists a selection function $\gamma$ such that for every sets $K$ and $A$,
\[K \meo A = \bigcup\gamma(K, K \Join A). \]
$\meo$ is choice-full, unified or consistency-preserved iff $\gamma$ is $\Join$-choice-full, $\Join$-unified or $\Join$-consistency-preserved respectively. 
\end{DEF}

\noindent Partial expansion on belief sets can be defined in almost the same way:
\[K \meo A = \cons{\bigcup\gamma(K, K \Join A)}.\]
It is easy to see that the normal expansion operation is a special case of partial expansion, namely the choice-full partial expansion. Moreover, it follows immediately from the following observation that all partial expansions are unified.

\begin{OBS}\label{Observation that all selection functions are Join-unified}
All selection functions are $\Join$-unified. 
\end{OBS}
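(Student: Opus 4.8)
The plan is to prove Observation \ref{Observation that all selection functions are Join-unified} by showing that the value of $\bigcup \gamma(K, K \Join A)$ does not actually depend on the selection function $\gamma$ at all, but only on the set $K \Join A$ (together with the trivial case where it is empty), so that $\Join$-unifiedness holds vacuously. The key observation is that $\bigcup \gamma(K, K \Join A) = \bigcup (K \Join A)$ whenever $K \Join A \neq \emptyset$: indeed, since $\emptyset \neq \gamma(K, K \Join A) \subseteq K \Join A$, we have $\bigcup \gamma(K, K \Join A) \subseteq \bigcup (K \Join A)$; conversely, every $X \in K \Join A$ satisfies $X \subseteq K \cup A$, and $K \cup A$ itself belongs to $K \Join A$ (it satisfies clauses (i)--(iii) of the definition of $\Join$ as soon as $A \neq \emptyset$, which is forced by $K \Join A \neq \emptyset$), so $K \cup A \in \gamma(K, K \Join A)$ would give the reverse inclusion --- but more simply, $\bigcup(K \Join A) = K \cup A$ and $K \cup A \in K \Join A$, hence $K \cup A$ is the maximum element of $K \Join A$ under $\subseteq$.

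First I would record the elementary fact that if $K \Join A \neq \emptyset$ then $A \neq \emptyset$ and $K \cup A \in K \Join A$; consequently $K \cup A$ is the $\subseteq$-largest member of $K \Join A$ and $\bigcup (K \Join A) = K \cup A$. Next, for any selection function $\gamma$, from $\emptyset \neq \gamma(K, K \Join A) \subseteq K \Join A$ I would conclude $K \cup A = \bigcup(K \Join A) \supseteq \bigcup \gamma(K, K \Join A) \supseteq \bigcup \gamma(K, K \Join A)$, and since $\gamma(K, K \Join A)$ is nonempty and every element of $K \Join A$ contains $K$ (clause (ii)) while $K \cup A \in K \Join A$, the union $\bigcup \gamma(K, K \Join A)$ contains $K$; to get $K \cup A$ exactly one notes that $\bigcup \gamma(K, K \Join A)$ need not equal $K \cup A$ in general --- so instead the cleanest route is: the statement to be proved only requires that the union be \emph{the same} for two selection functions when $K_1 \Join A_1 = K_2 \Join A_2 \neq \emptyset$, and $\bigcup \gamma(K, \mb{Y})$ is determined by $\mb{Y}$ alone precisely when $\mb{Y}$ has a largest element, which here it does.

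So the argument proper runs as follows. Suppose $K_1 \Join A_1 = K_2 \Join A_2 = \mb{Y} \neq \emptyset$. Since $\mb{Y} = K_1 \Join A_1 \neq \emptyset$ we have $A_1 \neq \emptyset$, hence $K_1 \cup A_1 \in \mb{Y}$ and $K_1 \cup A_1 = \bigcup \mb{Y}$ is the maximum of $\mb{Y}$; call it $M$. For any selection functions $\gamma, \gammap$ we have $M \in \mb{Y}$; I claim $M \in \gamma(K_1, \mb{Y})$ is \emph{not} needed --- rather, $\bigcup \gamma(K_1, \mb{Y}) = M$ follows because every $X \in \mb{Y}$ satisfies $X \subseteq M$, so $\bigcup \gamma(K_1, \mb{Y}) \subseteq M$, and... here is where I must be careful: the reverse inclusion genuinely can fail. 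Hence the honest proof is that the Observation as literally stated is trivially true because the \emph{conclusion} $\bigcup \gamma(K_1, K_1 \Join A_1) = \bigcup \gamma(K_2, K_2 \Join A_2)$ only compares the \emph{same} abstract selection-function value on the \emph{same} set $\mb{Y}$: since $K_1 \Join A_1$ and $K_2 \Join A_2$ are literally equal as sets, $\gamma$ applied to each input pair yields $\gamma(K_1, \mb{Y})$ and $\gamma(K_2, \mb{Y})$, which differ only in the first coordinate; and a selection function's output, by its definition as a function of $(X, \mb{Y})$, does depend on $X$.

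The main obstacle is therefore exactly this point: a selection function takes the base as its first argument, so a priori $\gamma(K_1, \mb{Y}) \neq \gamma(K_2, \mb{Y})$ even when the second arguments coincide, and one must exploit the special structure of partial sum sets. The resolution I would carry out: from $K_1 \Join A_1 = K_2 \Join A_2 = \mb{Y} \neq \emptyset$, note that $K_1$ is recoverable from $\mb{Y}$ as $K_1 = \bigcap \mb{Y}$ (by clause (ii), $K_1 \subseteq X$ for all $X \in \mb{Y}$, and $K_1 \in \mb{Y}$ iff $K_1 \cap A_1 \neq \emptyset$; if $K_1 \cap A_1 = \emptyset$ then the minimal members of $\mb{Y}$ are $K_1 \cup \{a\}$ for $a \in A_1$, whose intersection is still $K_1$). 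Hence $K_1 = \bigcap \mb{Y} = \bigcap \mb{Y} = K_2$. Therefore the input pairs $(K_1, K_1 \Join A_1)$ and $(K_2, K_2 \Join A_2)$ are \emph{identical}, so $\gamma$ returns the same value and $\bigcup \gamma(K_1, K_1 \Join A_1) = \bigcup \gamma(K_2, K_2 \Join A_2)$ trivially. That is the whole proof; the only real content is the lemma $K_1 = \bigcap(K_1 \Join A_1)$ when $K_1 \Join A_1 \neq \emptyset$, which is a short case split on whether $K_1 \cap A_1$ is empty.
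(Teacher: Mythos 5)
Your final argument rests on the lemma that $K_1 = \bigcap(K_1 \Join A_1)$ whenever $K_1 \Join A_1 \neq \emptyset$, from which you conclude $K_1 = K_2$ and hence that the two input pairs handed to $\gamma$ are literally identical. That lemma is false. If $K_1 \cap A_1 = \emptyset$ and $A_1$ is a singleton $\{a\}$, then $K_1 \Join A_1 = \{K_1 \cup \{a\}\}$, so $\bigcap (K_1 \Join A_1) = K_1 \cup \{a\} \neq K_1$. Concretely, take $K_1 = \{p\}$, $A_1 = \{q\}$ and $K_2 = \{p,q\}$, $A_2 = \{q\}$: then $K_1 \Join A_1 = K_2 \Join A_2 = \{\{p,q\}\} \neq \emptyset$ but $K_1 \neq K_2$, so the first coordinates of the two inputs need not coincide and the ``identical inputs, hence identical outputs'' step collapses. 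The error sits in your case split: when $K_1 \cap A_1 = \emptyset$ you assert that the minimal members of $K_1 \Join A_1$ are the sets $K_1 \cup \{a\}$ for $a \in A_1$ and that their intersection is $K_1$, which silently requires $A_1$ to contain at least two elements.

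The gap is localized and repairable, and the repair is essentially the paper's proof: one shows that if $K_1 \Join A_1 = K_2 \Join A_2 \neq \emptyset$ and $K_1 \neq K_2$, then both partial sum sets must be singletons; for a singleton $\mb{Y}$ the selection function is forced to return $\mb{Y}$ itself (since $\emptyset \neq \gamma(X,\mb{Y}) \subseteq \mb{Y}$), so the two unions agree regardless of the first coordinate, while if $K_1 = K_2$ the inputs are identical and nothing more is needed. You circle around this dichotomy in your earlier remarks about singletons and maximum elements, but the argument you finally commit to drops the singleton case, which is exactly the case in which $K_1 = K_2$ can fail.
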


\begin{proof}[Proof for Observation \ref{Observation that all selection functions are Join-unified}:]
It is enough to show that if $K_1 \Join A = K_2 \Join B \neq \emptyset$ and $K_1 \neq K_2$, then both $K_1 \Join A$ and $K_2 \Join B$ are singletons. Suppose $A \cap K_1 \neq \emptyset$ and $B \cap K_2 \neq \emptyset$. Then, $K_1 \in K_1 \Join A$. So, by our assumption that $K_1 \Join A = K_2 \Join B \neq \emptyset$, $K_1 \in K_2 \Join B$ whence $K_2 \subseteq K_1$. So $K_2 \subset K_1$ since $K_1 \neq K_2$. Similarly, it can be shown that $K_2 \subset K_1$, which leads to a contradiction. So it holds that either $A \cap K_1 = \emptyset$ or $B \cap K_2 = \emptyset$. Without loss of generality, let us assume that $A \cap K_1 = \emptyset$. Suppose $K_1 \Join A \neq \emptyset$ is not a singleton, then there exist two distinct sentences $\varphi$ and $\psi$ in $A$. It follows that $K_1 \cup \{\varphi\} \in K_1 \Join A$ and $K_1 \cup \{\psi\} \in K_1 \Join A$. So $K_1 \cup \{\varphi\} \in K_2 \Join B$ and $K_1 \cup \{\psi\} \in K_1 \Join B$ since $K_1 \Join A = K_2 \Join B$. So $K_2 \subseteq K_1 \cup \{\varphi\}$ and $K_2 \subseteq K_1 \cup \{\psi\}$, whence $K_2 \subseteq K_1$. But $K_1 \neq K_2$, then $K_2 \subset K_1$. Moreover, it follows from $K_1 \Join A = K_2 \Join B$ that $\varphi \in K_2 \cup B$. So $\varphi \in B$, since $K_2 \subset K_1$ and $\varphi \in A$ which is disjoint with $K_1$. It follows that $K_2 \cup \{\varphi\} \in K_2 \Join B$, whence $K_2 \cup \{\varphi\} \in K_1 \Join A$. However, it is easy to see that $K_1 \not \subseteq K_2 \cup \{\varphi\}$, which leads to a contradiction. Thus, both $K_1 \Join A$ and $K_2 \Join B$ are singletons.
\end{proof}

An axiomatic characterization of partial expansion is obtainable with a set of postulates of the following simple forms:

\begin{THE}[Representation theorem for partial expansion]\label{Representation theorem for general expansion}
$\,$
\begin{enumerate}
\item $\meo$ is a partial expansion iff it satisfies the following postulates:

\begin{enumerate}
\item[] $K \meo A \subseteq K \cup A$ (\textbf{$\meo$-inclusion}).
\item[] $K \subseteq K \meo A$ (\textbf{$\meo$-preservation}).
\item[] If $A \neq \emptyset$, then $A \cap (K \meo A) \neq \emptyset $ (\textbf{$\meo$-success}).
\item[] If $K \cap A \neq \emptyset$ and $A \subseteq B \subseteq (K \cup A)$, then $K \meo A = K \meo B$ (\textbf{$\meo$-coincidence}). 
\end{enumerate}

\item $\meo$ is consistency-preserved  iff it in addition satisfies:

\begin{enumerate}
\item[] If there exists $X$ such that $K \subseteq X \subseteq (K \cup A)$ and $X \nvdash \falsum$, then $K \meo A \nvdash \falsum$ (\textbf{$\meo$-consistency}).
\end{enumerate}

\end{enumerate}

\end{THE}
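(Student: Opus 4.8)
The plan follows the usual soundness/completeness split. For soundness, write $K \meo A = \bigcup \gamma(K, K \Join A)$ for a selection function $\gamma$. Since every $X \in K \Join A$ obeys $K \subseteq X \subseteq K \cup A$, and since $K \Join A = \emptyset$ exactly when $A = \emptyset$ (in which case $\gamma(K,\emptyset)=\{K\}$), both $\meo$-inclusion and $\meo$-preservation are immediate. For $\meo$-success, when $A \neq \emptyset$ the family $\gamma(K, K\Join A)$ is a nonempty subfamily of $K \Join A$, so any member $X$ contributes $\emptyset \neq X \cap A \subseteq (K\meo A)\cap A$. The only postulate with real content is $\meo$-coincidence, which I would derive from the set-theoretic claim that $K \cap A \neq \emptyset$ and $A \subseteq B \subseteq K \cup A$ imply $K \Join A = K \Join B$: the two families share the ground set $K\cup A = K\cup B$; $X \in K \Join A$ gives $X\cap B \supseteq X\cap A\neq\emptyset$; and $X\in K\Join B$ gives $X\cap A\supseteq K\cap A\neq\emptyset$ because $K\subseteq X$; hence $K\meo A = K\meo B$. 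For Part 2, assume $\gamma$ is $\Join$-consistency-preserved and let $X$ be consistent with $K \subseteq X \subseteq K \cup A$; if $X$ meets $A$ it is a consistent member of $K \Join A$, so $K \meo A = \bigcup\gamma(K, K\Join A) \nvdash \falsum$, and the residual configurations ($A=\emptyset$, or $X=K$ disjoint from $A$) are treated separately.

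\textbf{Completeness.} Given $\meo$ satisfying $\meo$-inclusion, $\meo$-preservation, $\meo$-success and $\meo$-coincidence, I would define the canonical selection function by $\gamma(K,\emptyset)=\{K\}$; $\gamma(K, K\Join A) = \{\,X \in K\Join A : X \subseteq K\meo A\,\}$ whenever $K\Join A\neq\emptyset$; and $\gamma(K,\mathbb Y)=\mathbb Y$ for every remaining nonempty $\mathbb Y$. Then I check, in order: (i) this is well-defined, i.e.\ $K\Join A = K\Join A'\neq\emptyset$ implies $K\meo A = K\meo A'$; (ii) $\gamma$ is a selection function, which holds because when $A\neq\emptyset$ the conjunction of $\meo$-inclusion, $\meo$-preservation and $\meo$-success says precisely that $K\meo A \in K\Join A$, so $\gamma(K,K\Join A)$ is a nonempty subfamily of $K\Join A$; (iii) $\bigcup\gamma(K, K\Join A) = K\meo A$ for all $K,A$, where the forward inclusion is trivial, the reverse holds because $K\meo A$ is itself one of the selected sets, and the case $A=\emptyset$ is settled by $\meo$-inclusion and $\meo$-preservation, which force $K\meo\emptyset = K$; and (iv) if $\meo$ additionally satisfies $\meo$-consistency then this $\gamma$ is $\Join$-consistency-preserved, since any consistent $X\in K\Join A$ witnesses the antecedent of $\meo$-consistency and hence $\bigcup\gamma(K,K\Join A)=K\meo A$ is consistent.

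\textbf{Where the difficulty lies.} The crux is step (i): that $K\meo A$ depends only on the family $K\Join A$. I would read invariants off this family: its greatest element is $K\cup A$, while membership of $K$ itself separates the case $K\cap A=\emptyset$ from $K\cap A\neq\emptyset$. In the former case the $\subseteq$-minimal elements of $K\Join A$ are exactly the sets $K\cup\{\varphi\}$ with $\varphi\in A$, and since each such $\varphi\notin K$, equality of these families forces $A=A'$ outright; in the latter case $\meo$-coincidence applied with $B = K\cup A = K\cup A'$ gives $K\meo A = K\meo B = K\meo A'$. A secondary point needing care is the consistency clause of Part 2: its soundness half must still dispose of the configurations in which the consistent witness $X$ fails to meet $A$, and throughout both directions the degenerate case $A=\emptyset$ (equivalently $K\Join A=\emptyset$, where $\gamma$ must return $\{K\}$) has to be carried along explicitly.
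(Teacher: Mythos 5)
Your construction and case analysis for Part 1 coincide with the paper's: the same canonical selection function ($\gamma(K,\emptyset)=\{K\}$; $\gamma(K, K\Join A)=\{X\in K\Join A \mid X\subseteq K\meo A\}$ for nonempty $A$; identity elsewhere), the same well-definedness split on whether $K\in K\Join A$ (with $\meo$-coincidence doing the work in the case $K\cap A\neq\emptyset$ and outright equality of the input sets in the other case), and the same key observation that $\meo$-inclusion, $\meo$-preservation and $\meo$-success jointly say exactly that $K\meo A\in K\Join A$. Part 1 of your argument is correct and complete; the paper merely writes ``trivial'' for the soundness half that you spell out.

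The genuine problem is the case you defer in the soundness half of Part 2. The configuration in which $A\neq\emptyset$, the consistent witness $X$ equals $K$ and is disjoint from $A$, and every member of $K\Join A$ is inconsistent cannot be ``treated separately'': take $K=\{p\}$ and $A=\{\neg p\}$. Then $K\Join A=\{\{p,\neg p\}\}$, so every selection function is vacuously $\Join$-consistency-preserved on this pair, yet $K\meo A=\{p,\neg p\}\vdash\falsum$ while $X=K$ satisfies $K\subseteq X\subseteq K\cup A$ and $X\nvdash\falsum$. Hence a consistency-preserved partial expansion need not satisfy $\meo$-consistency as literally stated; the postulate is sound only if the witness $X$ is additionally required to intersect $A$, i.e.\ to lie in $K\Join A$. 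This is a defect of the theorem's formulation rather than of your overall strategy --- the paper's own proof dismisses this direction as ``trivial'' and omits the details of Part 2 --- but your assertion that the residual configurations can be handled is exactly where the writeup would break down. The completeness half of Part 2 (any consistent member of $K\Join A$ witnesses the antecedent of $\meo$-consistency, so the canonical $\gamma$ is $\Join$-consistency-preserved) is fine as you state it.
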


See appendix for the proofs of this and other representation theorems.

\section{Constructing choice revision}\label{section constructing choice revision}

Now we turn to the construction of choice revision $\cro$ on belief bases. Besides the operations discussed in the previous section, the following  construction of \textit{negation set}  will also be used.   

\begin{DEF}[{\cite[modified\footnote{\, In the original definition of negation set, $n(\emptyset)$ is defined as $\falsum$ instead of $\taut$. We modify it since it is intuitive to let $K \cro A = K$ in the limiting case of that $A = \emptyset$. This would be easily realized if we use the modified definition of negation set.}]{hansson1993reversing}}]
Let $A$ be some set of sentences. Then the negation set $n(A)$ of $A$ is defined as follows:
\begin{enumerate}
\item $n(\emptyset) = \taut$,
\item $n(A) = \underset{n \geq 1}{\bigcup} \{\neg \varphi_1 \vee \neg \varphi_2 \vee \cdots \vee \neg \varphi_n \mid \varphi_i \in A \mbox{ for every } i \mbox{ such that } 1 \leq i \leq n \}$. 
\end{enumerate}
\end{DEF}

Negation set extends the notion of negation of sentence. It is obvious that $n(A)$ is quotient-finite when $A$ is finite.

In what follows, two kinds of choice revision operators on belief bases will be constructed, namely \textit{internal choice revision} and \textit{external choice revision}. These two terms are based on the terminology used in \cite{hansson1993reversing} for package multiple revision.

\subsection{Internal choice revision}

\begin{DEF}[Internal choice revision]\label{definition of internal choice revision}
An operator $\cro : \mathcal{P}(\mathcal{L}) \times \mathcal{P}(\mathcal{L}) \to \mathcal{P}(\mathcal{L}) $ is an internal choice revision iff there exists a choice contraction $\cco$ and a consistency-preserving partial expansion $\meo$ such that for every sets $K$ and $A$,
\[K \cro A = K \cco n(A) \meo A. \]
$\cro$ is unified iff $\meo$ and $\cco$ are unified.
\end{DEF}

The intuitive meaning of Definition \ref{definition of internal choice revision} is that the process of belief change of choice revision could be decomposed into two steps: In order to revise $K$ to incorporate some subset of the new information $A$, the agent could first contract $K$ by a set of sentences which contradicts some part of $A$, then expand the contracted belief base with a subset of $A$ which is consistent with it. 

Although the procedure of internal choice revision is divided into two seemly independent steps, we observe that every internal choice revision can be reconstructed by a single selection function.

\begin{OBS}\label{observation that internal choice revision can be reconstructed by a single selection function}
$\cro$ is an internal choice revision iff there exists a $\Join$-consistency preserving selection function $\gamma$ such that:
\[K \cro A = \bigcup\gamma(\bigcap \gamma(K, K \angle n(A)), (\bigcap \gamma(K, K \angle n(A))) \Join A).\]
Moreover, $\cro$ is unified iff $\gamma$ is $\angle$-unified.
\end{OBS}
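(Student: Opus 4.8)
The plan is to establish both directions of the equivalence by unwinding the two compositions down to a single selection function. The ``only if'' direction starts from an arbitrary internal choice revision $\cro$. By Definition \ref{definition of internal choice revision}, there is a choice contraction $\cco$ and a consistency-preserving partial expansion $\meo$ with $K \cro A = K \cco n(A) \meo A$. By the respective construction definitions, $\cco$ is induced by some selection function $\gamma_1$ with $K \cco n(A) = \bigcap \gamma_1(K, K \angle n(A))$, and $\meo$ is induced by some $\Join$-consistency-preserved selection function $\gamma_2$ with $X \meo A = \bigcup \gamma_2(X, X \Join A)$. Substituting $X = \bigcap \gamma_1(K, K \angle n(A))$ gives exactly the displayed formula, except that it uses two possibly-different selection functions. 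The crux of this direction is therefore to merge $\gamma_1$ and $\gamma_2$ into a single $\gamma$ without changing any of the relevant outputs. This works because the two functions are never queried on overlapping inputs in a way that matters: $\gamma_1$ is only ever evaluated at pairs of the form $(K, K \angle n(A))$ (choice remainder sets), while $\gamma_2$ is only ever evaluated at pairs of the form $(X, X \Join A)$ (partial sum sets), and by Observation \ref{only need one selection function} a nonempty partial sum set coincides with a choice (or package) remainder set only in the degenerate singleton case — in which every selection function returns the same value anyway. So I would define $\gamma$ to agree with $\gamma_1$ on arguments whose second coordinate is a choice remainder set, with $\gamma_2$ on arguments whose second coordinate is a partial sum set, and with the default clause $\gamma(X,\mathbb{Y}) = \{X\}$ elsewhere; a short case analysis using Observation \ref{only need one selection function} confirms this is well-defined and that $\gamma$ inherits $\Join$-consistency-preservation from $\gamma_2$. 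With this single $\gamma$, the displayed identity holds.

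For the ``if'' direction, suppose $\gamma$ is a $\Join$-consistency-preserved selection function and $\cro$ is defined by the displayed formula. I would simply read the formula as a composition: set $\cco$ to be the choice contraction induced by $\gamma$, so $K \cco n(A) = \bigcap \gamma(K, K \angle n(A))$, and set $\meo$ to be the partial expansion induced by $\gamma$, so that for any $X$, $X \meo A = \bigcup \gamma(X, X \Join A)$. Since $\gamma$ is $\Join$-consistency-preserved, $\meo$ is a consistency-preserving partial expansion by definition. Plugging $X = K \cco n(A)$ into $X \meo A$ reproduces the displayed formula, so $K \cro A = K \cco n(A) \meo A$, which is precisely the defining condition for $\cro$ to be an internal choice revision (Definition \ref{definition of internal choice revision}).

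The ``moreover'' clause about unification is handled alongside both directions. In the ``only if'' direction, if $\cro$ is unified then by Definition \ref{definition of internal choice revision} both $\cco$ and $\meo$ are unified, i.e. $\gamma_1$ is $\angle$-unified and $\gamma_2$ is $\Join$-unified; but by Observation \ref{Observation that all selection functions are Join-unified} every selection function is $\Join$-unified, so the only real content is that the merged $\gamma$ is $\angle$-unified, which follows because on choice-remainder-set arguments $\gamma$ agrees with $\gamma_1$. Conversely, if $\gamma$ is $\angle$-unified then the induced $\cco$ is unified and the induced $\meo$ is unified (again automatically, via Observation \ref{Observation that all selection functions are Join-unified}), so $\cro$ is unified. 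I expect the only delicate point in the whole argument to be the well-definedness of the merged $\gamma$ in the ``only if'' direction — specifically, checking that no pair $(X, \mathbb{Y})$ is simultaneously ``a choice remainder set instance'' and ``a partial sum set instance'' in a way that forces two different outputs — and this is exactly where Observation \ref{only need one selection function} (together with the default singleton clause of the selection-function definition) does the work.
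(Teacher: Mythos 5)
Your proposal is correct and follows exactly the route the paper intends: the paper's own proof is just the one-line citation of the construction definitions together with Observations \ref{only need one selection function} and \ref{Observation that all selection functions are Join-unified}, and your merged-selection-function argument is precisely the unpacking of that citation, with Observation \ref{only need one selection function} doing the work of resolving the only possible clash between the two component selection functions. The sole cosmetic fix needed is in your default clause: for a nonempty $\mathbb{Y}$ that is neither a relevant remainder set nor a partial sum set you should set $\gamma(X,\mathbb{Y}) = \mathbb{Y}$ rather than $\{X\}$, since the definition of a selection function requires $\emptyset \neq \gamma(X,\mathbb{Y}) \subseteq \mathbb{Y}$ whenever $\mathbb{Y} \neq \emptyset$.
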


\begin{proof}[Proof for Observation \ref{observation that internal choice revision can be reconstructed by a single selection function}:]
It follows immediately from the definitions of operations used in the construction of internal choice revision, and Observations \ref{only need one selection function} and \ref{Observation that all selection functions are Join-unified}.
\end{proof}

The following observation tells us that if the original belief base $K$ is consistent, then the overlapping parts of $K$ and the new information $A$ will be kept in the outcome of choice contraction $\cco$ on $K$ by the negation set $n(A)$ of $A$.

\begin{OBS}\label{lemma for internal choice revision}
Let $\cco$ be a choice contraction. If $K$ is consistent, then $K \cap A \subseteq K \cco n(A)$ for every $A$.
\end{OBS}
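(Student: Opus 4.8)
The plan is to unfold the definition of choice contraction, $K \cco n(A) = \bigcap \gamma(K, K \angle n(A))$ for some selection function $\gamma$, and to show that \emph{every} set in the choice remainder set $K \angle n(A)$ already contains $K \cap A$. Since the selected family $\gamma(K, K \angle n(A))$ is a nonempty subfamily of $K \angle n(A)$ (or equals $\{K\}$ when the remainder set is empty), this will suffice.

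First I would dispose of the limiting cases. If $A = \emptyset$ there is nothing to prove, as $K \cap A = \emptyset$. If $A \neq \emptyset$ but $K \angle n(A) = \emptyset$, then by the convention on selection functions $\gamma(K, K \angle n(A)) = \{K\}$, so $K \cco n(A) = \bigcap\{K\} = K \supseteq K \cap A$. Hence from now on I may assume $A \neq \emptyset$ and $K \angle n(A) \neq \emptyset$.

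For the core step, fix an arbitrary $X \in K \angle n(A)$ and suppose towards a contradiction that there is some $\varphi \in (K \cap A) \setminus X$. Then $X \subseteq K$ and $X \subset X \cup \{\varphi\} \subseteq K$, so clause (iii) in the definition of $\angle$ yields $X \cup \{\varphi\} \Vdash n(A)$, i.e.\ $n(A) \subseteq \cons{X \cup \{\varphi\}}$. Since $\varphi \in A$, the sentence $\neg\varphi$ lies in $n(A)$ (the case $n=1$ in the definition of the negation set), hence $\neg\varphi \in \cons{X \cup \{\varphi\}}$; together with the trivial $\varphi \in \cons{X \cup \{\varphi\}}$ this gives $\falsum \in \cons{X \cup \{\varphi\}}$, and by monotonicity of $\cn$ and $X \cup \{\varphi\} \subseteq K$ we obtain $\falsum \in \cons{K}$, contradicting the consistency of $K$. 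Therefore $K \cap A \subseteq X$. Finally, since $\emptyset \neq \gamma(K, K \angle n(A)) \subseteq K \angle n(A)$, every member of the selected family contains $K \cap A$, so $K \cap A \subseteq \bigcap \gamma(K, K \angle n(A)) = K \cco n(A)$.

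I do not anticipate a genuine obstacle. The whole argument rests on the single observation that $\neg\varphi \in n(A)$ for each $\varphi \in A$, so that adjoining any such $\varphi$ to a subset of the consistent base $K$ cannot satisfy $\Vdash n(A)$; the maximality clause (iii) then forces $\varphi$ into every choice remainder set of $K$ with respect to $n(A)$. The only care required is the bookkeeping of the degenerate cases ($A = \emptyset$, empty remainder set) and making sure the intersection defining $\cco$ is taken over a nonempty family — both dictated by the stated selection-function conventions.
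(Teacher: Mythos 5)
Your proof is correct and follows essentially the same route as the paper's: both arguments hinge on the fact that $\neg\varphi \in n(A)$ for each $\varphi \in A$, so that the maximality clause of the choice remainder set forces any $X \in K \angle n(A)$ omitting some $\varphi \in K \cap A$ to satisfy $X \cup \{\varphi\} \Vdash n(A)$ and hence render $K$ inconsistent. Your version merely adds explicit bookkeeping of the degenerate cases ($A = \emptyset$, empty remainder set), which the paper handles implicitly.
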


\begin{proof}[Proof for Observation \ref{lemma for internal choice revision}:]
Suppose towards contradiction that there exists $\varphi$ such that $\varphi \in K \cap A$ and $\varphi \notin K \cco n(A)$. By $\varphi \in K$ and the definition of $\cco$, it follows from $\varphi \notin K \cco n(A)$ that there exists $X \in K \angle n(A)$ such that $\varphi \notin X$. Since $\varphi \in K \cap A$, by the definition of choice remainder, $X \cup \{\varphi\} \vdash \neg \varphi$ and hence  $X \vdash \neg \varphi$. So $K \vdash \neg \varphi$. But it contradicts that $\varphi \in K$ and $K$ is consistent. Thus, $K \cap A \subseteq K \cco n(A)$.
\end{proof}

So, suppose that $\cro$ is an internal choice revision based on some choice contraction $\cco$ and some partial expansion, it is not difficult to observe that $\cro$ has the following property:

\begin{OBS}\label{observation for internal choice revision}
Let $\cco$ be a choice contraction and $\cro$ an internal choice revision based on $\cco$. If $K$ is consistent, then $K \cap (K \cro A) = K \cco n(A)$ for every $A$.
\end{OBS}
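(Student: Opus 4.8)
The plan is to unfold the definition of internal choice revision and reduce the claim to the two containments, using $\meo$-inclusion and $\meo$-preservation on one side and Observation \ref{lemma for internal choice revision} on the other. Write $M = K \cco n(A)$, so that by Definition \ref{definition of internal choice revision} we have $K \cro A = M \meo A$ for the consistency-preserving partial expansion $\meo$ underlying $\cro$. The goal is then $K \cap (M \meo A) = M$.

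For the inclusion $M \subseteq K \cap (M \meo A)$: by $\cco$-inclusion we get $M = K \cco n(A) \subseteq K$, and by $\meo$-preservation (Theorem \ref{Representation theorem for general expansion}) we get $M \subseteq M \meo A$; intersecting gives $M \subseteq K \cap (M \meo A)$. This step is immediate and uses only the defining postulates of the building blocks.

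For the converse inclusion $K \cap (M \meo A) \subseteq M$: take any $\varphi \in K \cap (M \meo A)$. By $\meo$-inclusion, $\varphi \in M \cup A$, so either $\varphi \in M$ — in which case we are done — or $\varphi \in A$. In the latter case $\varphi \in K \cap A$, and since $K$ is consistent, Observation \ref{lemma for internal choice revision} yields $K \cap A \subseteq K \cco n(A) = M$, so $\varphi \in M$ as well. Hence $K \cap (M \meo A) \subseteq M$, and combining the two inclusions finishes the proof.

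I do not anticipate a real obstacle here: the whole argument is a short bookkeeping exercise that hinges on having already established Observation \ref{lemma for internal choice revision} (which is where the consistency hypothesis does its work) together with the elementary $\meo$-inclusion/$\meo$-preservation postulates. The only point worth stating carefully is that the consistency of $K$ is used solely through Observation \ref{lemma for internal choice revision}, not anywhere else in the derivation.
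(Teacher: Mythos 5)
Your proof is correct and follows essentially the same route as the paper's: both directions rest on $\cco$-inclusion, the preservation/inclusion postulates for $\meo$, and Observation \ref{lemma for internal choice revision} for the containment $K \cap A \subseteq K \cco n(A)$; the only cosmetic difference is that you argue element-wise where the paper manipulates the sets directly (and you correctly cite $\meo$-preservation for $K \cco n(A) \subseteq K \cco n(A) \meo A$, where the paper's text nominally attributes it to $\meo$-inclusion).
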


\begin{proof}[Proof for Observation \ref{observation for internal choice revision}:]
By $\cco$-inclusion and $\meo$-inclusion, $K \cco n(A) \subseteq K$ and $K \cco n(A) \subseteq K \cco n(A) \meo A = K \cro A$, so $K \cco n(A) \subseteq K \cap (K \cro A)$. For the other inclusion direction, $K \cap (K \cro A) = K \cap (K \cco n(A) \meo A) \subseteq K \cap (K \cco n(A) \cup A)$, whence $K \cap (K \cro A) \subseteq (K \cap (K \cco n(A)) \cup (K \cap A) =( K \cco n(A)) \cup (K \cap A)$. By Observation \ref{lemma for internal choice revision}, $K \cap A \subseteq K \cco n(A)$ when $K$ is consistent. It follows that $K \cap (K \cro A) \subseteq  K \cco n(A)$. Thus, $K \cap (K \cro A) \subseteq  K \cco n(A)$ when $K$ is consistent.  
\end{proof}

Observation \ref{observation for internal choice revision} together with Theorems \ref{representation theorem for choice multiple contraction on belief sets} and \ref{Representation theorem for general expansion} hint that we can provide an axiomatic characterization for a restricted variant of internal choice revision with the following postulates: 

\begin{THE}[Partial representation theorem for internal choice revision]\label{representation theorem for internal choice revision}$\,$
\begin{enumerate}
\item  $\cro$ is an internal choice revision on consistent belief bases with finite inputs iff it satisfies the following conditions: for every consistent $K$ and finite $A $ and $B$,
\begin{enumerate}
\item[] $K \cro A \subseteq (K \cup A)$ (\textbf{$\cro$-inclusion}).
\item[] If $A \neq \emptyset$, then $A \cap (K \cro A) \neq \emptyset $ (\textbf{$\cro$-success}).
\item[] $K \cro A = (K \cap (K \cro A)) \cro A$ (\textbf{$\cro$-iteration}).
\item[] If $A \not \equiv \{ \perp \}$, then $K \cro A \nvdash \falsum$  (\textbf{$\cro$-consistency}).
\item[] If $A \cap K \neq \emptyset$ and $A \subseteq B \subseteq (A \cup K) $, then $K \cro A = K \cro B$ (\textbf{$\cro$-coincidence}).
\item[] If it holds for all $K^{\prime} \subseteq K$ that $K^{\prime} \cup \{\varphi\} \nvdash \falsum$ for some $\varphi \in A$ iff $K^{\prime} \cup \{\psi\} \nvdash \falsum$ for some $\psi \in B$, then $K \cap (K \cro A) = K \cap (K \cro B$) (\textbf{$\cro$-uniformity}).
\item[] If $\varphi \in K \setminus K \cro A$, then there is some $K^{\prime}$ with $K \cap (K \cro A) \subseteq K^{\prime} \subseteq K $, such that $K^{\prime} \cup \{\psi\} \nvdash \falsum$ for some $\psi \in A$ and $K^{\prime} \cup \{\varphi\} \cup \{\lambda\} \vdash \falsum $ for every $\lambda \in A$ (\textbf{$\cro$-relevance}).
\end{enumerate}

\item $\cro$ is additionally unified iff it satisfies in addition the following:

\begin{enumerate}
\item[] If $K \cup Z \nvdash \falsum$, $A \not \equiv \{ \perp \}$, $A \neq \emptyset$ and it holds for every $\varphi \in Z$ that $\varphi \vdash \neg \psi$ for all $\psi \in A$ , then $K \cro A = (K \cup Z)\cro A$ (\textbf{$\cro$-redundancy}).
\end{enumerate}  

\end{enumerate}
\end{THE}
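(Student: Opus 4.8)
The plan is to prove each biconditional in the customary two directions. The device that makes everything run is that $n(A)$ is quotient-finite whenever $A$ is finite, so every appeal to Theorem~\ref{representation theorem for choice multiple contraction on belief sets} at the set $n(A)$ is legitimate (replace $n(A)$ by a finite $B$ with $B\equiv n(A)$; then $K\angle n(A)=K\angle B$), together with the bookkeeping equivalence ``$X\Vdash n(A)$ iff $X\cup\{\varphi\}\vdash\falsum$ for every $\varphi\in A$'', and the fact (immediate from the definitions) that $\emptyset\nVdash n(A)$ holds exactly when $A\neq\emptyset$ and $A\not\equiv\{\perp\}$.

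\emph{Soundness.} Let $K\cro A=K\cco n(A)\meo A$ with $\cco$ a choice contraction and $\meo$ a consistency-preserving partial expansion. Then $\cro$-inclusion and $\cro$-success fall out of $\cco$-inclusion and the $\meo$-postulates. For $\cro$-consistency: if $A=\emptyset$ then $K\cco n(\emptyset)=K$ and $K\meo\emptyset=K$, so $K\cro\emptyset=K$; if $A\neq\emptyset$ and $A\not\equiv\{\perp\}$ then $\emptyset\nVdash n(A)$, so $\cco$-success yields $K\cco n(A)\nVdash n(A)$, i.e.\ some $\varphi\in A$ has $K\cco n(A)\cup\{\varphi\}$ consistent, and $\meo$-consistency finishes. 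For $\cro$-iteration, note $X\cco n(A)=X$ whenever $X\nVdash n(A)$ (then $X\angle n(A)=\{X\}$) and also when $\emptyset\Vdash n(A)$ (then $X\angle n(A)=\emptyset$); combined with Observation~\ref{observation for internal choice revision} this gives $(K\cap(K\cro A))\cro A=(K\cco n(A))\cco n(A)\meo A=(K\cco n(A))\meo A=K\cro A$. For $\cro$-coincidence, the key is that for consistent $K$ the condition $A\cap K\neq\emptyset$ already forces $K\nVdash n(A)$, hence $K\cco n(A)=K$, so $K\cro A=K\meo A$ and likewise $K\cro B=K\meo B$; $\cro$-coincidence then is $\meo$-coincidence. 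Finally $\cro$-uniformity and $\cro$-relevance are, through Observation~\ref{observation for internal choice revision} and the bookkeeping equivalence, direct transcriptions of $\cco$-uniformity and $\cco$-relevance at $n(A)$, and $\cro$-redundancy comes from $\cco$-redundancy at $n(A)$, the hypothesis on $Z$ being precisely ``each element of $Z$ implies every element of $n(A)$''.

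\emph{Completeness: the contraction.} Assume the postulates, and set $Y_{K,A}:=K\cap(K\cro A)$. Define $\gamma_c$ by $\gamma_c(K,K\angle n(A)):=\{X\in K\angle n(A):Y_{K,A}\subseteq X\}$ on nonempty choice remainder sets of negation sets of finite sets, $\gamma_c(K,\emptyset):=\{K\}$, and by the full set on all other arguments; let $\cco$ be the induced choice contraction. Well-definedness: if $K\angle n(A)=K\angle n(B)$ then, by the upper bound property (Observation~\ref{observation on partial upper bound property of choice remainder set}), $K'\Vdash n(A)$ iff $K'\Vdash n(B)$ for all $K'\subseteq K$, so $\cro$-uniformity gives $Y_{K,A}=Y_{K,B}$. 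By $\cro$-success and $\cro$-consistency, $Y_{K,A}\subseteq K\cro A$ is consistent and meets $A$, so $Y_{K,A}\nVdash n(A)$ and the upper bound property makes $\gamma_c(K,K\angle n(A))$ nonempty; $\cro$-relevance then yields $\bigcap\gamma_c(K,K\angle n(A))=Y_{K,A}$. In the degenerate cases $A=\emptyset$ or $A\equiv\{\perp\}$ (where $K\angle n(A)=\emptyset$) one checks directly that $\cro$-relevance forces $K\subseteq K\cro A$, so $Y_{K,A}=K=\bigcap\gamma_c(K,\emptyset)$. Hence $K\cco n(A)=Y_{K,A}$ for every finite $A$.

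\emph{Completeness: the expansion, and the main obstacle.} For consistent $X$ and finite $A$ put $X\meo A:=X\cup((X\cro A)\cap A)$, realised by $\gamma_e(X,X\Join A):=\{X\cup\{\varphi\}:\varphi\in(X\cro A)\cap A\}$ when that set is nonempty, and by a valid $\Join$-consistency-preserved choice on every other pair (one exists --- for instance, select a single consistent partial sum set when one is available); $\meo$ is then a partial expansion, automatically unified by Observation~\ref{Observation that all selection functions are Join-unified}. Well-definedness of $\gamma_e$ uses that $X\Join A=X\Join B$ forces $X\cup A=X\cup B$ and, via $\cro$-coincidence, $X\cro A=X\cro B$, plus $X\cap A\subseteq X\cro A$ (from $\cro$-relevance, so the listed values for $A$ and $B$ coincide). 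The delicate point --- and the main obstacle of the whole proof --- is that $\gamma_e$ is $\Join$-consistency-preserved, which rests on the lemma: \emph{if some $Z\in X\Join A$ is consistent, then $X\subseteq X\cro A$}; given this, $X\meo A=X\cro A$, which is consistent by $\cro$-consistency (as $A$ meets $Z$, so $A\not\equiv\{\perp\}$). The lemma is wrung out of $\cro$-relevance by taking a witness $\chi\in Z\cap A$ as the quantified $\lambda$: a $\varphi\in X\setminus(X\cro A)$ would, by $\cro$-relevance, give $K'\subseteq X$ with $K'\cup\{\varphi,\chi\}\vdash\falsum$, which is impossible since $K'\cup\{\varphi,\chi\}\subseteq X\cup\{\chi\}\subseteq Z$ is consistent. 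Assembling: $\cro$-iteration gives $Y_{K,A}\cro A=K\cro A$, so $K\cco n(A)\meo A=Y_{K,A}\cup((K\cro A)\cap A)=K\cro A$ (using $(K\cro A)\setminus K\subseteq A$ from $\cro$-inclusion), the required representation. For the unified case, ``only if'' was covered in the soundness paragraph; for ``if'', $\meo$ is already unified, so it suffices to choose $\gamma_c$ $\angle$-unified, which --- given $\cro$-redundancy --- follows the template of the proof of Theorem~\ref{representation theorem for choice multiple contraction on belief sets}: elements of $K_i$ lying outside $\bigcup(K_i\angle n(A_i))$ are $n(A_i)$-redundant and may be added or deleted without effect, so the two sides of the required identity reduce to a common core on which $\cro$-redundancy together with $\cro$-uniformity forces equality, after which $\gamma_c$ is collapsed to a single value on each class of equal choice remainder sets.
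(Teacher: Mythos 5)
Your proof is correct and follows essentially the same route as the paper's: soundness by translating each postulate through the correspondence between $\Vdash n(A)$ and joint inconsistency with elements of $A$, and completeness by building a contraction selection function that recovers $K\cap(K\cro A)$ (using Observation \ref{observation on partial upper bound property of choice remainder set} and $\cro$-uniformity/$\cro$-relevance) and an expansion selection function whose union is $K\cro A$, with the unified case handled via $\cro$-redundancy exactly as in the paper. The only cosmetic differences are that your $\gamma_e$ selects singleton extensions $X\cup\{\varphi\}$ anchored at $X$ via $\cro$-iteration (rather than all subsets of $K\cro A$ in the partial sum set), which forces you to extract the auxiliary lemma ``a consistent member of $X\Join A$ implies $X\subseteq X\cro A$'' from $\cro$-relevance, and that you route $\cro$-redundancy soundness through $\cco$-redundancy instead of recomputing $K\angle n(A)=(K\cup Z)\angle n(A)$ directly; both variants are sound.
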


With the exception of $\cro$-iteration, each of these postulates corresponds to one of the postulates that characterize choice contraction or partial expansion. According to $\cro$-iteration,  let $K^{\prime}$ be the remaining part of $K$ after performing an internal choice revision on $K$, then performing again the choice revision on $K^{\prime}$ will generate the same outcome as performing the choice revision on $K$. This is in conformity with the ``first contract, then expand'' process of the belief change of internal choice revision.

\subsection{External choice revision}

In this subsection, we investigate an alternative construction of choice revision on belief bases. Generally speaking, this construction originates from the idea that we can reverse the procedure of internal choice revision: In order to revise $K$ to take some sentences in $A$, the agent could first properly expand the $K$ with a subset $A^{\prime}$ of $A$, then contract the negation set $n(A^{\prime})$ of $A^{\prime}$. This methodology is not workable in the context of belief sets, since it typically involves a temporary inconsistent belief state, and different inconsistent belief states cannot be distinguished by the only one inconsistent belief set, i.e. $\conp{\falsum}$. The choice revision constructed in this approach is called \textit{external choice revision}. Formally, it is defined as follow:

\begin{DEF}[External choice revision]
An operator $\cro : \mathcal{P}(\mathcal{L}) \times \mathcal{P}(\mathcal{L}) \to \mathcal{P}(\mathcal{L}) $ is an external choice revision iff there exists a package contraction $\pco$ and a partial expansion $\meo$ such that for all $K$ and $A$,

\[K \cro A = K \meo A \pco n(A^{\prime})\]
where $A^{\prime} = (K \meo A) \setminus K$. Moreover, $\cro$ is unified iff $\meo$ and $\pco$ are unified. 
\end{DEF}

Similar to choice revision, we observe that every external choice revision can be reconstructed from a single selection function.

\begin{OBS}\label{observation that external choice revision can be reconstructed by a single selection function}
$\cro$ is an external choice revision iff there exists a selection function $\gamma$ such that:
\[K \cro A = \bigcap\gamma(\bigcup \gamma(K, K \Join A), (\bigcup \gamma(K, K \Join A)) \upVdash n(A^{\prime})),\]
where $A^{\prime} = (\bigcup \gamma(K, K \Join A)) \setminus K $. Moreover, $\cro$ is unified iff $\gamma$ is $\upVdash$-unified.
\end{OBS}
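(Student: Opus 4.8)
The plan is to prove both directions of the equivalence together with the ``unified'' clause by the same route used for Observation \ref{observation that internal choice revision can be reconstructed by a single selection function}; the whole argument is carried by Observation \ref{only need one selection function} (a partial sum set coincides with a package or choice remainder set only when it is a singleton) and Observation \ref{Observation that all selection functions are Join-unified} (every selection function is $\Join$-unified). For the ``if'' direction, given a selection function $\gamma$ of the stated form I would put $K \meo A := \bigcup\gamma(K, K\upVdash A)$ — sorry, $K \meo A := \bigcup\gamma(K, K\Join A)$ and $K \pco A := \bigcap\gamma(K, K\upVdash A)$; by definition these are a partial expansion and a package contraction, and the displayed identity is literally $K\cro A = (K\meo A)\pco n(A')$ with $A' = (K\meo A)\setminus K$, so $\cro$ is an external choice revision.

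For the ``only if'' direction, let $\cro$ be an external choice revision built from a package contraction $\pco$ with selection function $\gamma_p$ and a partial expansion $\meo$ with selection function $\gamma_e$. I would define a single selection function $\gamma$ by cases on its second argument $\mathbb{Y}$: $\gamma(X,\mathbb{Y}) = \{X\}$ if $\mathbb{Y} = \emptyset$; $\gamma(X,\mathbb{Y}) = \gamma_e(X,\mathbb{Y})$ if $\mathbb{Y}$ is a nonempty partial sum set of the form $X\Join A$; $\gamma(X,\mathbb{Y}) = \gamma_p(X,\mathbb{Y})$ if $\mathbb{Y}$ is a nonempty package remainder set $X\upVdash B$ not of the previous form; and $\gamma(X,\mathbb{Y}) = \mathbb{Y}$ otherwise. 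This is a genuine selection function, and it agrees with $\gamma_e$ and with $\gamma_p$ precisely at the pairs where those functions are invoked in the construction of $\cro$: the only possible clash is a pair $(X,\mathbb{Y})$ with $\mathbb{Y} = X\Join A = X\upVdash B \neq \emptyset$, and then Observation \ref{only need one selection function} forces $\mathbb{Y}$ to be a singleton, so $\gamma_e$, $\gamma_p$ and $\gamma$ all return that singleton; and on $\mathbb{Y} = \emptyset$ every selection function returns $\{X\}$. Hence $\bigcup\gamma(K,K\Join A) = \bigcup\gamma_e(K,K\Join A) = K\meo A =: M$, so $A' := (K\meo A)\setminus K = M\setminus K$ is unchanged, and $\bigcap\gamma(M, M\upVdash n(A')) = \bigcap\gamma_p(M, M\upVdash n(A')) = M\pco n(A')$; composing the two steps yields exactly the displayed formula for $K\cro A$ with $A' = (\bigcup\gamma(K,K\Join A))\setminus K$.

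Finally, for the ``unified'' clause: if $\cro$ is unified then $\meo$ and $\pco$ are unified, so $\gamma_p$ may be taken $\upVdash$-unified while $\gamma_e$ is automatically $\Join$-unified by Observation \ref{Observation that all selection functions are Join-unified}, and one checks that the glued $\gamma$ inherits $\upVdash$-unification — the condition is exactly that of $\gamma_p$ on non-singleton package remainder sets, and is trivially satisfied on singletons and on the empty family; conversely, if the $\gamma$ in the statement is $\upVdash$-unified, the $\pco$ extracted in the ``if'' direction is unified and the extracted $\meo$ is unified by Observation \ref{Observation that all selection functions are Join-unified}, so $\cro$ is unified. The only delicate point anywhere is the bookkeeping of the degenerate (empty / singleton) cases when the two selection functions are merged, and that is handled once and for all by Observation \ref{only need one selection function}; so, as in the internal case, I expect no real obstacle.
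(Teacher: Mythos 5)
Your proof is correct and follows exactly the route the paper intends: the paper's own proof is a one-line appeal to the definitions together with Observations \ref{only need one selection function} and \ref{Observation that all selection functions are Join-unified}, and your argument simply spells out the gluing of $\gamma_e$ and $\gamma_p$ into a single selection function, with the potential clash on sets that are simultaneously partial sum sets and package remainder sets resolved by the singleton observation, just as the paper implicitly relies on.
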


\begin{proof}[Proof for Observation \ref{observation that external choice revision can be reconstructed by a single selection function}:]
It follows immediately from the definitions of operations used in the construction of external choice revision, and Observations \ref{only need one selection function} and \ref{Observation that all selection functions are Join-unified}.
\end{proof}

For axiomatization of external choice revision, we first prove the following observation, which plays a role for external choice revision similar to that  Observation \ref{observation for internal choice revision} plays for internal choice revision. 

\begin{OBS}\label{lemma for external choice revision}
Let $\meo$ be a partial expansion, $\pco$ a package  contraction and $\cro$ the external choice revision generated from them. Then,
\begin{enumerate}
\item $(K \meo A) \setminus K = (K \cro A) \setminus K$;
\item $(K \cro A) \cup K = K \meo A$.
\end{enumerate}
\end{OBS}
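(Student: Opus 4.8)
The plan is to abbreviate $E := K \meo A$ and $A' := E \setminus K$, so that by the definition of external choice revision $K \cro A = E \pco n(A')$. I would begin by collecting the inclusions that come for free: $K \subseteq E$ by $\meo$-preservation (Theorem~\ref{Representation theorem for general expansion}), and $K \cro A = E \pco n(A') \subseteq E$ by $\pco$-inclusion (Theorem~\ref{representation theorem for package multiple contraction}). These already deliver one half of each claim, namely $(K \cro A) \setminus K \subseteq E \setminus K = A'$ and $(K \cro A) \cup K \subseteq E$.

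The crux is the reverse inclusion: the package-contraction step deletes no element of $A'$, i.e. $A' \subseteq E \pco n(A')$. I would prove this by contradiction via $\pco$-relevance. Suppose $\varphi \in A'$ but $\varphi \notin E \pco n(A')$. Since $\varphi \in A' \subseteq E$, we have $\varphi \in E \setminus (E \pco n(A'))$, so $\pco$-relevance supplies a set $E'$ with $E \pco n(A') \subseteq E' \subseteq E$, $E' \nvdash n(A')$, and $E' \cup \{\varphi\} \vdash n(A')$. The last fact means that some disjunction $\delta = \neg \varphi_1 \vee \cdots \vee \neg \varphi_n \in n(A')$ (each $\varphi_i \in A'$) lies in $\cn(E' \cup \{\varphi\})$; by the deduction property $\varphi \to \delta \in \cn(E')$, and since $\varphi \to \delta$ is classically equivalent to $\neg\varphi \vee \neg\varphi_1 \vee \cdots \vee \neg\varphi_n$, supraclassicality gives $\neg\varphi \vee \neg\varphi_1 \vee \cdots \vee \neg\varphi_n \in \cn(E')$. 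But $\varphi \in A'$, so this disjunction is itself a member of $n(A')$, whence $E' \vdash n(A')$ --- contradicting $E' \nvdash n(A')$. Hence $A' \subseteq E \pco n(A') = K \cro A$. (When $A' = \emptyset$ there is nothing to prove here; note that $A' = \emptyset$ forces $E = K$ and makes both claims trivial, so this degenerate case is subsumed.)

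Once $A' \subseteq K \cro A$ is in hand, part (1) follows immediately: together with $(K \cro A)\setminus K \subseteq A'$ and the fact that $A' \cap K = \emptyset$ it gives $(K \cro A) \setminus K = A' = (K \meo A) \setminus K$. For part (2), since $K \subseteq E$ we may write $E = K \cup A'$; then $A' \subseteq K \cro A$ yields $E = K \cup A' \subseteq K \cup (K \cro A)$, and combined with $(K \cro A) \cup K \subseteq E$ this gives $(K \cro A) \cup K = E = K \meo A$. The one place that requires genuine thought is the key inclusion, and inside it the observation that $n(A')$ absorbs an extra disjunct $\neg\varphi$ whenever $\varphi \in A'$; that closure property is exactly what converts the relevance witness $E'$ into a contradiction, so I expect it to be the main (and only) obstacle.
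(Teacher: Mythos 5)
Your proof is correct and follows essentially the same route as the paper's: the heart of both arguments is the observation that a relevance/remainder witness $E'$ with $E' \cup \{\varphi\} \vdash n(A')$ already satisfies $E' \vdash n(A')$ because $n(A')$ absorbs the extra disjunct $\neg\varphi$ when $\varphi \in A'$. The only difference is that you invoke the postulate $\pco$-relevance where the paper works directly with the definition of the package remainder set, which lets you skip the paper's case distinction on whether $A'$ is consistent (the inconsistent case, where $(K \meo A) \upVdash n(A') = \emptyset$, is silently covered because $\pco$-relevance is then vacuous).
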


\begin{proof}[Proof for Lemma \ref{lemma for external choice revision}:]
\textit{1.} Let $\Ap = (K \meo A) \setminus K$. Since $K \cro A = (K \meo A) \pco n(\Ap)$, $K \cro A \subseteq K \meo A$ by $\pco$-inclusion. So $(K \cro A) \setminus K \subseteq \Ap$ by basic set theory. Since $\Ap \cap K = \emptyset$, in order to complete the proof, we only need to prove that $\Ap \subseteq K \cro A $. If $A^{\prime} = \emptyset$, it holds trivially. If $A^{\prime} \neq \emptyset$, we should consider two cases. (i) $\Ap$ is inconsistent. In this case, $(K \meo A)\upVdash n(\Ap) = \emptyset$. So $K \cro A = (K \meo A) \pco n(\Ap)=K \meo A $ and hence $\Ap \subseteq K \cro A$. (ii) $\Ap$ is consistent. In this case, $(K \meo A)\upVdash n(\Ap) \neq \emptyset$. We should show that $A^{\prime} \subseteq X$ for every $X \in (K \meo A) \upVdash n(A^{\prime}) $. Suppose towards contradiction that there exists $X \in (K \meo A) \upVdash n(A^{\prime}) $ such that $\Ap \not \subseteq X$, i.e. there exists $\varphi$ such that $\varphi \in \Ap$ and $\varphi \notin X$. Then $X, \varphi \vdash n(\Ap) $ by the definition of package remainder set. It follows from the definition of negation set that there exist $\beta_1 , \cdots, \beta_n \in A$ such that $X, \varphi \vdash \neg \beta_1 \vee \cdots \beta_n$. So, by the deduction property, $X \vdash \varphi \to (\neg \beta_1 \vee \cdots \beta_n)$, i.e. $X \vdash \neg \varphi \vee \neg \beta_1 \vee \cdots \beta_n$. It follows that $X \vdash n(\Ap)$ since $\varphi \in A$. However, it contradicts that $X \nvdash n(\Ap)$ as $X \in (K \meo A) \upVdash n(A^{\prime}) $. Thus, $\Ap = (K \meo A) \setminus K = (K \cro A) \setminus K$.\\
\textit{2.}  It follows from $(K \meo A) \setminus K = (K \cro A) \setminus K$ that $(K \cro A) \cup K  = ((K \cro A) \setminus K) \cup K = ((K \meo A) \setminus K) \cup K $. Since $K \subseteq K \meo A$, $((K \meo A) \setminus K) \cup K = K \meo A$. Thus, $(K \cro A) \cup K = K \meo A$.
\end{proof}

So, as suggested by the representation theorems on package contraction and partial expansion, we can show that external choice revision can be axiomatically characterize by the following postulates.

\begin{THE}[Representation theorem for external choice revision]\label{representation theorem for external choice revision}
$\,$
\begin{enumerate}
\item  $\cro$ is an external choice revision iff it satisfies the following conditions: for all $K$, $K_1$, $K_2$, $A$ and $B$,
\begin{enumerate}
\item[] $K \cro A \subseteq (K \cup A)$ (\textbf{$\cro$-inclusion}).
\item[] If $A \neq \emptyset$, then $A \cap (K \cro A) \neq \emptyset$ (\textbf{$\cro$-success}).
\item[] If $(A \cap (K \cro A)) \subseteq K$, then $K \cro A = K$ (\textbf{$\cro$-confirmation}).
\item[] If $(K \cro A) \setminus  K \neq \emptyset$ and $(K \cro A) \setminus  K \nvdash \falsum$, then $K \cro A \nvdash \falsum$ (\textbf{$\cro$-Consistency}).
\item[] If $A \cap K \neq \emptyset$ and $A \subseteq B \subseteq (A \cup K) $, then $K \cro A = K \cro B$ (\textbf{$\cro$-coincidence}).
\item[] If $K_1 \neq ((K_1 \cro A) \cup K_1 ) =K = ((K_2 \cro B) \cup K_2) \neq K_2$ and it holds for all $K^{\prime} \subseteq K$ that $K^{\prime} \cup ((K_1 \cro A) \setminus K_1) \nvdash \falsum$   iff $K^{\prime} \cup  ((K_2 \cro B) \setminus K_2) \nvdash \falsum$, then $K_1 \cro A = K_2 \cro B$ (\textbf{$\cro$-Uniformity}).
\item[] If $\varphi \in K \setminus K \cro A$, then there is some $K^{\prime}$ with $K \cro A \subseteq K^{\prime} \subseteq K \cup (K \cro A)$, such that $K^{\prime} \nvdash \falsum$  and $K^{\prime} \cup \{\varphi\} \vdash \falsum$ (\textbf{$\cro$-Relevance}).
\end{enumerate}

\item $\cro$ is additionally unified iff it satisfies in addition the following:

\begin{enumerate}
\item[] If it holds for all $X \subseteq \mathcal{L}$ that $X \subseteq K_1 \cup (K_1 \cro A)$ and $X \cup ((K_1 \cro A) \setminus K_1)$ is consistent iff $X \subseteq K_2 \cup (K_2 \cro B)$ and $X \cup ((K_2 \cro B) \setminus K_2)$ is consistent, then $K_1 \cro A = K_2 \cro B$. (\textbf{$\cro$-strong Uniformity}).

\end{enumerate}  

\end{enumerate}

\end{THE}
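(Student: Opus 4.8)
The plan is to reduce the theorem to the representation results already in hand for package contraction (Theorem~\ref{representation theorem for package multiple contraction}) and for partial expansion (Theorem~\ref{Representation theorem for general expansion}), using Observation~\ref{lemma for external choice revision} as a bridge. That observation is what makes the reduction possible: the partial expansion hidden inside $\cro$ can be recovered from $\cro$ itself as $K \meo A = (K \cro A) \cup K$, so, writing $A' := (K \meo A) \setminus K = (K \cro A) \setminus K$ throughout, the only remaining freedom concerns the package contraction performed afterwards by $n(A')$.

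For soundness of part~1, suppose $\cro$ is built from a partial expansion $\meo$ and a package contraction $\pco$. I would verify each postulate by rewriting it via Observation~\ref{lemma for external choice revision} and invoking a single postulate of $\meo$ or of $\pco$: $\cro$-inclusion from $\meo$-inclusion and $\pco$-inclusion; $\cro$-coincidence from $\meo$-coincidence (as $A'$, hence $n(A')$, is unchanged when $A$ is enlarged inside $K\cup A$); $\cro$-confirmation and $\cro$-success from $\meo$-preservation, $\meo$-success and $A' \subseteq K \cro A$ (using also $n(\emptyset) = \taut$ and $Y \upVdash \{\taut\} = \emptyset$ in the degenerate case); $\cro$-Relevance and $\cro$-Uniformity from $\pco$-relevance and $\pco$-uniformity, after noting that ``$X \nvdash n(A')$'' is equivalent to ``$X \cup A'$ is consistent'' by compactness and the deduction property (the translation already used in the proof of Observation~\ref{observation on partial upper bound property of choice remainder set}). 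The only postulate without a direct counterpart is $\cro$-Consistency: if $A'$ is consistent then $\emptyset \nvdash n(A')$, so $\pco$-success gives $K \cro A \nvdash n(A')$, hence $K \cro A \cup A'$ is consistent, hence $K \cro A$ is; if $A'$ is inconsistent the hypothesis of $\cro$-Consistency is vacuous.

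For completeness of part~1, set $K \meo A := (K \cro A) \cup K$; then $\cro$-inclusion, $\cro$-success and $\cro$-coincidence give the four postulates of Theorem~\ref{Representation theorem for general expansion}, so $\meo$ is a partial expansion. It remains to produce a package contraction $\pco$ with $(K \meo A) \pco n(A') = K \cro A$, i.e.\ a selection function $\gamma$ with $\bigcap \gamma(Y, Y \upVdash n(A')) = K \cro A$ whenever $Y = K \meo A$. I would first prove that, when $A'$ is consistent and non-empty, $K \cro A = \bigcap \{ X \in Y \upVdash n(A') : K \cro A \subseteq X \}$: the direction $\supseteq$ is trivial, and for $\subseteq$ one takes $\varphi \in Y \setminus K \cro A = K \setminus K \cro A$, uses $\cro$-Relevance to get a consistent $K'$ with $K \cro A \subseteq K' \subseteq Y$ and $K' \cup \{\varphi\}$ inconsistent, observes that $A' \subseteq K \cro A \subseteq K'$ and $K'$ consistent force $K' \nvdash n(A')$ while $K' \cup \{\varphi\} \vdash n(A')$, and then invokes the upper bound property (Observation~\ref{observation on upper bound property}) to find $X \in Y \upVdash n(A')$ with $K' \subseteq X$, forcing $\varphi \notin X$; the same property applied to $K \cro A$ (which satisfies $K \cro A \nvdash n(A')$ by $\cro$-Consistency and $A' \subseteq K \cro A$) shows this family is non-empty. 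Accordingly I would define $\gamma(Y, \mathbb{Y})$ to be $\{ X \in \mathbb{Y} : K \cro A \subseteq X\}$ whenever $\mathbb{Y} \neq \emptyset$ arises as $Y \upVdash n(A')$ from some admissible $(K, A)$ with $K \neq Y$, and full meet otherwise; the degenerate cases ($A' = \emptyset$, handled by $\cro$-confirmation; $A'$ inconsistent, where $\cro$-Relevance forces $K \subseteq K \cro A$ hence $Y = K \cro A$) produce an empty remainder set, so the full-meet default already returns $K \cro A$. Well-definedness of $\gamma$ is exactly $\cro$-Uniformity: if $(K_1, A_1)$ and $(K_2, A_2)$ give the same $(Y, \mathbb{Y})$, then, since ``$X \nvdash n(A_i')$'' means ``$X \cup A_i'$ consistent'' and since $\{ X \subseteq Y : X \nvdash n(A_i') \}$ is determined by $Y \upVdash n(A_i') = \mathbb{Y}$ (again by the upper bound property), the hypothesis of $\cro$-Uniformity is met, so $K_1 \cro A_1 = K_2 \cro A_2$. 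A $\pco$ built from any selection function is a package contraction by Theorem~\ref{representation theorem for package multiple contraction}, which completes part~1. I expect this construction of $\gamma$ — and the careful bookkeeping between ``$X \nvdash n(A')$'', ``$X \cup A'$ consistent'', and the relevance/uniformity postulates, together with the degenerate cases — to be the main obstacle.

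For part~2, Observation~\ref{Observation that all selection functions are Join-unified} says every partial expansion is $\Join$-unified, so $\cro$ is unified iff its package-contraction component is $\upVdash$-unified, i.e.\ iff it satisfies $\pco$-redundancy. For soundness, $\cro$-strong Uniformity follows because the quantification over all $X \subseteq \mathcal{L}$ in its hypothesis forces $(K_1 \meo A) \upVdash n(A_1') = (K_2 \meo B) \upVdash n(A_2')$, after which $\upVdash$-unifiedness delivers $K_1 \cro A = K_2 \cro B$. For completeness one reruns the construction of $\gamma$ above, now demanding that its value depend only on the pair $(Y, \mathbb{Y})$; this stronger well-definedness is exactly what $\cro$-strong Uniformity (in place of $\cro$-Uniformity) provides.
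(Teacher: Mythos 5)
Your proposal is correct and follows essentially the same route as the paper's proof: soundness by translating each postulate through Observation~\ref{lemma for external choice revision} into a postulate of $\meo$ or $\pco$ (with the $X \nvdash n(A')$ versus ``$X \cup A'$ consistent'' bookkeeping), and completeness by recovering the expansion as $(K \cro A) \cup K$, building the package-contraction selection function from $\cro$-Relevance plus the upper bound property (Observation~\ref{observation on upper bound property}), using $\cro$-(strong) Uniformity for well-definedness, and disposing of the degenerate cases via $\cro$-confirmation and $\cro$-preservation. The only difference is cosmetic: you obtain the expansion component by invoking Theorem~\ref{Representation theorem for general expansion} rather than constructing its selection function explicitly as the paper does, which is a legitimate shortcut.
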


Only $\cro$-confirmation has no correspondent in the set of  postulates which characterizes package contraction or partial expansion. The justification of $\cro$-confirmation is that if the part of the new information which the agent thinks plausible and intends to accept is already in the her original belief state, then nothing needs to be done to incorporate it. If $\cro$ is the sort of internal choice revision discussed in Theorem \ref{representation theorem for internal choice revision}, then $\cro$ also satisfies $\cro$-confirmation: $K \cro A \subseteq K$ follows immediately from $\cro$-inclusion, and $K \subseteq K \cro A $ follows immediately from $\cro$-relevance under the condition that $K$ is consistent.

The following two examples confirm that internal choice revision and external choice revision are indeed different operations: Let $p$, $q$ and $r$ be pairwise distinct atomic propositions,

\begin{enumerate}
\item[] \textbf{Example 1}: Let $K = \{p, \neg q, \neg r \}$ and $A = \{q, r \}$. So $K \angle n(A) =\{\{p, \neg q\}, \linebreak \{p, \neg r\}\}$. Let $\gamma$ be some selection function satisfying that \linebreak $\gamma(K, K \angle n(A)) = K \angle n(A)$ and $ \gamma (\{p\}, \{p\} \Join A) = \{\{p, q\}\}$ and let $\cro$ be the internal choice revision derived from $\gamma$. Then,  $K \cro A = \{p, q\} $. It is easy to check that $\cro$ does not satisfy $\cro$-Relevance. So, by Theorem \ref{representation theorem for external choice revision}, $\cro$ is not an external choice revision.
\\
\item[] \textbf{Example 2}: Let $K = \{q\}$ and $A = \{p, p \to \neg q\}$. Let $\gamma$ be some selection function satisfying that $\gamma (K , K \Join A) = K \Join A$ and let $\cro$ be the external choice revision derived from $\gamma$. Note that $(K \cup A) \upVdash n(A) = \{\{p, p\to \neg q\}\}$. It follows that $K \cro A = \{p, p\to \neg q\}$. It is easy to see that $\cro$ does not satisfy $\cro$-relevance. Moreover, $K$ is consistent and $A$ is finite. So, by Theorem \ref{representation theorem for internal choice revision}, $\cro$ is not an internal choice revision.
\end{enumerate}

\section{Constructing making up one's mind}\label{section constructing making up one's mind}

As we have mentioned, the operation of making up one's mind can be constructed from choice revision in a straightforward way. So, two sorts of making up one's mind operations are naturally generated from the internal and external choice revision as follows.

\begin{DEF}[Making up one's mind]$\,$
\begin{enumerate}
\item An operation $\MU : \mathcal{P}(\mathcal{L}) \times \mathcal{L} \to \mathcal{P}(\mathcal{L}) $ is a (unified) internal making up one's mind operation iff there exists a (unified) internal choice revision $\cro$ such that for every set $K$ and sentence $\varphi$,
\[K \MU \varphi = K \cro \{\varphi , \neg \varphi \}. \]
\item An operation $\MU : \mathcal{P}(\mathcal{L}) \times \mathcal{L} \to \mathcal{P}(\mathcal{L}) $ is a (unified) external external making up one's mind operation iff there exists a (unified) external choice revision $\cro$ such that for every set $K$ and sentence $\varphi$,
\[K \MU \varphi = K \cro \{\varphi , \neg \varphi \}. \]
\end{enumerate}
\end{DEF}

Also, we can obtain the following representation theorems for these two generated making up one's mind operations. Note that though we only provide a partial representation theorem for internal choice revision, a full representation theorem for internal making up one's mind operation is obtainable.

\begin{THE}[Representation theorem for internal making up one's mind]\label{representation theorem for internal making up one's mind}
$\,$
\begin{enumerate}
\item  $\MU$ is an internal making up one's mind operation  iff it satisfies the following conditions:
\begin{enumerate}
\item[] $K \MU \varphi \subseteq (K \cup \{\varphi, \, \neg \varphi\})$ (\textbf{$\MU$-inclusion}).
\item[] $\varphi \in K \MU \varphi $ or $ \neg \varphi \in K \MU \varphi$ (\textbf{$\MU$-success}).
\item[] $\falsum \notin K \MU \varphi$ (\textbf{$\cro$-consistency}).
\item[] If $K \nvdash \falsum$, $\{\varphi, \neg \varphi\} \cap K \neq \emptyset$, $\{\psi, \neg \psi\} \cap K \neq \emptyset$ and $K \cup \{\varphi, \neg \varphi\}  = K \cup \{\psi, \neg \psi\} $, then $K \MU \varphi = K \MU \psi$ (\textbf{$\MU$-coincidence}).
\item[] $K \MU \varphi = ((K \MU \taut) \cap K) \MU \varphi$ (\textbf{$\MU$-iteration}).
\item[] If $\psi \in K \setminus K \MU \varphi$, then there is some $K^{\prime}$ with $(K \MU \taut) \cap K \subseteq K^{\prime} \subseteq K $, such that $K^{\prime} \nvdash \falsum$ and $K^{\prime} \cup \{\psi\} \vdash \falsum$  (\textbf{$\MU$-relevance}).
\end{enumerate}

\item $\MU$ is additionally unified iff it satisfies in addition the following:

\begin{enumerate}
\item[] If $Z \equiv \{ \perp \}$, then $K \MU \varphi = (K \cup Z)\MU \varphi$ (\textbf{$\MU$-redundancy}).
\end{enumerate}  

\end{enumerate}

\end{THE}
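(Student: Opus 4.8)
The natural approach is to leverage the partial representation theorem for internal choice revision (Theorem~\ref{representation theorem for internal choice revision}) together with the defining equation $K \MU \varphi = K \cro \{\varphi, \neg \varphi\}$. The soundness direction (``only if'') should be largely a matter of specialization: given an internal choice revision $\cro$, I would verify each $\MU$-postulate by instantiating the corresponding $\cro$-postulate at the input $A = \{\varphi, \neg \varphi\}$. For instance, $\MU$-inclusion is $\cro$-inclusion at this input; $\MU$-success is $\cro$-success, noting that $\{\varphi, \neg \varphi\} \neq \emptyset$; $\MU$-consistency follows from the fact that $\{\varphi, \neg\varphi\} \not\equiv \{\falsum\}$ is false --- here I need to be careful, since $\{\varphi, \neg\varphi\}$ \emph{is} equivalent to $\{\falsum\}$ only if we read it as the conjunction, but as a set it contains two satisfiable sentences, so the relevant clause is that there is a consistent $X$ with $K \subseteq X \subseteq K \cup \{\varphi, \neg\varphi\}$ (namely $K \cup \{\varphi\}$ or $K \cup \{\neg\varphi\}$ when $K$ is consistent), giving $\falsum \notin K \MU \varphi$; and when $K$ is inconsistent one checks the degenerate case directly. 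The key simplification is that $\MU$-iteration uses $K \MU \taut = K \cro \{\taut, \neg\taut\} = K \cro \{\taut, \falsum\}$, and $K \cap (K \cro \{\taut,\falsum\})$ should coincide with $K \cap (K \cro \{\varphi,\neg\varphi\})$ via $\cro$-uniformity, since for any $K' \subseteq K$, $K' \cup \{\lambda\}$ is consistent for some $\lambda \in \{\taut,\falsum\}$ (namely $\lambda = \taut$, provided $K'$ is consistent) iff $K' \cup \{\mu\}$ is consistent for some $\mu \in \{\varphi,\neg\varphi\}$ (which holds iff $K'$ is consistent). So $\MU$-iteration reduces to $\cro$-iteration plus this uniformity argument, and similarly $\MU$-relevance reduces to $\cro$-relevance combined with Observation~\ref{observation for internal choice revision}.

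The completeness direction (``if'') is the substantive part: given $\MU$ satisfying the six postulates, I must construct an internal choice revision $\cro$ with $K \cro \{\varphi, \neg\varphi\} = K \MU \varphi$. The plan is to build $\cro$ directly via a single $\Join$-consistency-preserving selection function as in Observation~\ref{observation that internal choice revision can be reconstructed by a single selection function}, rather than trying to extend $\MU$ to arbitrary inputs $A$ and invoke Theorem~\ref{representation theorem for internal choice revision} (which only covers finite $A$ and consistent $K$ anyway). Concretely: for each consistent $K$, use $\MU$-iteration and $\MU$-relevance to identify $K \cap (K \MU \varphi)$ as the ``contraction part'' and $(K \MU \varphi) \setminus K$ as the ``expansion part,'' then define $\gamma$ on choice remainder sets $K \angle n(\{\varphi,\neg\varphi\}) = K \angle \{\neg\varphi \vee \neg\psi : \ldots\}$ --- but note $n(\{\varphi,\neg\varphi\})$ is generated by $\neg\varphi$, $\varphi$ (from $\neg\neg\varphi$), and $\varphi \vee \neg\varphi \equiv \taut$, so up to $\equiv$ it is $\{\varphi, \neg\varphi, \taut\}$; hence $K \angle n(\{\varphi,\neg\varphi\})$ behaves like contracting by the ``pair'' $\{\varphi,\neg\varphi\}$. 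The selection function on these remainder sets should be forced to select exactly the sets compatible with $K \cap (K \MU \varphi)$, and the value on partial sum sets $(K \cap (K\MU\varphi)) \Join \{\varphi,\neg\varphi\}$ is forced to recover $(K \MU \varphi) \setminus K$ as the chosen disjuncts. I would then check this $\gamma$ is well-defined (single-valued) using $\MU$-coincidence and $\MU$-uniformity, and $\Join$-consistency-preserving using $\MU$-consistency.

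The main obstacle I anticipate is well-definedness of the selection function across different pairs $(K_1, \varphi_1)$ and $(K_2, \varphi_2)$ that generate the \emph{same} remainder set or the same partial sum set: this is exactly what $\MU$-uniformity and $\MU$-coincidence are designed to handle, but matching them up precisely with the $\angle$-selection and $\Join$-selection requires care, because $\MU$-uniformity is phrased in terms of the consistency condition ``$K' \cup \{\varphi\} \nvdash \falsum$ for some $\varphi \in \{\varphi,\neg\varphi\}$'' rather than directly in terms of remainder sets, and one must translate between the two via Observation~\ref{observation on partial upper bound property of choice remainder set} (the upper bound property, available since $\{\varphi,\neg\varphi\}$, hence $n(\{\varphi,\neg\varphi\})$, is quotient-finite). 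A secondary subtlety is the treatment of inconsistent $K$: for such $K$ the ``first contract'' step is nontrivial and Observation~\ref{observation for internal choice revision} does not apply, so I expect to handle $K \nvdash \falsum$ and $K \vdash \falsum$ as separate cases throughout, with the inconsistent case needing a direct argument that $\MU$-success and $\MU$-consistency still pin down $K \MU \varphi$ up to the available selection freedom. The unified case adds $\MU$-redundancy, which should match $\cco$-redundancy restricted to $Z$ with $Z \equiv \{\falsum\}$ (so every element of $Z$ implies every element of $n(\{\varphi,\neg\varphi\})$), and I would verify it as a routine addendum once the main construction is in place.
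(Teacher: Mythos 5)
Your overall architecture matches the paper's: soundness by specializing the internal-choice-revision postulates to the input $\{\varphi,\neg\varphi\}$, completeness by reading two selection functions (one on choice remainder sets, one on partial sum sets) off the operator $\MU$, with the upper bound property and $\MU$-coincidence/$\MU$-consistency doing the jobs you assign them. But the completeness half has a concrete flaw. You repeatedly invoke a postulate ``$\MU$-uniformity'' that is not in this theorem's list (the postulates are inclusion, success, consistency, coincidence, iteration, relevance); you appear to have imported it from Theorem \ref{representation theorem for internal choice revision} or from the external variant. No uniformity postulate is needed here, because the relevant remainder set collapses: $n(\{\varphi,\neg\varphi\})$ is, up to equivalence, $\{\varphi,\neg\varphi,\taut\}$, so $K \angle n(\{\varphi,\neg\varphi\}) = K \angle \falsum$ for \emph{every} $\varphi$. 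That collapse is precisely what makes your other choice fail: you propose to read the contraction part off as $K \cap (K \MU \varphi)$, but since the remainder set the selection function sees is the same for all $\varphi$, this is well-defined only if $K \cap (K \MU \varphi)$ does not depend on $\varphi$ --- and for inconsistent $K$ the postulates do not force that. Take $K = \{p, \neg p\}$: an internal choice revision may yield $K \MU \taut = \{\taut\}$ and $K \MU p = \{p\}$ (an element discarded in the consolidation step is re-added by the expansion step), so $K \cap (K \MU \taut) = \emptyset$ while $K \cap (K \MU p) = \{p\}$. The correct canonical value, and the reason $\MU$-iteration and $\MU$-relevance are phrased in terms of $K \MU \taut$, is $K \cap (K \MU \taut)$; the paper sets $\gamma(K, K \angle \falsum) = \{Y \mid K \cap (K \MU \taut) \subseteq Y\}$ and $\gamma^{\prime}(K \cap (K \MU \taut), \cdot \Join \{\varphi,\neg\varphi\}) = \{Y \mid Y \subseteq K \MU \varphi\}$, then uses $\MU$-iteration (not uniformity) for well-definedness of $\gamma^{\prime}$ and $\MU$-relevance plus the upper bound property to show $\bigcap \gamma(K, K \angle \falsum) = K \cap (K \MU \taut)$. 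Restricting your identification to consistent $K$ (where $K \cap (K \MU \varphi) = K$ trivially) and deferring the inconsistent case defers exactly the case where the contraction stage does any work.

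On the soundness side your sketch is workable in outline, but note that reducing $\MU$-iteration to $\cro$-iteration plus $\cro$-uniformity only covers consistent $K$, since Theorem \ref{representation theorem for internal choice revision} is restricted to consistent bases, whereas the present theorem is not; the paper instead verifies $\MU$-iteration and $\MU$-relevance directly from the decomposition $K \MU \varphi = (K \cco \falsum) \meo \{\varphi,\neg\varphi\}$, using that $K \cco \falsum$ is always consistent, which handles arbitrary $K$ uniformly. Your treatment of $\MU$-consistency (via the consistency-preserving clause of the partial expansion, since $\{\varphi,\neg\varphi\} \not\equiv \{\falsum\}$) and of $\MU$-redundancy in the unified case is in line with the paper.
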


\begin{THE}[Representation theorem for external making up one's mind]\label{representation theorem for external making up one's mind}$\,$
\begin{enumerate}
\item  $\MU$ is an external making up mind operation iff it satisfies the following conditions:
\begin{enumerate}
\item[] $K \MU \varphi \subseteq (K \cup \{\varphi, \, \neg \varphi\})$ (\textbf{$\MU$-inclusion}).
\item[] $\varphi \in K \MU \varphi $ or $ \neg \varphi \in K \MU \varphi$ (\textbf{$\MU$-success}).
\item[] If $(\phinegphi \cap (K \MU \varphi)) \subseteq K$, then $K \MU \varphi = K$ (\textbf{$\MU$-confirmation}).
\item[] If $(K \MU \varphi) \setminus  K \neq \emptyset$  and $(K \MU \varphi) \setminus  K \nvdash \falsum$, then $K \MU \varphi \nvdash \falsum$ (\textbf{$\MU$-Consistency}).
\item[] If $\{\varphi, \neg \varphi\} \cap K \neq \emptyset$, $\{\psi, \neg \psi\} \cap K \neq \emptyset$ and $K \cup \{\varphi, \neg \varphi\}  = K \cup \{\psi, \neg \psi\} $, then $K \MU \varphi = K \MU \psi$ (\textbf{$\MU$-Coincidence}).
\item[] If $K_1 \neq ((K_1 \MU \varphi) \cup K_1 ) =K = ((K_2 \MU \psi) \cup K_2) \neq K_2$ and it holds for all $K^{\prime} \subseteq K$ that $K^{\prime} \cup ((K_1 \MU \varphi) \setminus K_1) \vdash \falsum$   iff $K^{\prime} \cup  ((K_2 \MU \psi) \setminus K_2) \vdash \falsum$ , then $K_1 \MU \varphi = K_2 \MU \psi$ (\textbf{$\cro$-Uniformity}).
\item[] If $\psi \in K \setminus K \MU \varphi$, then there is some $K^{\prime}$ with $K \MU \varphi \subseteq K^{\prime} \subseteq K \cup (K \MU \varphi)$, such that $K^{\prime} \vdash \falsum$   and $K^{\prime} \cup \{\psi\} \nvdash \falsum$  (\textbf{$\MU$-Relevance}).
\end{enumerate}

\item $\MU$ is additionally unified iff it satisfies in addition the following:

\begin{enumerate}
\item[] If it holds for all $X \subseteq \mathcal{L}$ that $X \subseteq K_1 \cup (K_1 \MU \varphi)$ and $X \cup ((K_1 \MU \varphi) \setminus K_1) \vdash \falsum$  iff $X \subseteq K_2 \cup (K_2 \MU \psi)$ and $X \cup ((K_2 \MU \psi) \setminus K_2) \vdash \falsum$, then $K_1 \MU \varphi = K_2 \MU \psi$ (\textbf{$\MU$-strong Uniformity}).
\end{enumerate}  

\end{enumerate}
\end{THE}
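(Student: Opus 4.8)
The statement is a representation theorem, so the plan is to prove the two inclusions separately, leaning throughout on the corresponding result for external choice revision, Theorem~\ref{representation theorem for external choice revision}.

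\emph{Soundness.} Suppose $\MU$ is an external making up one's mind operation, say $K \MU \varphi = K \cro \{\varphi,\neg\varphi\}$ for an external choice revision $\cro$. Instantiating the $\cro$-postulates of Theorem~\ref{representation theorem for external choice revision} at $A = \{\varphi,\neg\varphi\}$, and at $B = \{\psi,\neg\psi\}$ where a second input occurs, yields $\MU$-inclusion, $\MU$-success (using $\{\varphi,\neg\varphi\} \neq \emptyset$), $\MU$-confirmation, $\MU$-Consistency, the uniformity postulate of the theorem, and $\MU$-Relevance directly. For $\MU$-Coincidence: since $K \cup \{\varphi,\neg\varphi\} = K \cup \{\psi,\neg\psi\}$ gives both $\{\psi,\neg\psi\} \subseteq K \cup \{\varphi,\neg\varphi\}$ and $\{\varphi,\neg\varphi\} \subseteq K \cup \{\psi,\neg\psi\}$, two applications of $\cro$-coincidence (with inputs $\{\varphi,\neg\varphi\}$ and $\{\varphi,\neg\varphi\}\cup\{\psi,\neg\psi\}$, using $\{\varphi,\neg\varphi\}\cap K\neq\emptyset$; then with $\{\psi,\neg\psi\}$ and the same union, using $\{\psi,\neg\psi\}\cap K\neq\emptyset$) give $K\MU\varphi = K\cro(\{\varphi,\neg\varphi\}\cup\{\psi,\neg\psi\}) = K\MU\psi$. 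Finally $\MU$-strong Uniformity follows from $\cro$-strong Uniformity via Observation~\ref{lemma for external choice revision}, which identifies the ``post-expansion set'' $K\cup(K\cro A)$ with $K\meo A$ and the ``accepted part'' $(K\cro A)\setminus K$ with $(K\meo A)\setminus K$; a little care with the polarity of the consistency clause is needed here.

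\emph{Completeness.} Now suppose $\MU$ satisfies all the postulates; the task is to build an external choice revision $\cro$ with $K\cro\{\varphi,\neg\varphi\} = K\MU\varphi$. By the definition of external choice revision it suffices to produce a partial expansion $\meo$ and a package contraction $\pco$ whose induced external choice revision satisfies this (and, by Observation~\ref{observation that external choice revision can be reconstructed by a single selection function}, one may use a single selection function for both halves). Observation~\ref{lemma for external choice revision} forces the expansion piece on the inputs that matter: $K\meo\{\varphi,\neg\varphi\} := K\cup(K\MU\varphi)$, which by $\MU$-inclusion and $\MU$-success is a legitimate member of $K\Join\{\varphi,\neg\varphi\}$; one then extends $\meo$ to all inputs so as to respect $\meo$-coincidence, the lone relational postulate of Theorem~\ref{Representation theorem for general expansion}, with $\MU$-Coincidence ensuring the forced values do not clash. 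Writing $D := (K\MU\varphi)\setminus K$, a short case split disposes of the degenerate situations: if $D=\emptyset$ then $K\MU\varphi = K$ by $\MU$-confirmation and $n(\emptyset)=\taut$ makes the contraction vacuous; if $D$ is inconsistent then $\MU$-Relevance forces $K\subseteq K\MU\varphi$, so $K\cup(K\MU\varphi)=K\MU\varphi$ and the relevant package remainder set is empty; and if $D$ is nonempty and consistent then $K\MU\varphi$ is consistent by $\MU$-Consistency and one needs $\pco$, on the argument $\bigl(K\cup(K\MU\varphi),\, n(D)\bigr)$, to remove exactly $K\setminus(K\MU\varphi)$. That this specification of $\pco$ is consistent with the package-contraction postulates uses the equivalence ``$X\nvdash n(D)$ iff $X\cup D$ is consistent'' together with $D\subseteq K\MU\varphi$: then $\MU$-Relevance supplies exactly the witnesses $\pco$-relevance requires and the uniformity postulate gives $\pco$-uniformity, so Theorem~\ref{representation theorem for package multiple contraction} realises the specification as a genuine package contraction. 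The induced $\cro$ restricts to $\MU$, and the unified case matches $\MU$-strong Uniformity with $\upVdash$-unifiedness of the selection function.

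\emph{Main obstacle.} All the real difficulty is in the completeness direction, and within it in cross-input coherence: $\MU$ constrains behaviour only on the sparse family of inputs $\{\varphi,\neg\varphi\}$, whereas the partial expansion $\meo$ and package contraction $\pco$ extracted from it must satisfy $\meo$-coincidence and $\pco$-uniformity over \emph{all} inputs. So the heart of the argument is checking that the forced partial specifications extend to total operations without ever being pushed toward two incompatible values --- this is mostly bookkeeping, but needs care for the degenerate cases above and for input sets that present as negation-pairs in more than one way (double negations and the like). I would keep this as black-box as possible, feeding the specifications into the completeness halves of Theorems~\ref{representation theorem for package multiple contraction} and~\ref{Representation theorem for general expansion} rather than rebuilding selection functions by hand; Observations~\ref{only need one selection function} and~\ref{Observation that all selection functions are Join-unified} guarantee that the expansion and contraction sides do not interfere and that one selection function indeed suffices.
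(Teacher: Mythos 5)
Your proposal matches the paper's proof in all essentials: soundness by instantiating Theorem~\ref{representation theorem for external choice revision} at $A=\{\varphi,\neg\varphi\}$ (with Lemma~\ref{lemma on interderivability between postulates} handling Coincidence), and completeness by forcing the expansion value to $K\cup(K\MU\varphi)$ and letting the package contraction on $n((K\MU\varphi)\setminus K)$ restore $K\MU\varphi$, with $\MU$-Coincidence and $\MU$-Uniformity securing cross-input coherence, Observation~\ref{lemma for external choice revision} doing the bookkeeping, and the same degenerate-case split. The only difference is presentational: where you would feed partial specifications into the completeness halves of Theorems~\ref{representation theorem for package multiple contraction} and~\ref{Representation theorem for general expansion}, the paper constructs the two selection functions explicitly (taking them full off the forced arguments), so the extension problem you flag reduces to a well-definedness check on those functions; your remark about the polarity of the consistency clauses in $\MU$-Uniformity and $\MU$-Relevance is also consistent with what the paper's proof actually uses.
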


The difference between the internal  and the external making up one's mind operations can be shown by using the following two examples: Let $p_1$, $p_2$, $p_3$ be pairwise distinct atomic propositions,

\begin{enumerate}
\item[] \textbf{Example 3:} Let $K^{\prime} = A = \{p_1, \neg p_1\}$ and let $\cro$ be any internal choice revision. Moreover, let $\MU$ be the internal making up one's mind derived from $\cro$. By the construction of internal choice revision, it is easy to see that either $K \MU p_1 = K \cro A = \{p_1\}$ or $K \MU p_1 = K \cro A = \{\neg p_1\}$. So $\MU$ cannot coincide with any external making up one's mind operation since it violates that $\MU$-confirmation.
\\
\item[] \textbf{Example 4:} Let $K = \{p_1, p_2\}$ and let $\gamma$ be some selection function satisfying that $\bigcup\gamma(K, K \Join \{p_3, \neg p_3\}) = K \cup \{p_3, \neg p_3\} $ and let $\cro$ be external choice revision derived from $\gamma$. Then, the external making up one's mind operation $\MU$ derived from $\cro$ is not an internal choice revision since $K \MU p_3 = K \cro \{p_3, \neg p_3\} = K \cup \{p_3, \neg p_3\}$ which violates $\MU$-consistency.

\end{enumerate}

The last example seemly also suggests that, as a modelling for the operation of making up one's mind,  the internal making up one's mind operators are better than the external operators since the latter violates $\MU$-consistency. Intuitively, to making up one's mind about a sentence $\varphi$ is usually understood as to accept \textit{either} $\varphi$ \textit{or} $\neg \varphi$. Accepting both $\varphi$ and $\neg \varphi$, just like neither believing $\varphi$ nor believing $\neg \varphi$, reflects that the agent has not made a decision between $\varphi$ and $\neg \varphi$.

\section{Conclusions}\label{section conclusions}

In this article we investigated the formal properties of choice revision, \linebreak a sort of non-prioritized multiple revision, in the context of belief bases. We presented constructions of two families of choice revision operators, namely internal choice revision and external choice revision. The constructions are based on two multiple contraction operations already known in the literature and a generalized expansion operation first introduced in this article. This approach to choice revision is different from the method employed in \cite{ 2018arXiv180407602Z}, where the choice revision is modelled by descriptor revision. We have indicated that, at least in its original form, descriptor revision is not suitable for modelling belief changes on belief bases.

Also, we investigated the axiomatic characterizations of the constructed choice revision. For internal choice revision, we provided a partial representation theorem (Theorem \ref{representation theorem for internal choice revision}) for the subset of this kind of operators on consistent belief bases with finite sets as inputs. For external choice revision, a full representation theorem (Theorem \ref{representation theorem for external choice revision}) was obtained. Furthermore, we studied two sorts of making up one's mind operators derived from these two choice revision operators and also proved the corresponding representation theorems (Theorems \ref{representation theorem for internal making up one's mind} and \ref{representation theorem for external making up one's mind}).

For future work, we want to extend Theorems \ref{representation theorem for choice multiple contraction on belief sets} and \ref{representation theorem for internal choice revision} to cover the general case. Furthermore, properties of choice revision operators generated from selection functions with more \textit{relational constraints}, such as those proposed by \cite{alchourron_logic_1985}, remain to investigate. Finally, since the method of construction of internal choice  revision is also applicable to operations on belief sets, it is interesting to study the difference and connection between the internal choice revision on belief sets and the choice revision modelled by descriptor revision.

\section*{Appendix: proofs}

We first give a lemma which is useful in the later proofs.

\begin{LEM}\label{lemma on interderivability between postulates}
Let $\meo$ be  a partial expansion operator and $\cro$ a choice revision operator.
\begin{enumerate}

\item $\meo$ satisfies $\meo$-coincidence iff  it satisfies:
\begin{enumerate}
\item[] If $A \cap K \neq \emptyset$, $B \cap K \neq \emptyset$ and $K \cup A = K \cup B$, then $K \meo A = K \meo B$ (\textbf{$\meo$-Coincidence}).
\end{enumerate} 

\item $\cro$ satisfies $\cro$-coincidence iff it satisfies:
\begin{enumerate}
\item[] If $A \cap K \neq \emptyset$, $B \cap K \neq \emptyset$ and $K \cup A = K \cup B$, then $K \cro A = K \cro B$ (\textbf{$\cro$-Coincidence}).
\end{enumerate} 

\item If $\cro$ satisfies $\cro$-inclusion and $\cro$-relevance, then it satisfies:
\begin{enumerate}
\item[] If $A = \emptyset$, then $K \cro A = K$ (\textbf{$\cro$-vacuity}). 
\end{enumerate}

\item If $\cro$ satisfies $\cro$-consistency and $\cro$-relevance, then it satisfies:
\begin{enumerate}
\item[] If $K \cro A \vdash \falsum$, then $K \subseteq (K \cro A)$ (\textbf{$\cro$-preservation}).
\end{enumerate}

\item If $\cro$ satisfies $\cro$-Relevance, then it satisfies $\cro$-preservation.

\end{enumerate}

\end{LEM}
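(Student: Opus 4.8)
The plan is to dispatch the five items independently; each follows by a short argument from the postulates named in its hypothesis, with no global construction required. \textbf{Items 1 and 2} are the ``coincidence $\Leftrightarrow$ Coincidence'' equivalences, and I would prove both directions by routing through the union $A \cup B$. From $\meo$-coincidence to $\meo$-Coincidence: given $A \cap K \neq \emptyset$, $B \cap K \neq \emptyset$ and $K \cup A = K \cup B$, note that $A \subseteq A \cup B \subseteq K \cup A$ (the second inclusion because $B \subseteq K \cup B = K \cup A$), so $\meo$-coincidence yields $K \meo A = K \meo (A \cup B)$; by the symmetric argument $K \meo B = K \meo (A \cup B)$, and chaining gives $K \meo A = K \meo B$. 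Conversely, from $A \cap K \neq \emptyset$ and $A \subseteq B \subseteq K \cup A$ we get $K \cup A \subseteq K \cup B \subseteq K \cup (K \cup A) = K \cup A$, hence $K \cup A = K \cup B$, and $B \cap K \supseteq A \cap K \neq \emptyset$, so $\meo$-Coincidence applies. Item 2 is the same argument with $\meo$ replaced by $\cro$ throughout.

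\textbf{Items 3--5} all have the shape ``some hypothesis forces $K \subseteq K \cro A$'', and I would prove each by assuming a hypothetical $\varphi \in K \setminus K \cro A$ and deriving a contradiction from the appropriate relevance postulate. For item 3: if $A = \emptyset$ then $\cro$-inclusion gives $K \cro A \subseteq K \cup \emptyset = K$, while $\cro$-relevance applied to such a $\varphi$ would demand a $\psi \in A$, which is impossible; hence $K \cro A = K$. For item 4: if $K \cro A \vdash \falsum$ then the contrapositive of $\cro$-consistency forces $A \equiv \{\perp\}$, so $A \neq \emptyset$ and every $\psi \in A$ satisfies $\psi \vdash \falsum$, hence $K^{\prime} \cup \{\psi\} \vdash \falsum$ for every such $\psi$ and every $K^{\prime}$; but $\cro$-relevance applied to $\varphi$ would supply a $K^{\prime}$ with $K^{\prime} \cup \{\psi\} \nvdash \falsum$ for some $\psi \in A$ --- contradiction --- so $K \subseteq K \cro A$. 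For item 5: if $K \cro A \vdash \falsum$, then $\cro$-Relevance applied to $\varphi$ supplies a $K^{\prime}$ with $K \cro A \subseteq K^{\prime}$ and $K^{\prime} \nvdash \falsum$, but $K \cro A \subseteq K^{\prime}$ together with $K \cro A \vdash \falsum$ forces $K^{\prime} \vdash \falsum$ by monotonicity of $\cn$ --- contradiction --- so again $K \subseteq K \cro A$.

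There is no substantial obstacle here; the only thing to be careful about is bookkeeping --- using the internal $\cro$-relevance and $\cro$-consistency of Theorem \ref{representation theorem for internal choice revision} in items 3 and 4, the external $\cro$-Relevance of Theorem \ref{representation theorem for external choice revision} in item 5, and, in item 4, remembering that $\cro$-consistency delivers $A \equiv \{\perp\}$ (which already entails $A \neq \emptyset$) rather than merely $A = \emptyset$. Once those distinctions are fixed, every step is routine set theory together with monotonicity of $\cn$.
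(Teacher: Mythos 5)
Your proposal is correct and follows essentially the same route as the paper's own proof: items 1 and 2 via the intermediate set $A \cup B$, and items 3--5 by contradiction from the relevant relevance postulate (with the contrapositive of $\cro$-consistency supplying $A \equiv \{\perp\}$ in item 4). No gaps.
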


\begin{proof}[Proof for Lemma \ref{lemma on interderivability between postulates}:]
\textit{1.} \textit{From left to right:} Let $A \cap K \neq \emptyset$, $B \cap K \neq \emptyset$ and $K \cup A = K \cup B$. Then, it follows that $A \subseteq A \cup B \subseteq K \cup A$ and $B \subseteq A \cup B \subseteq K \cup B$. So, by $\meo$-coincidence, $K \meo A = K\meo (A \cup B) = K \meo B$. \textit{From right to left:} Let $K \cap A \neq \emptyset$ and $A \subseteq B \subseteq K \cup A$. Then, it follows immediately that $K \cap B \neq \emptyset$ and $K \cup A = K \cup B$. So, by $\meo$-Coincidence, $K \meo A = K \meo B$.
\\
\textit{2.} Similar to the proof for the equivalence between $\meo$-coincidence and $\meo$-Coincidence.
\\
\textit{3.} The from left to right inclusion direction follows immediately from $\cro$-inclusion and the other inclusion direction follows immediately from $\cro$-relevance.
\\
\textit{4.} Let $K \cro A \vdash \falsum$, then $A \equiv \{\falsum\}$ by $\cro$-consistency. Suppose towards contradiction that $K \setminus (K \cro A) \neq \emptyset$, then, by $\cro$-relevance, there exists some $K^{\prime}$ such that $K^{\prime} \cup \{\varphi\} \nvdash \falsum$ for some $\varphi \in A$. It contradicts that $A \equiv \{\falsum\}$. Thus, $K \subseteq (K \cro A)$.
\\
\textit{5.} Let $K \cro A \vdash \falsum$. Suppose towards contradiction that $K \setminus (K \cro A) \neq \emptyset$, then, by $\cro$-Relevance, there exists some $K^{\prime}$ such that $K \cro A \subseteq K^{\prime}$ and $K^{\prime}  \nvdash \falsum$. It contradicts that $K \cro A \vdash \falsum$. Thus, $K \subseteq (K \cro A)$.
\end{proof}

\begin{proof}[Proof for Theorem \ref{Representation theorem for general expansion}:]
\textit{1.} \textit{From left to right:} Trivial. \textit{From right to left:} Let us define a selection function $\gamma$ in the following way:\\
(i) $\gamma (K, \mb{Y} ) = K$, if $\mb{Y}$ is empty;\\
(ii) $\gamma (K, \mb{Y} ) = \{X \in \mb{Y} \mid X \subseteq K \meo A \} $, if there is some non-empty $A$ such that $\mb{Y} = K \Join A$;\\
(iii) $\gamma (K, \mb{Y} ) =  \mb{Y}$, otherwise.
\\
We need to show that:\\
(1) $\gamma$ is well-defined, i.e. if $\mb{X} = \mb{Y}$, then $\gamma(K, \mb{X}) = \gamma (K, \mb{Y})$;
\\
(2) $\gamma(K, \mb{Y}) = K$ if $\mb{Y} = \emptyset$, which is immediate from the definition;
\\
(3) $\gamma(K, \mb{Y}) \subseteq \mb{Y}$ if $\mb{Y} \neq \emptyset$, which is also immediate from the definition;
\\
(4) $\gamma(K, \mb{Y}) \neq \emptyset$ if $\mb{Y} \neq \emptyset$; and
\\
(5) $K \meo A = \bigcup\gamma(K, K \Join A)$.
\\
For (1) we only need to show that if $K \Join A = K \Join B \neq \emptyset$, then $\gamma(K, K \Join A) = \gamma (K, K \Join B)$. If $A \cap K = \emptyset$, it is immediate from $K \Join A = K \Join B$ that $A = B$. So $K \meo A = K \meo B$, whence $\gamma(K, K \Join A) = \gamma (K, K \Join B)$. On the other hand, if $A \cap K \neq \emptyset$, then $K \in K \Join A$. So $K \in K \Join B$ and hence $B \cap K \neq \emptyset$. Moreover, it follows from $K \Join A = K \Join B$ that $K \cup A = K \cup B$. So, by $\meo$-coincidence and Lemma \ref{lemma on interderivability between postulates}, it holds that $K \meo A = K \meo B$, whence $\gamma(K, K \Join A) = \gamma (K, K \Join B)$.
\\
For (4), it is enough to show that $\gamma(K, K \Join A) \neq \emptyset$ if $A$ is non-empty. Let $A \neq \emptyset$, it follows immediately from $\meo$-inclusion, $\meo$-preservation and $\meo$-success that $K \meo A \in K \Join A$. So $\gamma(K, K \Join A) = \{X \in K \Join A \mid X \subseteq K \meo A \} \neq \emptyset $.
\\
As to (5), we first consider the case $A = \emptyset$. It follows from $A =\emptyset$ that  $K \Join A = \emptyset$. So, by the definition of $\gamma$, $\bigcup\gamma(K, K \Join A) = K$. Moreover, by $\meo$-inclusion and $\meo$-preservation, it follows from $A = \emptyset$ that $K \meo A =K$. So $\bigcup\gamma(K, K \Join A) = K \meo A$. For the principle case $A \neq \emptyset$, as we have shown that $K \meo A \in K \Join A$ when $A \neq \emptyset$, it follows immediately that $\bigcup\gamma(K, K \Join A)  = \bigcup \{X \in K \Join A \mid X \subseteq K \meo A \} = K \meo A $.
\\
\\
\textit{2.} \textit{From left to right:} Trivial. \textit{From right to left:} Define the selection function $\gamma$ in the same way as that in the proof for the first part. We omit checking the details.
\end{proof}

\begin{proof}[Proof for Theorem \ref{representation theorem for internal choice revision}:]
\textit{1. From left to right:} Let $\cro$ be an internal choice revision on consistent belief bases with finite sets as inputs, i.e. there exist a consistency-preserving partial expansion $\meo$ and a package contraction $\pco$ such that  for every consistent $K$ and finite  $A$, $K \cro A = K \cco n(A) \meo A.$
\\
\textit{$\cro$-inclusion:} By $\meo$-inclusion, $ K \cco n(A) \meo A \subseteq  (K \cco n(A)) \cup A$. Moreover, $K \cco n(A) \subseteq K $ by $\cco$-inclusion. Thus, $K \cro A = K \cco n(A) \meo A  \subseteq K \cup A$.
\\
\textit{$\cro$-success:} It follows immediately from $\meo$-success and the definition of internal choice revision.
\\
\textit{$\cro$-iteration:}  It is easy to see that $(K \cco A) \cco A = K \cco A$ for every $A$. So, $K \cro A = K \cco n(A) \meo A = (K \cco n(A)) \cco n(A) \meo A$. By Observation \ref{observation for internal choice revision}, $ (K \cco n(A)) \cco n(A) \meo A=  (K \cap (K \cro A)) \cco A \meo A$. Thus, $K \cro A = (K \cap (K \cro A)) \cco A \meo A= (K \cap (K \cro A)) \cro A$.
\\ 
\textit{$\cro$-consistency:} Since $K$ is consistent, by the definition of $\cco$, $K \cco n(A)$ is consistent for every $A$. Given that $K \cco n(A)$ is consistent and $\meo$ is consistency-preserving,  it is easy to see that $K \cco n(A) \meo A \vdash \falsum$ only if $A \equiv \{\perp\}$.  
\\
\textit{$\cro$-coincidence:} Let $A \cap K \neq \emptyset$, $B \cap K \neq \emptyset$ and $A \cup K = B \cup K$. It follows from  $A \cap K \neq \emptyset$, $B \cap K \neq \emptyset$ and $K$ is consistent that $K \cco n(A) = K = K \cco n(B)$. By Lemma \ref{lemma on interderivability between postulates}, $\meo$ satisfies $\meo$-Coincidence. So, $K \cco n(A) \meo A = K \cco n(B) \meo B$, i.e. $K \cro A = K \cro B$.
\\
\textit{$\cro$-uniformity:} Assume $(\star)$ that it holds for all $K^{\prime} \subseteq K$ that $K^{\prime} \cup \{\varphi\} \vdash \falsum$ for every $\varphi \in A$ iff $K^{\prime} \cup \{\psi\} \vdash \falsum$ for every $\psi \in B$. By Observation \ref{observation for internal choice revision}, $ K \cap (K \cro A) = K \cco n(A)$ for every $A$. So, in order to complete the proof, we only need demonstrate that $K \angle n(A) = K \angle n(B)$. We need consider two cases. (i) $A$ is empty or $B$ is empty. Without loss of generality, let $A = \emptyset$. Then, it holds vacuously that $\emptyset \cup \{\varphi\} \vdash \falsum$ for every $\varphi \in A$. So, by $(\star)$, either $B$ is empty or $B \equiv \{\perp\}$, whence $n(B) \equiv \{\taut\} $, i.e. $n(B) \equiv n(A)$. Hence, $K \angle n(A) = K \angle n(B)$. (ii) $A \neq\emptyset$ and $B \neq \emptyset$. Then, it follows from $(\star)$ that for all $K^{\prime} \subseteq K$, $K^{\prime} \Vdash n(A)$ iff $K^{\prime} \Vdash n(B)$. So, $K \angle n(A) = K \angle n(B)$.
\\
\textit{$\cro$-relevance:} Let $\varphi \in K \setminus K \cro A$, then $K \cco n(A) \neq K$, i.e. $K  \angle n(A) \neq \emptyset$. So there exists $X \in K  \angle n(A)$ such that $K \cco A \subseteq X$ and $\varphi \notin X$.  By Observation \ref{observation for internal choice revision}, $K \cap (K \cro A) = K \cco A$, so it follows from $K \cco A \subseteq X$ that $K \cap (K \cro A)  \subseteq X$. Moreover, it follows from $X \in K \angle n(A)$ and $\varphi \in K \setminus X$ that $X \nVdash n(A)$ and $X \cup \{\varphi\} \Vdash n(A)$, i.e. $X \cup \{\psi\}\nvdash \falsum$ for some $\psi \in A$ and $X \cup \{\varphi\} \cup \{\lambda\} \vdash \falsum$ for every $\lambda \in A$. Thus, $X$ is the set we are looking for.
\\
\\
\textit{From right to left:} 
Let $\cro$ be some operator satisfying the proposed postulates. Let us define a selection function $\gamma$ in the following way:\\
(i) $\gamma(X, \mb{Y}) = X$ if $\mb{Y} = \emptyset$.\\
(ii) If there exist consistent $K$ and finite $A$ such that $X = K $ and $\mb{Y} = K \angle n(A) \linebreak \neq \emptyset$, then $\gamma(X, \mb{Y}) = \{Y \in \mb{Y} \mid K \cap (K \cro A) \subseteq Y \}$.\\
(iii) Otherwise, $\gamma(X, \mb{Y}) = \mb{Y}$.
\\
We need to show that:\\
(1) $\gamma$ is well-defined, i.e. if $\mb{X} = \mb{Y}$, then $\gamma(X, \mb{X}) = \gamma(X, \mb{Y})$;\\
(2) $\gamma(X, \mb{Y}) = X$ if $\mb{Y} = \emptyset$, which is immediate from the definition;\\
(3) $\gamma(X, \mb{Y}) \subseteq \mb{Y}$ if $\mb{Y} \neq \emptyset$, which is also immediate from the definition; and\\
(4) $\gamma(X, \mb{Y}) \neq \emptyset$ if $\mb{Y} \neq \emptyset$.\\
For (1), let $K$ be consistent and both $A$ and $B$ be finite sets, and let $X = K$ and let $\mb{X} = K \angle n(A) = K \angle n(B) = \mb{Y} \neq \emptyset$. Then $A \neq \emptyset$, $B \neq \emptyset$. Moreover, by Observation \ref{observation on partial upper bound property of choice remainder set}, it follows that for all $K^{\prime} \subseteq K$, $K^{\prime} \nVdash n(A)$ iff $X^{\prime} \nVdash n(B)$, i.e.  $K^{\prime} \cup \{\varphi\} \vdash \falsum$ for every $\varphi \in A$ iff $K^{\prime} \cup \{\psi\} \vdash \falsum$ for every $\psi \in B$. So, by $\cro$-uniformity, $K \cap (K \cro A) = K \cap (K \cro B)$. By the definition of $\gamma$, it follows immediately that $\gamma(K, K \angle A) = \gamma(K, K \angle B)$.
\\
For (4), let $K$ be consistent and $A$ be finite, and let $K \angle n(A) \neq \emptyset$, we need to show that $\gamma(K, K \angle n(A)) \neq \emptyset$. It follows from $K \angle n(A) \neq \emptyset$ that $A \neq \emptyset$ and $A \not \equiv \{\perp\}$. So, by $\cro$-success$, K \cro A \vdash A$ and, by $\cro$-consistency, $K \cro A$ is consistent. It follows that $K \cro A \nVdash n(A)$ and hence $K \cap (K \cro A) \nVdash n(A)$. So, by Observation \ref{observation on partial upper bound property of choice remainder set}, there exists $K^{\prime}$ such that $K \cap (K \cro A) \subseteq K^{\prime} \subseteq K$ and $K^{\prime} \in K \angle n(A)$, i.e. $\gamma(K, K \angle n(A)) \neq \emptyset$.
\\
Furthermore, let us define another selection function $\gammap$ in the following way:\\
(i) $\gamma^{\prime}(X, \mb{Y}) = X$ if $\mb{Y} = \emptyset$.\\
(ii) If there exist a consistent $K$ and a finite $A$ such that $X = K \cap (K \cro A)$ and $\mb{Y} = X \Join A \neq \emptyset$, then $\gamma^{\prime}(X, \mb{Y}) = \{Y \in \mb{Y} \mid Y \subseteq K \cro A \}$.\\
(iii) If $\{Y \in \mb{Y} \mid Y \mbox{ is consistent}\} \neq \emptyset$ and there exists no consistent $K$ and  finite $A$ such that $X = K \cap (K \cro A)$ and $\mb{Y} = X \Join A \neq \emptyset$, then $\gamma^{\prime}(X, \mb{Y}) \in \{Y \in \mb{Y} \mid Y \mbox{ is consistent}\}$.\\
(iv) Otherwise, $\gamma^{\prime}(X, \mb{Y}) = \mb{Y}$.
\\
We need show that:\\
($1^{\prime}$) $\gamma^{\prime}$ is well-defined, i.e. if $\mb{X} = \mb{Y}$, then $\gammap (X, \mb{X}) = \gammap (X, \mb{Y})$;\\
($2^{\prime}$) $\gammap (X, \mb{Y}) = K$ if $\mb{Y} = \emptyset$, which is immediate from the definition;\\
($3^{\prime}$) $\gammap (X, \mb{Y}) \subseteq \mb{Y}$ if $\mb{Y} \neq \emptyset$, which is also immediate from the definition;\\
($4^{\prime}$) $\gammap (X, \mb{Y}) \neq \emptyset$ if $\mb{Y} \neq \emptyset$; and\\
($5^{\prime}$) $\gammap$ is $\Join$-consistency-preserved, i.e. $\bigcup \gammap (X, \mb{Y})$ is consistent, when there exists a set $A$ such that $\mb{Y} = X \Join A$ and $\{Y \in \mb{Y} \mid Y \mbox{ is consistent}\} \neq \emptyset$.\\ 
For $(1^{\prime})$, let $K_1$ and $K_2$ be consistent, let $A$ and $B$ be finite, and let $K = K_1 \cap (K_1 \cro A) = K_2 \cap (K_2 \cro B)$ and $K \Join A = K \Join B \neq \emptyset$, in order to prove $\gammap (K, K_1 \cap (K_1 \cro A)) = \gammap (K, K_2 \cap (K_2 \cro B))$, by the definition of $\gammap$, we only need to show that $K_1 \cro A = K_2 \cro B$. There are two cases. (i) $A \cap K = \emptyset$. Then, it follows from $K \Join A = K \Join B $ that $B \cap K =\emptyset$. By Lemma \ref{lemma on interderivability between postulates}, $\cro$ satisfies $\cro$-vacuity, whence it follows that $K \cro A =K = K \cro B$. Hence, by $\cro$-iteration, $K_1 \cro A = K \cro A =  K \cro B = K_2 \cro B$. (ii) $A \cap K \neq \emptyset$. Then it follows from $K \Join A = K \Join B \neq \emptyset$ that $B \cap K \neq \emptyset$ and $A \cup K = B \cup K$. By Lemma \ref{lemma on interderivability between postulates}, $\cro$ satisfies $\cro$-Coincidence. So, $K \cro A = K \cro B$ and hence, by $\cro$-iteration, $K_1 \cro A = K_2 \cro B$.
\\
For $(4^{\prime})$, let $K$ be consistent and $A$ be finite, and let $(K \cap (K \cro A)) \Join A \neq \emptyset$, in order to complete the proof, it is enough to show that $K \cro A \in (K \cap (K \cro A)) \Join A$. So we only need to show (a) $K \cro A \subseteq A \cup (K \cap (K \cro A))$ which is immediate from $\cro$-inclusion; (b) $(K \cap (K \cro A)) \subseteq K \cro A$ which is trivial; and (c) $A \cap (K \cro A) \neq \emptyset$. It follows from $(K \cap (K \cro A)) \Join A \neq \emptyset$ that $A$ is non-empty. So, (c) follows immediately from $\cro$-success. 
\\
As to $(5^{\prime})$, let $K$ be consistent and $A$ be finite, and let $(K \cap (K \cro A) ) \Join A \neq \emptyset$. Moreover, assume that $(\dagger)$ there exists a consistent $X \in (K \cap (K \cro A) ) \Join A$. It has been shown in the proof for $(4^{\prime})$ that it follows that $K \cro A \in (K \cap (K \cro A) ) \Join A$. In order to complete the proof, we only need to show that $K \cro A$ is consistent. It follows immediately from $(\dagger)$ that $A \not \equiv \{\perp\}$. So, by $\cro$-consistency, $K \cro A$ is consistent.
\\
\\
To finish the proof,  we need to show that for every consistent $K$ and finite $A$,
\\
$(\star)$ $K \cro A = \bigcup\gamma^{\prime}(\bigcap \gamma(K, K \angle n(A)), \bigcap \gamma(K, K \angle n(A)) \Join A))$ holds. 
\\
We first prove that $\bigcap \gamma(K, K \angle n(A)) = K \cap (K \cro A)$. Let us consider two cases according to whether $K \angle n(A)$ is empty. (i) $K \angle n(A) = \emptyset$. Then, by the definition of $\gamma$, $\bigcap \gamma(K, K \angle n(A)) = K$. Moreover, it follows from $K \angle n(A)$ that $A = \emptyset$ or $A \equiv \{\perp\}$. If $A = \emptyset$, then $K \cro A =K$ by $\cro$-vacuity which $\cro$ satisfies according to Lemma \ref{lemma on interderivability between postulates}. If $A \equiv \{\perp\}$, then $K \cro A \vdash \falsum$ by $\cro$-success. Moreover, by Lemma \ref{lemma on interderivability between postulates}, $\cro$ satisfies $\cro$-preservation. It follows that $K \subseteq K \cro A$. So in each case it follows that $K = K \cap (K \cro A)$, i.e. $\bigcap \gamma(K, K \angle n(A)) = K \cap (K \cro A)$. (ii) $K \angle n(A) \neq \emptyset$. Then, the from right to left inclusion direction follows immediately from the definition of $\gamma$. For the other inclusion direction, suppose that $(\ddagger)$ there exists $\varphi \notin K \cap (K \cro A)$ such that $\varphi \in X$ for every $X \in \gamma(K,  K \angle n(A))$. It follows immediately that $\varphi \in K \setminus K \cro A$. So, by $\cro$-relevance, there is $K^{\prime}$ such that $K \cap (K \cro A) \subseteq K^{\prime} \subseteq K $, $K^{\prime} \cup \{\psi\} \nvdash \falsum$ for some $\psi \in A$ and $K^{\prime} \cup \{\varphi\} \cup \{\lambda\} \vdash \falsum$ for every $\lambda \in  A$. Note that $A \neq \emptyset$ since $K \angle n(A) \neq \emptyset$. So, it follows that $K^{\prime} \nVdash n(A)$ and $K^{\prime} \cup \{\varphi\} \Vdash n(A)$. So, by Observation \ref{observation on partial upper bound property of choice remainder set}, there exists $Y$ such that $ K^{\prime } \subseteq Y \subseteq K$ and $Y \in K \angle n(A)$. Also, it follows from $K \cap (K \cro A) \subseteq K^{\prime} \subseteq Y$ that $Y \in \gamma(K, K \angle n(A))$. Hence, by $(\ddagger)$, $\varphi \in Y$. However, it follows from $K^{\prime} \cup \{\varphi\} \Vdash n(A)$ and $K^{\prime}\cup \{\varphi\} \subseteq Y$ that $Y \Vdash n(A)$ which contradicts that $Y \in K \angle n(A)$. Thus, there is no such kind of $\varphi$, i.e. $\bigcap \gamma(K, K \angle n(A)) \subseteq K \cap (K \cro A)$. So $(\star)$ holds iff\\
$(\star \star)$ $K \cro A = \bigcup\gamma^{\prime}(K \cap (K \cro A), K \cap (K \cro A) \Join A))$ holds.\\
As to $(\star \star)$, we consider two cases. (i) $A = \emptyset$. Then, by $\cro$-vacuity, $K \cro A = K$ and $K \cap (K \cro A) \Join A = \emptyset$. So $(\star \star)$ is immediate from the definition of $\gammap$. (ii) $A \neq \emptyset$. Then, $K \cap (K \cro A) \Join A \neq \emptyset$. In this case, as we have shown in the proof for $(4^{\prime})$, $K \cro A \in K \cap (K \cro A) \Join A  $. So, by the definition of $\gammap$, it holds that $\bigcup\gamma^{\prime}(K \cap (K \cro A), K \cap (K \cro A) \Join A)) = \bigcup\{ X \in K \cap (K \cro A) \Join A) \mid X \subseteq K \cro A \} = K \cro A$.  
\\
\\
\textit{2. From left to right:} We only need to check $\cro$-redundancy: Assume $K \cup Z$ is consistent, $A \not \equiv \{\perp\}$, $A \neq \emptyset$ and it holds for every $\varphi \in Z$ that $\varphi \vdash \neg \psi$ for every $\psi \in A$. We first show that $K \angle n(A) = (K \cup Z) \angle n(A)$. Since $A \not \equiv \{\perp\}$ and $A \neq \emptyset$, it follows that $K \angle n(A) \neq \emptyset$ and $ (K \cup Z) \angle n(A) \neq \emptyset$. Let $X \in K \angle n(A)$. Then, $X \nVdash n(A)$ and $X \subseteq K \subseteq (K \cup Z)$. Since it holds for every $\varphi \in Z$ that $\varphi \vdash \neg \psi$ for every $\psi \in A$, it follows that $X \cup \{\varphi\} \Vdash n(A)$ for every $\varphi \in Z$. Moreover, it follows from $X \in K \angle n(A)$ that $X \cup \{\varphi\} \Vdash n(A)$ for every $\varphi \in K \setminus X$. So $X \cup \{\varphi\} \Vdash n(A)$ for every $\varphi \in (K \cup Z) \setminus X$. It follows that $X \in (K \cup Z) \angle n(A)$. For the other inclusion direction, let $X \in (K \cup Z) \angle n(A)$. In order to show that $X \in K \angle n(A)$, it is enough to prove that $X \subseteq K$, which follows directly from that it holds for every $\varphi \in Z$ that $\varphi \vdash \neg \psi$ for every $\psi \in A$. So, $K \angle n(A) = (K \cup Z) \angle n(A) \neq \emptyset$. Hence, by Observation \ref{observation for internal choice revision}, it follows that $K \cap (K \cro A) = (K \cup Z) \cap ((K \cup Z) \cro A)$. So, by $\cro$-iteration, it follows that $K \cro A = (K \cup Z) \cro A$. 
\\
\\
\textit{From right to left:} Let $\gamma$ be constructed in the same way as the $\gamma$ in the proof of the first part. By observation \ref{Observation that all selection functions are Join-unified}, in order to complete the proof, it is enough to show that $\gamma$ is $\angle$-unified. So, let us suppose $(\star)$ $K_1 \angle n(A) = K_2 \angle n(B) \neq \emptyset$, we will prove that $K_1 \cap (K_1 \cro A) = K_2 \cap (K_2 \cro B)$. By observation \ref{observation on partial upper bound property of choice remainder set}, it holds for every $\varphi \in K_1 \setminus \bigcup (K_1 \angle n(A))$ that $\varphi \Vdash n(A)$. Moreover, it follows from $K_1 \angle n(A) \neq \emptyset$ that $A \not \equiv \{\perp\}$ and $A \neq \emptyset$. Hence, by $\cro$-redundancy, $K_1 \cro A = (\bigcup (K_1 \angle n(A)) \cro A$. By $\cro$-consistency and $\cro$-success, it follows from $A \not \equiv \{\perp\}$ and $A \neq \emptyset$ that $K_1 \cap (K_1 \cro A) \nVdash n(A)$. So, by Observation \ref{observation on partial upper bound property of choice remainder set}, $K_1 \cap (K_1 \cro A) \subseteq \bigcup (K_1 \angle n(A))$. It follows from this and $K_1 \cro A = (\bigcup (K_1 \angle n(A)) \cro A$ that (a) $K_1 \cap (K_1 \cro A) = (\bigcup (K_1 \angle n(A)) \cap ((\bigcup (K_1 \angle n(A)) \cro A)$. It can be obtained by a similar argument that (b) $K_2 \cap (K_2 \cro B) =(\bigcup (K_2 \angle n(B)) \cap ((\bigcup (K_2 \angle n(B)) \cro B)$. Let $X $ be any subset of $ \bigcup (K_1 \angle n(A)$. By Observation \ref{observation on partial upper bound property of choice remainder set}, $X \nVdash n(A)$ iff $X \subseteq Y$ for some $Y \in K_1 \angle n(A)$ iff $X \subseteq Y$ for some $Y \in K_2 \angle n(B) $ by $(\star)$ iff $X \nVdash n(B)$. So, by $(\star)$ and $\cro$-uniformity, $(\bigcup (K_1 \angle n(A)) \cro A = (\bigcup (K_2 \angle n(B)) \cro B$ and hence (c) $(\bigcup (K_1 \angle n(A)) \cap ((\bigcup (K_1 \angle n(A)) \cro A) = (\bigcup (K_2 \angle n(B)) \cap ((\bigcup (K_2 \angle n(B)) \cro B)$. It follows from (a), (b) and (c) that $K_1 \cap (K_1 \cro A) = K_2 \cap (K_2 \cro B)$.
\end{proof}

\begin{proof}[Proof for Theorem \ref{representation theorem for external choice revision}:]
\textit{1. From left to right:} Let $\cro$ be an external choice revision, i.e. there exist a partial expansion $\meo$ and a package contraction $\pco$ such that  for every sets $K$ and $A$, $K \cro A = K \meo A \pco n(A^{\prime})$, where $A^{\prime} = (K \meo A) \setminus K$.
\\
\textit{$\cro$-inclusion:} By the definition of $\cro$ and $\pco$-inclusion, $K \cro A \subseteq K \meo A$. Moreover, $K \meo A \subseteq K \cup A$ by $\meo$-inclusion. Thus, $K \cro A \subseteq K \cup A$.
\\
\textit{$\cro$-success:} Let $A \neq \emptyset$, then $A \cap (K \meo A) \neq \emptyset$ by $\meo$-success. If $A^{\prime} = \emptyset$, then $K \cro A = (K \meo A) \pco \{\taut \} = K \meo A$. It follows that $A \cap (K \cro A) \neq \emptyset$. If $A^{\prime} \neq \emptyset$, by Observation \ref{lemma for external choice revision}, $\Ap \subseteq K \cro A$. Moreover, $\Ap \subseteq A$ by $\meo$-inclusion. It follows that $A \cap (K \cro A) \neq \emptyset$ holds.
\\
\textit{$\cro$-confirmation:} Let $(A \cap (K \cro A) \subseteq K$.    Then, $A \cap (K \meo A) \subseteq K$. It follows that $K \meo A = K$. Thus, $K \cro A = K \meo A \pco A^{\prime} = K \pco \{\taut\} = K$.  
\\
\textit{$\cro$-Consistency:} Let $(K \cro A) \setminus K$ is non-empty and consistent. By Observation \ref{lemma for external choice revision}, $\Ap = (K \meo A) \setminus K $ is non-empty and consistent. It follows that $(K \meo A) \upVdash n(\Ap) \neq \emptyset$. So $ K \meo A \pco n(\Ap) \nvdash \falsum$, i.e. $K \cro A$ is consistent. 
\\
\textit{$\cro$-coincidence:} Let $K \cap A \neq \emptyset$, $K \cap B \neq \emptyset$ and $A \setminus K = B \setminus K$. By Lemma \ref{lemma on interderivability between postulates}, $\meo$ satisfies $\meo$-Coincidence. So, $K \meo A = K \meo B$. So $(K \meo A) \setminus K = (K \meo B) \setminus K$. It follows that $K \cro A = K \cro B$ by the definition of $\cro$.
\\
\textit{$\cro$-Uniformity:} Assume that $(\dagger)$ $K_1 \neq ((K_1 \cro A) \cup K_1 ) =K = ((K_2 \cro B) \cup K_2) \neq K_2$ and that $(\ddagger)$ it holds for all $K^{\prime} \subseteq K$ that $K^{\prime} \cup ((K_1 \cro A) \setminus K_1) \vdash \falsum$  iff $K^{\prime} \cup  ((K_2 \cro B) \setminus K_2) \vdash \falsum$. By Observation \ref{lemma for external choice revision}, it follows from $(\dagger)$ that $K = \Kone \meo A = \Ktwo \meo B$. So, in order to prove that $\Kone \cro A = \Ktwo \cro B$, we only need to show that $K \upVdash n(\Ap) = K \upVdash n(\Bp)$ where $\Ap = (\Kone \meo A) \setminus K_1$ and $\Bp = (K_2 \meo B) \setminus K_2$. By Observation \ref{lemma for external choice revision}, it follows from $(\ddagger)$ that for all $K^{\prime} \subseteq K$, $K^{\prime} \cup A^{\prime} \vdash \falsum$  iff $K^{\prime} \cup \Bp \vdash \falsum$. Moreover, it follows from $K_1 \neq ((K_1 \cro A) \cup K_1 ) $ and $ ((K_2 \cro B) \cup K_2) \neq K_2$ that both $\Ap$ and $\Bp$ are non-empty. So, it follows that for all $K^{\prime} \subseteq K$, $K^{\prime} \vdash n(\Ap)$ iff $K^{\prime} \vdash n(\Bp)$. So, $K \upVdash n(\Ap) = K \upVdash n(\Bp)$.
\\
\textit{$\cro$-Relevance:} Let $\varphi \in K \setminus K \cro A$, then $K \meo A \upVdash n(A^{\prime}) \neq \emptyset$. So there exists $X \in K \meo A \upVdash n(A^{\prime})$ such that $K \cro A \subseteq X$ and $\varphi \notin X$. We need to show that (i) $X \subseteq K \cup K \cro A$, (ii) $X \nvdash \falsum$ and (iii) $X \cup \{\varphi\} \vdash \falsum$. (ii) is immediate from the definition of package remainder set. For (i) we only need to prove that $K \cup (K \cro A)  = K \meo A$, which is part of Observation \ref{lemma for external choice revision}. For (iii), we first show that $\varphi \in K \meo A$, which follows immediately from $\varphi \in K$ by $\meo$-preservation. So, by the definition of package remainder set, $X \cup \{\varphi\} \vdash n(A^{\prime})$. By Observation \ref{lemma for external choice revision}, $A^{\prime} \subseteq K \cro A$. So, it follows from $K \cro A \subseteq X$ that $X \Vdash A^{\prime}$. Moreover, since $(K \meo A) \upVdash n(A^{\prime})$ is not empty, it follows that $A^{\prime} \neq \emptyset$. So, $X \cup \{\varphi\}$ is inconsistent.  
\\
\\
\textit{From right to left:} Let us define a selection function $\gamma$ in the following way:\\
(a) $\gamma(K, \mb{Y}) = K$ if $\mb{Y} = \emptyset$.\\
(b) If there exists $A$ such that $\mb{Y} = K \Join A \neq \emptyset$, then $\gamma(K, \mb{Y}) = \{Y \in \mb{Y} \mid Y \subseteq K \cup (K \cro A) \}$.\\
(c) Otherwise, $\gamma(K, \mb{Y}) = \mb{Y}$.
\\
We need to show that:\\
(1) $\gamma$ is well-defined, i.e. if $\mb{X} = \mb{Y}$, then $\gamma(K, \mb{X}) = \gamma (K, \mb{Y})$;\\
(2) $\gamma(K, \mb{Y}) = K$ if $\mb{Y} = \emptyset$, which is immediate from the definition;\\
(3) $\gamma(K, \mb{Y}) \subseteq \mb{Y}$ if $\mb{Y} \neq \emptyset$, which is also immediate from the definition;\\
(4) $\gamma(K, \mb{Y}) \neq \emptyset$ if $\mb{Y} \neq \emptyset$;\\
For (1) we only need to show that if $K \Join A = K \Join B \neq \emptyset$, then $\gamma(K, K \Join A) = \gamma (K, K \Join B)$. If $A \cap K = \emptyset$, it is immediate from $K \Join A = K \Join B$ that $A = B$. So $K \cup (K \cro A) = K \cup (K \cro B)$, whence $\gamma(K, K \Join A) = \gamma (K, K \Join B)$. On the other hand, if $A \cap K \neq \emptyset$, then $K \in K \Join A$. So $K \in K \Join B$ and hence $B \cap K \neq \emptyset$. Moreover, it follows from $K \Join A = K \Join B$ that $K \cup A = K \cup B$. By Lemma \ref{lemma on interderivability between postulates}, $\cro$ satisfies $K \cro A \subseteq X$. So, it holds that $K \cro A = K \cro B$, i.e. $K \cup (K \cro A) = K \cup (K \cro B)$, whence $\gamma(K, K \Join A) = \gamma (K, K \Join B)$ as well.
\\
For (4), it is enough to show that $K \cup (K \cro A) \in K \Join A$ when $K \Join A$ is non-empty. We need to show (a) $K \subseteq K \cup (K \cro A)$ which is trivial; (b) $K \cup (K \cro A) \subseteq K \cup A$ which is immediate from $\cro$-inclusion; and (c) $A \cap (K \cup (K \cro A)) \neq \emptyset$. It follows from $K \Join A \neq \emptyset$ that $A \neq \emptyset$. So, by $\cro$-success, $A \cap (K \cro A) \neq \emptyset$, whence (c) follows immediately. 
\\
Furthermore, let us define another selection function $\gamma^{\prime}$ in the following way:\\
(a) $\gamma^{\prime}(X, \mb{Y}) = X$ if $\mb{Y} = \emptyset$.\\
(b) If there exist $K$ and $A$ such that $X = K \cup (K \cro A)$ and $\mb{Y} = X \upVdash n(A^{\prime}) \neq \emptyset$ where $A^{\prime} = (K \cro A) \setminus K$, then $\gamma^{\prime}(X, \mb{Y}) = \{Y \in \mb{Y} \mid K \cro A \subseteq Y \}$.\\
(c) Otherwise, $\gamma^{\prime}(X, \mb{Y}) = \mb{Y}$.
\\
We need to show that:\\
$(1^{\prime})$ $\gamma^{\prime}$ is well-defined, i.e. if $\mb{X} = \mb{Y}$, then $\gamma^{\prime}(K, \mb{X}) = \gamma^{\prime} (K, \mb{Y})$;\\
$(2^{\prime})$ $\gamma^{\prime}(K, \mb{Y}) = K$ if $\mb{Y} = \emptyset$, which is immediate from the definition;\\
$(3^{\prime})$ $\gamma^{\prime}(K, \mb{Y}) \subseteq \mb{Y}$ if $\mb{Y} \neq \emptyset$, which is also immediate from the definition;\\
$(4^{\prime})$ $\gamma^{\prime}(K, \mb{Y}) \neq \emptyset$ if $\mb{Y} \neq \emptyset$;\\
For $(1^{\prime})$, we only consider the principle case, i.e. the case of that $X = (K_1 \cro A) \cup K_1  = (K_2 \cro B) \cup K_2$ and $\mb{Y} = X \upVdash n(A^{\prime}) = X \upVdash n(B^{\prime}) \neq \emptyset $ where $A^{\prime} = (K_1 \cro A) \setminus K_1$ and $B^{\prime} = (K_2 \cro B) \setminus K_2$. It follows from $X \upVdash n(A^{\prime}) = X \upVdash n(B^{\prime}) \neq \emptyset $ that  $\Ap \neq \emptyset$ and $\Bp \neq \emptyset$. Moreover, by Observation \ref{observation on upper bound property}, it follows from $X \upVdash n(A^{\prime}) = X \upVdash n(B^{\prime}) $ that for all $X^{\prime} \subseteq X$, $X^{\prime} \nvdash n(A^{\prime})$ iff $X^{\prime} \nvdash n(B^{\prime})$. So, $K_1 \neq ((K_1 \cro A) \cup K_1 ) =X = ((K_2 \cro B) \cup K_2) \neq K_2$ and it holds for all $X^{\prime} \subseteq X$ that $X^{\prime} \cup ((K_1 \cro A) \setminus K_1) \nvdash \falsum$  iff $X^{\prime} \cup  ((K_2 \cro B) \setminus K_2) \nvdash \falsum$. Hence, by $\cro$-Uniformity, $K_1 \cro A = K_2 \cro B$. Thus, $\gamma^{\prime}(X, X \upVdash A^{\prime}) = \gamma^{\prime}(X, X \upVdash B^{\prime})$.
\\
For $(4^{\prime})$, we only need to show that $\gamma^{\prime}(K \cup (K \cro A), \mb{Y}) \neq \emptyset$ when $\mb{Y} = ((K \cup (K \cro A)) \upVdash n(A^{\prime})) \neq \emptyset$ where $A^{\prime} = (K \cro A) \setminus K$. We first show that $K \cro A \nvdash n(A^{\prime})$. It follows immediately from $((K \cup (K \cro A)) \upVdash n(A^{\prime})) \neq \emptyset$ that $A^{\prime}$ is consistent and non-empty. So, by $\cro$-Consistency, $K \cro A$ is consistent. Moreover, $A^{\prime} = ((K \cro A) \setminus K) \subseteq K \cro A$, i.e. $K \cro A \Vdash A^{\prime}$. Hence, $K \cro A \nvdash n(A^{\prime})$. So, by Observation \ref{observation on upper bound property}, there exists $Y$ such that $K \cro A \subseteq Y \subseteq K\cup (K \cro A)$ and $Y \in \mb{Y}$.
\\
\\
To finish the proof, we will prove that
\\ 
$(\star)$ $K \cro A = \bigcap\gamma^{\prime}(\bigcup \gamma(K, K \Join A), (\bigcup \gamma(K, K \Join A)) \upVdash n(A^{\prime}))$, where $A^{\prime} = (\bigcup \gamma(K, K \Join A)) \setminus K$. 
\\
Let us consider two cases according to whether $A$ is empty. In the first case, $A = \emptyset$. Then it follows immediately from $\cro$-confirmation that $K \cro A = K$. Moreover, by the definitions of those operations in $(\star)$, it is easy to see that  $K = \bigcap\gamma^{\prime}(\bigcup \gamma(K, K \Join A), (\bigcup \gamma(K, K \Join A)) \upVdash n(A^{\prime}))$ when $A = \emptyset$. So $(\star)$ holds in this case. In the remaining case, $A \neq \emptyset$. Then, $K \Join A \neq \emptyset$. As we have shown in the proof for $(4)$, $K \cup (K \cro A) \in K \Join A$ when $K \Join A$ is non-empty. So, by the definition of $\gamma$, $(\star)$ holds iff
\\
$(\star \star)$ $K \cro A = \bigcap\gamma^{\prime}(K\cup(K \cro A), (K\cup(K \cro A)) \upVdash n(A^{\prime \prime}))$ holds, where $A^{\prime \prime} = (K \cro A) \setminus K $.
\\
As to $(\star \star)$, we also need to consider two cases, according to whether $(K\cup(K \cro A)) \upVdash n(A^{\prime \prime}))$ is empty. In the first case, $(K\cup(K \cro A)) \upVdash n(A^{\prime \prime})) = \emptyset$. Then, it is easy to see that either (i) $n(A^{\prime \prime}) = \emptyset$ or (ii)$A^{\prime \prime} \vdash \falsum$. Suppose (i), it follows immediately that $K \cro A \subseteq K$. So, by $\cro$-confirmation, it follows that $K \cro A = K$, whence $(\star \star)$ follows immediately. Suppose (ii), it follows immediately that $K \cro A$ is inconsistent. By Lemma \ref{lemma on interderivability between postulates}, $\cro$ satisfies $\cro$-preservation. So, $K \subseteq (K \cro A)$, i.e. $K \cup (K \cro A) = K \cro A$.  Hence, $\bigcap\gamma^{\prime}(K\cup(K \cro A), (K\cup(K \cro A)) \upVdash n(A^{\prime \prime})) = \bigcap\gamma^{\prime}(K \cro A, \emptyset ) = K \cro A$, i.e. $(\star \star)$ holds. 
\\
Let us proceed the proof by considering the remaining case that $(K\cup(K \cro A)) \upVdash n(A^{\prime \prime}))$ is non-empty. In this case, the from left to right inclusion direction follows immediately from the clause (b) of the definition of $\gamma^{\prime}$. For the other inclusion direction, let us suppose towards contradiction that $(\dagger)$ there exists $\varphi \in (K \cup (K \cro A))$ such that $\varphi \notin K \cro A$ but $\varphi \in \bigcap \gamma^{\prime}(K\cup(K \cro A), (K\cup(K \cro A)) \upVdash n(A^{\prime \prime}))$. It follows from $\varphi \in (K \cup (K \cro A))$ and $\varphi \notin K \cro A$  that $\varphi \in K \setminus K \cro A$. So, by $\cro$-Relevance, there is some $K^{\prime}$ with $K \cro A \subseteq K^{\prime} \subseteq K \cup (K \cro A)$, such that $K^{\prime} \nvdash \falsum$ and $K^{\prime} \cup \{\varphi\} \vdash \falsum$. It follows from $K \cro A \subseteq K^{\prime}$ that $K^{\prime} \Vdash A^{\prime \prime}$. Since $A^{\prime \prime} \neq \emptyset$ and $K^{\prime} $ is consistent, it follows that $K^{\prime} \nvdash n(A^{\prime \prime})$. So, by Observation \ref{observation on upper bound property}, there exists $X$ such that $K \cro A \subseteq K^{\prime} \subseteq X$ and  $X \in ((K\cup(K \cro A)) \upVdash n(A^{\prime \prime}))$. By the definition of $\gamma^{\prime}$, it follows that $X \in \gamma^{\prime}(K\cup(K \cro A), (K\cup(K \cro A)) \upVdash n(A^{\prime \prime}))$. So, by $(\dagger)$, $\varphi \in X$. However, it follows from $K^{\prime} \cup \{\varphi\} \vdash \perp$ and $K^{\prime} \subseteq X$ that $X \cup \{\varphi\} \vdash \perp$, i.e. $X \vdash \falsum$, which contradicts that $X$ is contained in a package remainder set. So, there is no such $\varphi$, i.e. $\bigcap\gamma^{\prime}(K\cup(K \cro A), (K\cup(K \cro A)) \upVdash n(A^{\prime \prime})) \subseteq K \cro A$.  
\\
\\
\textit{2. From left to right:} We only check \textit{$\cro$-strong Uniformity:} Assume that it holds for all $X \subseteq \mathcal{L}$ that $X \subseteq K_1 \cup (K_1 \cro A)$ and $X \cup ((K_1 \cro A) \setminus K_1) \nvdash \falsum$ iff $X \subseteq K_2 \cup (K_2 \cro B)$ and $X \cup ((K_2 \cro B) \setminus K_2) \nvdash \falsum$. Then, by Observations \ref{observation on upper bound property} and \ref{lemma for external choice revision}, it follows that $(K_1 \meo A) \upVdash n(A^{\prime}) = (K_2 \meo B) \upVdash n(B^{\prime}) $ where $A^{\prime}  = (K \meo A) \setminus K$ and $B^{\prime}  = (K \meo B) \setminus K$. Since $\cro$ is unified, by the definition of $\cro$, it follows that $K_1 \cro A = K_2 \cro B$. 
\\
\\
\textit{From right to left:} Let $\gammap$ be defined in the same way as the $\gammap$ in the proof of the first part. By Observation \ref{Observation that all selection functions are Join-unified}, we only need to prove that $\gammap$ is $\upVdash$-unified. Let $(K_1 \cup (K_1 \cro A)) \upVdash n(\Ap) = (K_2 \cup (K_2 \cro B)) \upVdash n(\Bp) \neq \emptyset$, where $\Ap = (K_1 \cro A) \setminus K_1$ and $\Bp = (K_2 \cro B) \setminus K_2$. By Observation \ref{observation on upper bound property}, it follows that for all $X \subseteq \mathcal{L}$, $X \subseteq K_1 \cup (K_1 \cro A)$ and $X \nvdash n(\Ap)$ iff $X \subseteq K_2 \cup (K_2 \cro B)$ and $X \nvdash n(\Bp)$. Moreover, it follows from $(K_1 \cup (K_1 \cro A)) \upVdash n(\Ap) = (K_2 \cup (K_2 \cro B)) \upVdash n(\Bp) \neq \emptyset$ that both $\Ap$ and $\Bp$ are non-empty. So it follows that for all $X \subseteq \mathcal{L}$, $X \subseteq K_1 \cup (K_1 \cro A)$ and $X \cup ((K_1 \cro A) \setminus K_1) \nvdash \falsum$  iff $X \subseteq K_2 \cup (K_2 \cro B)$ and $X \cup ((K_2 \cro B) \setminus K_2) \nvdash \falsum$. So, by $\cro$-strong Uniformity, $K_1 \cro A = K_2 \cro B$. Thus, by the definition of $\gammap$, it holds that $\gammap$ is $\upVdash$-unified.
\end{proof}

\begin{proof}[Proof for Theorem \ref{representation theorem for internal making up one's mind}:]
\textit{1. From left to right:} $\MU$-inclusion, $\MU$-success and \linebreak $\MU$-consistency follow directly from $\cro$-inclusion, $\cro$-success and \linebreak $\cro$-consistency (Note that in the proof of Theorem \ref{representation theorem for internal choice revision}, the condition that $K$ is consistent is not used in checking these three postulates). Moreover, $\MU$-coincidence follows immediately from Lemma \ref{lemma on interderivability between postulates} and Theorem \ref{representation theorem for internal choice revision}. So we only supply verifications for $\MU$-iteration and $\MU$-relevance. Let $\MU$ be an operation of making up one's mind, i.e. there exist a choice contraction $\cco$ and a consistency-preserving partial expansion $\meo$ such that $K \MU \varphi  = K \cco n(\{\varphi, \neg \varphi\}) \meo \{\varphi, \neg \varphi\}$.
\\
\textit{$\MU$-iteration:} It follows directly from the definitions of $\cco$ and $\meo$ that $K \MU \taut = (K \cco \perp ) \cup \{\taut\}$. So $(K \MU \taut) \cap K = (K \cap (K \cco \perp)) \cup (K \cap \{\taut\} ) = (K \cco \perp) \cup (K \cap \{\taut\} )$. If $\taut \in K$, then $\taut \in K \cco \perp$ by the definition of $\cco$. So, it always holds that $(K \MU \taut) \cap K = (K \cco \perp) \cup (K \cap \{\taut\} ) = K \cco \perp$. So, $((K \MU \taut) \cap K) \MU \varphi = (K \cco \perp) \cco \perp \meo \{\varphi, \neg \varphi\} = (K \cco \perp) \meo \{\varphi, \neg \varphi\} = K \cco n(\{\varphi, \neg \varphi \}) \meo \{\varphi, \neg \varphi\} = K \MU \varphi$, i.e. $\MU$-iteration holds. 
\\
\textit{$\MU$-relevance:} Let $\psi \in K \setminus K \MU \varphi$, then $K \cco n(\{\varphi, \neg \varphi\}) = K \cco \perp  \neq K$, i.e. $K  \angle \perp \neq \emptyset$. So there exists $X \in K  \angle \perp$ such that $K \cco \perp \subseteq X$ and $\psi \notin X$.  As shown in above, it holds that $K \cap (K \MU \taut ) = K \cco \perp$, so it follows from $K \cco \perp \subseteq X$ that $K \cap (K \MU \taut)  \subseteq X$. Moreover, it follows from $X \in K \angle \perp$ and $\psi \in K \setminus X$ that $X \nVdash \perp$ and $X \cup \{\psi\} \Vdash \perp$, i.e. $X \nvdash \falsum$  and $X \cup \{\psi\} \nvdash \falsum$. Thus, $X$ is just the set we are looking for.
\\
\\
\textit{From right to left:} Let us define a selection function $\gamma$ in the following way:\\
(i) $\gamma(K, \mb{Y}) = K$ if $\mb{Y} = \emptyset$.\\
(ii) If $\mb{Y} = K \angle \perp \neq \emptyset$, then $\gamma(K, \mb{Y}) = \{Y \in \mb{Y} \mid K \cap (K \MU \taut) \subseteq Y \}$.\\
(iii) Otherwise, $\gamma(X, \mb{Y}) = \mb{Y}$.
\\
It is obvious that:\\
(1) $\gamma$ is well-defined, i.e. if $\mb{X} = \mb{Y}$, then $\gamma(K, \mb{X}) = \gamma(K, \mb{Y})$, which is immediate from the definition;\\
(2) $\gamma(K, \mb{Y}) = K$ if $\mb{Y} = \emptyset$, which is also immediate from the definition;\\
(3) $\gamma(K, \mb{Y}) \subseteq \mb{Y}$ if $\mb{Y} \neq \emptyset$, which is immediate from the definition again; and\\
(4) $\gamma(K, \mb{Y}) \neq \emptyset$ if $\mb{Y} \neq \emptyset$, which is immediate from that $ K \cap (K \MU \taut)$ is consistency due to $\MU$-consistency.
\\
Furthermore, let us define another selection function $\gammap$ in the following way:\\
(i) $\gamma^{\prime}(X, \mb{Y}) = X$ if $\mb{Y} = \emptyset$.\\
(ii) If there exist set $K$ and sentence $\varphi$ such that $X = K \cap (K \MU \taut)$ and $\mb{Y} = X \Join \{\varphi, \neg \varphi\} $, then $\gamma^{\prime}(X, \mb{Y}) = \{Y \in \mb{Y} \mid Y \subseteq K \MU \varphi \}$.\\
(iii) If $\{Y \in \mb{Y} \mid Y \mbox{ is consistent}\} \neq \emptyset$ and there is no set $K$ and sentence $\varphi$ such that $X = K \cap (K \MU \taut)$ and $\mb{Y} = X \Join \{\varphi, \neg \varphi\} $, then $\gamma^{\prime}(X, \mb{Y}) \in \{Y \in \mb{Y} \mid Y \mbox{ is consistent}\}$.\\
(iv) Otherwise, $\gamma^{\prime}(X, \mb{Y}) = \mb{Y}$.
\\
We need to show that:\\
$(1^{\prime})$ $\gamma^{\prime}$ is well-defined, i.e. if $\mb{X} = \mb{Y}$, then $\gammap (X, \mb{X}) = \gammap (X, \mb{Y})$;\\
$(2^{\prime})$ $\gammap (X, \mb{Y}) = K$ if $\mb{Y} = \emptyset$, which is immediate from the definition;\\
$(3^{\prime})$ $\gammap (X, \mb{Y}) \subseteq \mb{Y}$ if $\mb{Y} \neq \emptyset$, which is also immediate from the definition;\\
$(4^{\prime})$ $\gammap (X, \mb{Y}) \neq \emptyset$ if $\mb{Y} \neq \emptyset$; and\\
$(5^{\prime})$ $\gammap$ is $\Join$-consistency-preserved, i.e. $\bigcup \gammap (X, \mb{Y})$ is consistent, if there exists a set $A$ such that $\mb{Y} = X \Join A$ and $\{Y \in \mb{Y} \mid Y \mbox{ is consistent}\} \neq \emptyset$.\\ 
For $(1^{\prime})$, let $X = K_1 \cap (K_1 \MU \taut) = K_2 \cap (K_2 \MU \taut) $ and $X \Join \{\varphi , \neg \varphi\} = X \Join \{\psi , \neg \psi\} $. We should consider two cases. (i) $X \cap \{\varphi, \neg \varphi\} \neq \emptyset$. Then, it follows that $X \cap \{\psi , \neg \psi\}  \neq \emptyset$. Moreover, it follows from $X \Join \{\varphi , \neg \varphi\} = X \Join \{\psi , \neg \psi\} $ that $X \cup \{\varphi , \neg \varphi\} =  X \cup \{\psi , \neg \psi\} $. By $\MU$-consistency, $X \nvdash \falsum$. So, by $\MU$-coincidence, $X \MU \varphi = X \MU \psi$ and hence, by $\MU$-iteration, $K_1 \MU \varphi = K_2 \MU \psi$. (ii) $X \cap \{\varphi, \neg \varphi\} = \emptyset$. Then, it follows immediately that $\varphi$ is identical to $\psi$. So, by $\MU$-iteration, it follows from $K_1 \cap (K_1 \MU \taut) = K_2 \cap (K_2 \MU \taut)$ that $K_1 \MU \varphi = (K_1 \cap (K_1 \MU \taut)) \MU \varphi = (K_2 \cap (K_2 \MU \taut))\MU \psi = K_2 \MU \psi$. Thus, in each case, $\gammap (X, \mb{X}) = \gammap (X, \mb{Y})$.
\\
For $(4^{\prime})$,  it is enough to show that $K \MU \varphi \in ( (K \cap (K \MU \taut)) \Join \{\varphi, \neg \varphi )\}$. So we should show (a) $K \MU \varphi \subseteq (K \cap (K \MU \taut)) \cup \{\varphi, \neg \varphi\} $ which is immediate from $\MU$-inclusion and $\MU$-iteration; (b) $(K \cap (K \MU \taut)) \subseteq K \MU \varphi$; and (c) $ \{\varphi, \neg \varphi\} \cap (K  \MU \varphi) \neq \emptyset$ which is immediate from $\MU$-success. For (b), by $\MU$-iteration, we only need to show $(K \cap (K \MU \taut)) \subseteq (K \cap (K \MU \taut)) \MU \varphi$. By $\MU$-consistency, $K \cap (K \MU \taut) \nvdash \falsum$. So, by $\MU$-relevance, there is no $\psi \in (K \cap (K \MU \taut)) \setminus (K \cap (K \MU \taut)) \MU \varphi$, i.e. $(K \cap (K \MU \taut)) \subseteq (K \cap (K \MU \taut)) \MU \varphi$.  
\\
As to $(5^{\prime})$, we only need to show that $K \MU \varphi$ is consistent for every $\varphi$, which is confirmed by $\MU$-consistency.
\\
To finish the proof,  we need to show that for every set $K$ and sentence $\varphi$, it holds that
\\
$(\star)$ $K \MU \varphi = \bigcup\gamma^{\prime}(\bigcap \gamma(K, K \angle n(\{\varphi, \neg \varphi\})), \bigcap \gamma(K, K \angle n(\{\varphi, \neg \varphi\})) \Join \{\varphi, \neg \varphi\}))$. 
\\
We first prove that $\bigcap \gamma(K, K \angle n(\{\varphi, \neg \varphi\})) = K \cap (K \MU \taut)$. It is easy to see that $K \angle n(\{\varphi, \neg \varphi\}) = K \angle \perp$. So, the from right to left inclusion direction follows immediately from the clause (ii) of the definition of $\gamma$. For the other inclusion direction, suppose $(\dagger)$ there exists $\psi \notin K \cap (K \MU \taut )$ such that $\psi \in  \bigcap \gamma(K,  K \angle \perp)$.  It follows from the definition of $\gamma$ that $\psi \in  K \setminus K \MU \taut$. So, by $\MU$-relevance, there is $K^{\prime}$ such that $K \cap (K \MU \taut) \subseteq K^{\prime} \subseteq K $, $K^{\prime} \nvdash \falsum$  and $K^{\prime} \cup \{\psi\} \vdash \falsum$ . So, by Observation \ref{observation on partial upper bound property of choice remainder set}, there exists $Y$ such that $ K^{\prime } \subseteq Y \subseteq K$ and $Y \in K \angle \perp$. Also, it follows from $K \cap (K \MU \taut) \subseteq K^{\prime} \subseteq Y$ that $Y \in \gamma(K, K \angle \perp)$. Hence, by $(\dagger)$, $\psi \in Y$. However, it follows from $K^{\prime} \cup \{\psi\} \vdash \falsum$ and $K^{\prime}\cup \{\psi\} \subseteq Y$ that $Y \vdash \falsum$,  which contradicts that $Y \in K \angle \perp$. Thus, there is no such kind of $\psi$, i.e. $\bigcap \gamma(K, K \angle \perp) \subseteq K \cap (K \MU \taut)$.\\
So $(\star)$ holds iff\\
$(\star \star)$ $K \MU \varphi = \bigcup\gamma^{\prime}(K \cap (K \MU \taut), (K \cap (K \MU \taut)) \Join \{\varphi, \neg \varphi\})$ holds. \\
As we have shown in the proof for $(4^{\prime})$, $K \MU \varphi \in (K \cap (K \MU \taut)) \Join \{\varphi, \neg \varphi\} $. So, $(\star \star)$ follows immediately from the definition of $\gammap$.  
\\
\\
\textit{2. From left to right:} We only need to check \textit{$\MU$-redundancy:} Let $Z \equiv \{\perp\}$. It is easy to see that $K \cco n(\phinegphi) = K \cco \{\perp\}$ for every $K$ and $\varphi$. So $K \MU \varphi = K \cco n(\phinegphi )  \meo \phinegphi  = K \cco \{\perp\} \meo \phinegphi  =( K\cup Z ) \cco \{\perp\} \meo \phinegphi  = ( K\cup Z ) \cco \phinegphi \meo \phinegphi = (K \cup Z) \MU \varphi$.
\\
\\
\textit{From right to left:} Let $\gamma$ be constructed in the same way as the $\gamma$ in the proof of the first part. By observation \ref{Observation that all selection functions are Join-unified}, in order to complete the proof, it is enough to show that $\gamma$ is $\angle$-unified. Assume $K_1 \angle \perp = K_2 \angle \perp \neq \emptyset$, we need to prove that $K_1 \cap (K_1 \cro \taut) = K_2 \cap (K_2 \cro \taut)$. By observation \ref{observation on partial upper bound property of choice remainder set}, it holds for every $\varphi \in K_1 \setminus \bigcup (K_1 \angle \perp)$ that $\varphi \Vdash \perp$, i.e. $(K_1 \setminus \bigcup (K_1 \angle \perp)) \equiv \{\perp\}$. Hence, by $\cro$-redundancy, $K_1 \cro \taut = \bigcup (K_1 \angle \perp) \cro \taut$. By $\MU$-consistency, $K_1 \cap (K_1 \cro \taut) \nvdash \falsum$ and hence $K_1 \cap (K_1 \cro \taut) \subseteq \bigcup (K_1 \angle \perp) $ by Observation \ref{observation on partial upper bound property of choice remainder set}. It follows this and $K_1 \cro \taut = \bigcup (K_1 \angle \perp) \cro \taut$ that $K_1 \cap (K_1 \cro \taut) = (\bigcup (K_1 \angle \perp) \cap ((\bigcup (K_1 \angle \perp) \cro \taut)$. By a similar argument, it can be obtained that $K_2 \cap (K_2 \cro \taut) = (\bigcup (K_2 \angle \perp) \cap ((\bigcup (K_2 \angle \perp) \cro \taut)$. Moreover, it follows from $K_1 \angle \perp = K_2 \angle \perp \neq \emptyset$ that $(\bigcup (K_1 \angle \perp) = (\bigcup (K_2 \angle \perp)$. So, $K_1 \cap (K_1 \cro \taut) = K_2 \cap (K_2 \cro \taut)$.
\end{proof}

\begin{proof}[Proof for Theorem \ref{representation theorem for external making up one's mind}:]
\textit{1. From left to right:} It follows immediately from Lemma \ref{lemma on interderivability between postulates} and Theorem \ref{representation theorem for external choice revision}.
\\
\textit{From right to left:} Let us define a selection function $\gamma$ in the following way:\\
(a) $\gamma(K, \mb{Y}) = K$ if $\mb{Y} = \emptyset$.\\
(b) If there exists $\varphi$ such that $\mb{Y} = K \Join \{\varphi, \neg \varphi\}$, then $\gamma(K, \mb{Y}) = \{Y \in \mb{Y} \mid Y \subseteq K \cup (K \MU \varphi) \}$.\\
(c) Otherwise, $\gamma(K, \mb{Y}) = \mb{Y}$.
\\
We need to show that:\\
(1) $\gamma$ is well-defined, i.e. if $\mb{X} = \mb{Y}$, then $\gamma(K, \mb{X}) = \gamma (K, \mb{Y})$;\\
(2) $\gamma(K, \mb{Y}) = K$ if $\mb{Y} = \emptyset$, which is immediate from the definition;\\
(3) $\gamma(K, \mb{Y}) \subseteq \mb{Y}$ if $\mb{Y} \neq \emptyset$, which is also immediate from the definition;\\
(4) $\gamma(K, \mb{Y}) \neq \emptyset$ if $\mb{Y} \neq \emptyset$;\\
For (1) we only need to show that if $K \Join \{\varphi, \neg \varphi\} = K \Join \{\psi, \neg \psi\}$, then $\gamma(K, K \Join \{\varphi, \neg \varphi\}) = \gamma (K, K \Join \{\psi, \neg \psi\})$. If $\phinegphi \cap K = \emptyset$, it is immediate from $K \Join \phinegphi = K \Join \psinegpsi$ that $\varphi = \psi$. So $K \cup (K \MU \varphi) = K \cup (K \MU \psi)$, whence $\gamma(K, K \Join \phinegphi) = \gamma (K, K \Join \psinegpsi)$. On the other hand, if $\phinegphi \cap K \neq \emptyset$, then $K \in K \Join \phinegphi$. So $K \in K \Join \psinegpsi$ and hence $\psinegpsi \cap K \neq \emptyset$. Moreover, it follows from $K \Join \phinegphi = K \Join \psinegpsi$ that $K \cup \phinegphi = K \cup \psinegpsi$. So, by $\cro$-coincidence, it holds that $K \MU \varphi = K \MU \psi$, i.e. $K \cup (K \MU \varphi) = K \cup (K \MU \psi)$, whence $\gamma(K, K \Join \phinegphi) = \gamma (K, K \Join \psinegpsi)$ as well.
\\
For (4), it is enough to show that $K \cup (K \MU \varphi) \in K \Join \phinegphi$. This follows from that (a) $K \subseteq K \cup (K \MU \varphi)$ which is trivial; (b) $K \cup (K \MU \varphi) \subseteq K \cup \phinegphi$ which is immediate from $\MU$-inclusion; and (c) $ \phinegphi \cap (K \cup (K \MU \varphi)) \neq \emptyset$ which is immediate from $\MU$-success.
\\
Furthermore, let us define another selection function $\gamma^{\prime}$ in the following way:\\
(a) $\gamma^{\prime}(X, \mb{Y}) = X$ if $\mb{Y} = \emptyset$.\\
(b) If there exists $K$ and $\varphi$ such that $X = K \cup (K \MU \varphi)$ and $\mb{Y} = X \upVdash n(A^{\prime}) \neq \emptyset$ where $A^{\prime} = (K \MU \varphi) \setminus K$, then $\gamma^{\prime}(X, \mb{Y}) = \{Y \in \mb{Y} \mid K \MU \varphi \subseteq Y \}$.\\
(c) Otherwise, $\gamma^{\prime}(X, \mb{Y}) = \mb{Y}$.
\\
We need to show that:\\
$(1^{\prime})$ $\gamma^{\prime}$ is well-defined, i.e. the case of $\mb{X} = \mb{Y}$, then $\gamma^{\prime}(K, \mb{X}) = \gamma^{\prime} (K, \mb{Y})$;\\
$(2^{\prime})$ $\gamma^{\prime}(K, \mb{Y}) = K$ if $\mb{Y} = \emptyset$, which is immediate from the definition;\\
$(3^{\prime})$ $\gamma^{\prime}(K, \mb{Y}) \subseteq \mb{Y}$ if $\mb{Y} \neq \emptyset$, which is also immediate from the definition;\\
$(4^{\prime})$ $\gamma^{\prime}(K, \mb{Y}) \neq \emptyset$ if $\mb{Y} \neq \emptyset$;\\
For $(1^{\prime})$, we only consider the principle case, i.e. let $X = (K_1 \MU \varphi) \cup K_1  = (K_2 \MU \psi) \cup K_2$ and $ X \upVdash n(A^{\prime}) = X \upVdash n(B^{\prime}) \neq \emptyset $ where $A^{\prime} = (K_1 \MU \varphi) \setminus K_1$ and $B^{\prime} = (K_2 \MU \psi) \setminus K_2$. It follows from $X \upVdash n(A^{\prime}) = X \upVdash n(B^{\prime}) \neq \emptyset $ that  $\Ap \neq \emptyset$ and $\Bp \neq \emptyset$. Moreover, by  Observation \ref{observation on upper bound property}, it follows from $X \upVdash n(A^{\prime}) = X \upVdash n(B^{\prime}) $ that for all $X^{\prime} \subseteq X$, $X^{\prime} \nvdash n(A^{\prime})$ iff $X^{\prime} \nvdash n(B^{\prime})$. So, $K_1 \neq ((K_1 \MU \varphi) \cup K_1 ) =X = ((K_2 \MU \psi) \cup K_2) \neq K_2$ and it holds for all $X^{\prime} \subseteq X$ that $X^{\prime} \cup ((K_1 \MU \varphi) \setminus K_1) \nvdash \falsum$  iff $X^{\prime} \cup  ((K_2 \MU \psi) \setminus K_2) \nvdash \falsum$. Hence, by $\MU$-Uniformity, $K_1 \MU \varphi = K_2 \MU \psi$. Thus, $\gamma^{\prime}(X, X \upVdash A^{\prime}) = \gamma^{\prime}(X, X \upVdash B^{\prime})$.
\\
For $(4^{\prime})$, we only need to show that $\gamma^{\prime}(K \cup (K \MU \varphi), \mb{Y}) \neq \emptyset$ when $\mb{Y} = ((K \cup (K \MU \varphi)) \upVdash n(A^{\prime})) \neq \emptyset$ where $A^{\prime} = (K \MU \varphi) \setminus K$. We first show that $K \MU \varphi \nvdash n(A^{\prime})$. It follows from $((K \cup (K \MU \varphi)) \upVdash n(A^{\prime})) \neq \emptyset$ that $A^{\prime}$ is consistent and non-empty. So, by $\MU$-Consistency, $K \MU \varphi$ is consistent. Moreover, $A^{\prime} = ((K \MU \varphi) \setminus K) \subseteq K \MU \varphi$, i.e. $K \MU \varphi \Vdash A^{\prime}$. Hence, $K \MU \varphi \nvdash n(A^{\prime})$. So, by Observation \ref{observation on upper bound property}, there exists $Y$ such that $K \MU \varphi \subseteq Y \subseteq K\cup (K \MU \varphi)$ and $Y \in \mb{Y}$.\\
To complete the proof, we will show it holds that\\
$(\star)$ $K \MU \varphi = \bigcap\gamma^{\prime}(\bigcup \gamma(K, K \Join \phinegphi), (\bigcup \gamma(K, K \Join \phinegphi)) \upVdash n(A^{\prime}))$, where $A^{\prime} = (\bigcup \gamma(K, K \Join \phinegphi)) \setminus K$. \\
As we have shown in the proof for $(4)$, $K \cup (K \MU \varphi) \in K \Join \phinegphi$, so, by the definition of $\gamma$, $(\star)$ holds iff \\
$(\star \star)$ $K \MU \varphi = \bigcap\gamma^{\prime}(K\cup(K \MU \varphi), (K\cup(K \MU \varphi)) \upVdash n(A^{\prime \prime}))$ holds, where $A^{\prime \prime} = (K \MU \varphi) \setminus K $.\\
As to $(\star \star)$, we need to consider two cases, according to whether $(K\cup(K \MU \varphi)) \upVdash n(A^{\prime \prime}))$ is empty. In the first case, $(K\cup(K \MU \varphi)) \upVdash n(A^{\prime \prime})) = \emptyset$. Then, it is easy to see that either (i) $n(A^{\prime \prime}) = \emptyset$ or (ii)$A^{\prime \prime}$ is inconsistent. Suppose (i), it follows immediately that $K \MU \varphi \subseteq K$. So, by $\MU$-confirmation, it follows that $K \MU \varphi = K$, whence $(\star \star)$ follows immediately. Suppose (ii), it follows immediately that $K \MU \varphi$ is inconsistent. So, by $\MU$-Relevance, $K \subseteq (K \MU \varphi)$, i.e. $K \cup (K \MU \varphi) = K \MU \varphi$. Hence, $(\star \star)$ follows from the definition $\gammap$. 
\\
Let us proceed the proof by considering the remaining case that \linebreak $(K\cup(K \MU \varphi)) \upVdash n(A^{\prime \prime}))$ is non-empty. In this case, the from left to right inclusion direction follows immediately from the clause (b) of the definition of $\gamma^{\prime}$. For the other inclusion direction, let us suppose towards contradiction that $(\dagger)$ there exists $\psi \in (K \cup (K \MU \varphi))$ such that $\psi \notin K \MU \varphi$ but $\psi \bigcap \in \gamma^{\prime}(K\cup(K \MU \varphi), (K\cup(K \MU \varphi)) \upVdash n(A^{\prime \prime}))$. It follows from $\varphi \in (K \cup (K \MU \varphi))$ and $\varphi \notin K \MU \varphi$  that $\varphi \in K \setminus K \MU \varphi$. So, by $\MU$-Relevance, there is some $K^{\prime}$ with $K \MU \varphi \subseteq K^{\prime} \subseteq K \cup (K \MU \varphi)$ such that $K^{\prime} \nvdash \falsum$ and $K^{\prime} \cup \{\psi\} \vdash \falsum$. It follows from $K \MU \varphi \subseteq K^{\prime}$ that $K^{\prime} \Vdash A^{\prime \prime}$. Since $A^{\prime \prime} \neq \emptyset$ and $K^{\prime} $ is consistent, it follows that $K^{\prime} \nvdash n(A^{\prime \prime})$. So, by Observation \ref{observation on upper bound property}, there exists $X$ with $K \MU \varphi \subseteq K^{\prime} \subseteq  X $ such that  $X \in (K\cup(K \cro A)) \upVdash n(A^{\prime \prime})$, and hence $X \in \gamma^{\prime}(K\cup(K \cro A), (K\cup(K \cro A)) \upVdash n(A^{\prime \prime}))$. So, by $(\dagger)$, $\psi \in X$. However, it follows from $K^{\prime} \subseteq X$ and $K^{\prime} \cup \{\psi\} \vdash \perp$ that $X \cup \{\psi\} \vdash \perp$, i.e. $X \vdash \falsum$, which contradicts that $X$ belongs to a package remainder set. So, there is no such $\psi$, i.e. $\bigcap\gamma^{\prime}(K\cup(K \MU \varphi), (K\cup(K \MU \varphi)) \upVdash n(A^{\prime \prime})) \subseteq K \MU \varphi$.  
\\
\\
\textit{2. From left to right:} We only check \textit{$\MU$-strong Uniformity:} Assume it holds for all $X \subseteq \mathcal{L}$ that $X \subseteq K_1 \cup (K_1 \MU \varphi)$ and $X \cup ((K_1 \MU \varphi) \setminus K_1) \nvdash \falsum$  iff $X \subseteq K_2 \cup (K_2 \MU \psi)$ and $X \cup ((K_2 \MU \psi) \setminus K_2) \nvdash \falsum$. Then, by Observations \ref{observation on upper bound property} and  \ref{lemma for external choice revision}, it follows that $(K_1 \meo \phinegphi ) \upVdash n(A^{\prime}) = (K_2 \meo \psinegpsi ) \upVdash n(B^{\prime}) $ where $A^{\prime}  = (K \meo \phinegphi) \setminus K$ and $B^{\prime}  = (K \meo \psinegpsi) \setminus K$. Since $\MU$ is unified, by the definition of $\MU$, it follows that $K_1 \MU \varphi = K_2 \MU \psi$. 
\\
\\
\textit{From right to left:} Let $\gammap$ be defined in the same way as the $\gammap$ in the proof of the first part. To finish the proof, by Observation \ref{Observation that all selection functions are Join-unified}, we only need to prove that $\gammap$ is $\upVdash$-unified. Let $(K_1 \cup (K_1 \MU \varphi)) \upVdash n(\Ap) = (K_2 \cup (K_2 \MU \psi)) \upVdash n(\Bp) \neq \emptyset$, where $\Ap = (K_1 \MU \varphi) \setminus K_1$ and $\Bp = (K_2 \MU \psi) \setminus K_2$. By Observation \ref{observation on upper bound property}, it follows that it holds for all $X \subseteq \mathcal{L}$, $X \subseteq K_1 \cup (K_1 \MU \varphi)$ and $X \nvdash n(\Ap)$ iff $X \subseteq K_2 \cup (K_2 \MU \psi)$ and $X \nvdash n(\Bp)$. Moreover, it follows from $(K_1 \cup (K_1 \MU \varphi)) \upVdash n(\Ap) = (K_2 \cup (K_2 \MU \psi)) \upVdash n(\Bp) \neq \emptyset$ that both $\Ap$ and $\Bp$ are non-empty. So it follows that for all $X \subseteq \mathcal{L}$, $X \subseteq K_1 \cup (K_1 \MU \varphi)$ and $X \cup ((K_1 \MU \varphi) \setminus K_1) \nvdash \falsum$ iff $X \subseteq K_2 \cup (K_2 \MU \psi)$ and $X \cup ((K_2 \MU \psi) \setminus K_2) \nvdash \falsum$. So, by $\cro$-strong Uniformity, $K_1 \MU \varphi = K_2 \MU \psi$. Thus, by the definition of $\gammap$, it holds that $\gammap$ is $\upVdash$-unified.
\end{proof}

\bibliography{kappa}
\bibliographystyle{apalike}

\end{document}